\newcommand{%
\input{fig/.pdf_tex}%
}[1]{%
\input{fig/#1.pdf_tex}%
}
\definecolor{blueblack}{rgb}{0,0,.7}
\newcommand{\emphdef}[1]{%
  \textcolor{blueblack}{%
    \textbf{\emph{#1}}%
  }%
}
\newif\ifqed
\renewcommand{\qedhere}{\global\qedtrue\unskip\kern 10\p@ \begingroup
  \unitlength\p@ \linethickness{.4\p@} \framebox(6,6){} \endgroup}
\def\proof{\global\qedfalse\@ifnextchar[
  {\@xproof}{\@proof}}
\def\endproof{\unskip\kern 10\p@ \ifqed\else\begingroup \unitlength\p@
  \linethickness{.4\p@} \framebox(6,6){} \endgroup\fi \endtrivlist }
\def\@proof{\trivlist \item[\hskip\labelsep {\it\proofname}  ]
  \ignorespaces }
\def\@xproof[#1]{\trivlist \item[\hskip\labelsep{\it #1}]
  \ignorespaces }
\newcommand{\R}{\mathbb{R}}
\newcommand{\wC}{\widetilde{C}}
\newcommand{\wS}{\widetilde{S}}
\newcommand{\wt}{\widetilde}
\newcommand{\polylog}{\mathop{\text{polylog}}}
\newcommand{\Z}{\mathbb{Z}}
\newtheorem{theorem}{Theorem}
\newtheorem{lemma}[theorem]{Lemma}
\newtheorem{proposition}[theorem]{Proposition}
\newtheorem{corollary}[theorem]{Corollary}
\begin{document}

\title{Testing Graph Isotopy on Surfaces%
  \thanks{This article is the full version of a conference article bearing
    the same title, which appeared in the Proceedings of the Twenty-Eighth
    Symposium on Computational Geometry, 2012.  We acknowledge partial
    support from the French ANR Blanc project ANR-12-BS02-005 (RDAM).}}%

\author{\'Eric Colin de Verdi\`ere%
  \thanks{Laboratoire d'informatique, \'Ecole normale sup\'erieure, CNRS,
    Paris, France. email: \protect\url{Eric.Colin.de.Verdiere@ens.fr}} %
  \and Arnaud de Mesmay%
  \thanks{Laboratoire d'informatique, \'Ecole normale sup\'erieure, 
    Paris, France. email: \protect\url{Arnaud.de.Mesmay@ens.fr}}}%
\maketitle

\bigskip\bigskip

\begin{abstract}
  We investigate the following problem: Given two embeddings $G_1$
  and~$G_2$ of the same abstract graph~$G$ on an orientable surface~$S$,
  decide whether $G_1$ and~$G_2$ are isotopic; in other words, whether
  there exists a continuous family of embeddings between $G_1$ and~$G_2$.

  We provide efficient algorithms to solve this problem in two models.  In
  the first model, the input consists of the arrangement of $G_1$
  (resp.,~$G_2$) with a fixed graph cellularly embedded on~$S$; our
  algorithm is linear in the input complexity, and thus, optimal.  In the
  second model, $G_1$ and~$G_2$ are piecewise-linear embeddings in the
  plane minus a finite set of points; our algorithm runs in $O(n^{3/2}\log
  n)$ time, where $n$ is the complexity of the input.

  The graph isotopy problem is a natural variation of the homotopy problem
  for closed curves on surfaces and on the punctured plane, for which
  algorithms have been given by various authors; we use some of these
  algorithms as a subroutine.

  As a by-product, we reprove the following mathematical characterization,
  first observed by Ladegaillerie (1984): Two graph embeddings are isotopic
  if and only if they are homotopic and congruent by an oriented
  homeomorphism.

\end{abstract}

\section{Introduction}

Deciding whether a curve in a space can be deformed to another one is a
fundamental topological task.  Since the problem is undecidable in general,
even in two-dimensional simplicial complexes and
four-manifolds~\cite[p.~242--247]{s-ctcgt-93}, one has to put restrictions
on the ambient space.  The case of surfaces has been much investigated
recently (see references below), since it is at the same time non-trivial,
interesting, and tractable, and since the underlying mathematics, notably
in the field of combinatorial group theory, are well-understood.

This paper also studies deformations on surfaces, but instead of curves, we
consider graphs drawn on surfaces: Given two embeddings (crossing-free
drawings) of a graph~$G$ on a surface, can we deform one continuously to
the other without introducing intersections between edges during the
process?

As a motivating special case, consider a finite set of obstacle points~$P$
in the plane and a graph~$G$ embedded in~$\R^2\setminus P$ in two different
ways, $G_1$ and~$G_2$.  Does there exist a ``morph'' between $G_1$
and~$G_2$ (possibly moving the vertices and bending the edges) that avoids
passing over any obstacle?  To the best of our knowledge, no algorithm is
known for this purpose.  This is relevant for morphing applications: To
compute a morphing between two images, it is helpful to first build a
deformation between compatible graphs representing the most salient
features of the images.  In such applications, it is sometimes desirable to
add some topological requirements on the morphing, e.g., to force some area
of the deforming image to always cover a fixed point of the plane during
the deformation.  Such requirements can be encoded using obstacle points,
since a face of the graph containing an obstacle point has to contain it
during the whole deformation.

Another motivation for this problem comes from geographic information
systems and map simplification.  When simplifying a road network, it is
crucial that the features of the map (cities, mountains) stay on the same
side of the roads.  This can be tested by considering these features as
obstacle points and testing whether the road networks obtained before and
after straightening are isotopic; see, e.g., Cabello et
al.~\cite{clms-thpp-04}.

More generally, assume that we have a triangulated surface in~$\R^3$, and
two embeddings $G_1$ and~$G_2$ of the same graph~$G$ on that surface (not
necessarily on the skeleton of the triangulation). Each graph $G_i$ is encoded by its
combinatorial arrangement with the triangulation. Can we continuously move
$G_1$ to~$G_2$?  In this setting, the graphs $G_1$ and~$G_2$ might
represent textures on the surface, and the question is whether one can
continuously move one texture so that it coincides with the other.

\subsection{Related Work}

A \emph{homotopy} between two paths is a continuous family of paths with
the same endpoints between them.  A \emph{homotopy} between two cycles
(closed curves) is a continuous family of cycles between
them.\footnote{This is sometimes referred to as \textit{free} homotopy in
  textbooks, in opposition to \textit{fixed point homotopy}, which we will
  only use in Section~\ref{S:fixed}.} %
Dey and Guha~\cite{dg-tcs-99}, and later Lazarus and
Rivaud~\cite{lr-hts-12} as well as Erickson and
Whittlesey~\cite{ew-tcsr-13} study the problem of deciding whether two
paths or cycles on a surface~$S$ are homotopic.  Both problems can be
solved in optimal linear time if the input curves are represented as walks
in a graph embedded on~$S$.  Cabello et al.~\cite{clms-thpp-04} give
efficient algorithms for testing homotopy of paths in the special case
where the surface~$S$ is a punctured plane (a plane minus a finite set of
obstacle points) and the input paths are represented by polygonal paths in
the plane.

A related problem, to which a large body of research is devoted, is that of
computing shortest homotopic paths or cycles.  It is first studied by
Hershberger and Snoeyink~\cite{hs-cmlpg-94} for a triangulated surface
where the vertices lie on the boundary, and revisited by Efrat et
al.~\cite{ekl-chspe-06} and Bespamyatnikh~\cite{b-chspp-03} for paths in a
punctured plane.  These algorithms, in particular, allow to decide whether
two paths are homotopic, and have a better complexity than the algorithm by
Cabello et al.~\cite{clms-thpp-04} in some cases.  Results on the shortest
homotopic path/cycle problem are also known for combinatorial
surfaces~\cite{cl-oslos-05,cl-opdsh-07}: It is solvable in polynomial
time~\cite{ce-tnpcs-10}.

An \emph{isotopy} between two simple curves (paths or cycles without
self-intersections) is a homotopy that keeps the deforming curve simple at
each step.  If two simple curves are isotopic, they are homotopic.  The
converse does not hold in general~\cite{f-hai-66}, though it holds for
simple paths with endpoints on the boundary and for simple cycles not
bounding a disk~\cite{e-c2mi-66}; in these cases, it is equivalent to test
homotopy or isotopy.  More generally, an \emph{isotopy with fixed vertices}
between two graph embeddings is a continuous family of graph embeddings
between them, where the vertices are not allowed to move.  It is known that
one can compute shortest graph embeddings with fixed vertices in polynomial
time~\cite{c-rcds-03,ce-tnpcs-10}, and similar techniques allow to test
whether two graph embeddings are isotopic with fixed vertices in polynomial
time.

Two homeomorphisms from $S$ to~$S$ are \emph{isotopic} if there is a
continuous family of homeomorphisms connecting them.  The \emph{mapping
  class group} of a surface (without boundary) $S$ is, roughly, the set of
isotopy classes of all orientation-preserving homeomorphisms from $S$
to~$S$.  See, e.g., Farb and Margalit~\cite{fm-pmcg-11} for a recent and
exhaustive survey on this topic.  Although we use little of this vast
theory, it is quite connected to our problem: If $G_1$ and~$G_2$ are
cellularly embedded on~$S$, a homeomorphism of~$S$ that maps $G_1$ to~$G_2$
represents a unique element of the mapping class group, and our problem
amounts to testing whether this element is the identity. Hence it is
closely related to the \emph{word problem} in mapping class groups, which
can be solved in quadratic time~\cite{h-cwpbg-00,m-mcga-95}. However, these
algorithms take as input the collection of Dehn twists corresponding to a
mapping class, which is an input incomparable to ours. Note
that if $S$ is a $n$-punctured sphere, the mapping class group of $S$ is
the pure braid group with $n$ strands, which has garnered considerable
algorithmic attention in recent years, in particular due to its possible
applications to cryptography~\cite{klchkp-npkcu-00}.

\subsection{Our Results}

In this paper, we study the problem of deciding whether two embeddings
$G_1$ and~$G_2$ of the same abstract graph~$G$ on a surface~$S$ are
isotopic; in other words, whether there exists a continuous family of
embeddings of~$G$ (possibly moving the vertices of~$G$) connecting $G_1$
with~$G_2$.

In more detail, the input to the algorithm is a description of the
surface~$S$, the graph~$G$, and the graph embeddings $G_1$ and~$G_2$.  All
surfaces are assumed to be compact, connected, and orientable; they may
have boundary.  An embedding maps each vertex (or edge, or halfedge) of~$G$
to the corresponding feature on the surface.  In particular, the
correspondence between the vertices (or edges, or halfedges) of $G_1$
and~$G_2$ is given.

Our algorithmic results come in two flavors, depending on the model used.
In the general model, $G_1$ and~$G_2$ are embeddings of~$G$ on an arbitrary
surface~$S$. To represent them, we use a model similar to the
\emph{cross-metric surface model}~\cite{ce-tnpcs-10}: A graph embedding is
represented by the intersections it forms with a given cellularly embedded
graph. More specifically, we assume that $S$ has a fixed graph~$H$
cellularly embedded on it (i.e., each face of~$H$ is homeomorphic to a
disk; for example, $H$ may be a triangulation of~$S$); the input to the
algorithm is the combinatorial map of the arrangement of $G_1$ with~$H$
(resp., $G_2$ with~$H$)\footnote{i.e., the graph obtained by overlaying
  $G_i$ and $H$.}.  We emphasize that the input does not consider the
crossings betweeen $G_1$ and~$G_2$.  We give a linear-time (and thus
optimal) algorithm to decide whether $G_1$ and~$G_2$ are isotopic:
\begin{theorem}\label{T:main-surf}
  Let $S$ be an orientable surface, possibly with boundary.  Let $H$ be a
  fixed graph cellularly embedded on~$S$.  Let $G_1$ and~$G_2$ be two graph
  embeddings of the same graph~$G$ on~$S$, each in general position with
  respect to~$H$.  Given the combinatorial map of the arrangement of~$G_1$
  with~$H$ (resp., $G_2$ with~$H$), of complexity $k_1$ (resp.,~$k_2$), we
  can determine whether $G_1$ and~$G_2$ are isotopic in $O(k_1+k_2)$ time.
\end{theorem}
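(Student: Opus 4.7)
The plan is to invoke the characterization stated in the abstract and reproved later in the paper: two graph embeddings on an orientable surface are isotopic if and only if they are homotopic and congruent by an oriented homeomorphism. Thus the algorithm of Theorem~\ref{T:main-surf} will consist of two independent tests, each achievable in $O(k_1+k_2)$ time.

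The first test decides whether $G_1$ and $G_2$ are related by an oriented homeomorphism of $S$. For cellular embeddings this reduces to equality of rotation systems with a consistent global orientation; in the non-cellular case one must additionally match the topological type of each complementary region. Both pieces of data can be extracted from the given arrangements with $H$ in linear time: the rotation system of $G_i$ around each vertex is read off from the combinatorial map of the arrangement in time proportional to that vertex's degree, and each complementary region of $G_i$ is identified together with its topological type by inspecting the faces of $H$ it contains, $H$ being cellular. The two descriptions can then be compared in $O(k_1+k_2)$ time; if they disagree, the algorithm outputs that $G_1$ and $G_2$ are not isotopic.

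If the first test passes, the second test decides whether $G_1$ and $G_2$ are homotopic as graph maps $G\to S$. I would fix a spanning tree $T$ of $G$ and a base vertex $v_0$; every non-tree edge $e$ of $G$ then yields a fundamental cycle $c_e$ obtained by closing $e$ with the tree path between its endpoints, whose images under $G_1$ and $G_2$ are closed curves on $S$. Using the equality of rotation systems established in the first step to resolve the base-point (simultaneous conjugation) ambiguity inherent in free homotopy, one reduces the homotopy of $G_1$ and $G_2$ to testing that each pair of corresponding fundamental cycles is freely homotopic on $S$. Each such test can be run in linear time in the cross-metric model, using the homotopy algorithm of Lazarus--Rivaud or Erickson--Whittlesey as a subroutine.

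The main obstacle is to perform all the fundamental-cycle homotopy tests in total linear time, since these cycles overlap on shared tree edges and summing their lengths naively may exceed $k_1+k_2$. I would circumvent this by first computing a tight or canonical combinatorial form for each edge of $G_i$ separately in $O(k_i)$ time, and then assembling the cycles so that each original edge is examined only a bounded number of times overall; a careful amortized analysis using the tree structure and the linear complexity of the homotopy subroutine then yields the claimed $O(k_1+k_2)$ bound.
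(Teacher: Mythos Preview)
Your high-level split into an oriented-homeomorphism test and a homotopy test matches the paper, and your description of the first test is essentially what the paper does in Section~\ref{S:main-surf}: erase~$H$ from each arrangement, record the topology (genus and number of boundary components) of each face, and compare the resulting oriented extended combinatorial maps.

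The second test is where your proposal diverges from the paper, and here there is a genuine gap. The characterization you invoke says that $G_1$ and~$G_2$ are isotopic if and only if, in addition to the homeomorphism condition, \emph{every} cycle of~$G$ has freely homotopic images. You reduce this to free homotopy of the fundamental cycles of a spanning tree, but that reduction is not justified: free homotopy of each fundamental cycle separately does not imply free homotopy of all their concatenations, nor does it imply that $G_1$ and~$G_2$ are homotopic as maps $G\to S$ (the conjugating path may differ from cycle to cycle, and the rotation system is local data that does not resolve this global basepoint ambiguity). The paper explicitly warns (see Figure~\ref{fig:contreex} and the surrounding discussion) that ``it is not straightforward to find a suitable set~$\Lambda$''; its contribution is precisely Theorem~\ref{T:ladeg}, which constructs a specific family~$\Lambda$---built from the boundary cycles of a tubular neighborhood of~$G_1$ with carefully placed crossovers, plus one extra fundamental cycle per component---for which free homotopy \emph{does} suffice. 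That family is not the set of fundamental cycles of a spanning tree.

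Your complexity discussion also has a gap. The paper's~$\Lambda$ uses each edge of~$G$ at most four times, so the total length of all cycles is~$O(k_1+k_2)$ by construction, and the homotopy tests via Lazarus--Rivaud or Erickson--Whittlesey run in that time directly. By contrast, the fundamental cycles of a spanning tree have total length~$\Theta(n^2)$ in the worst case; your proposal to ``compute a tight or canonical combinatorial form for each edge separately'' and appeal to ``a careful amortized analysis'' is not an argument---in particular, the free-homotopy subroutine takes a whole cycle as input, not a collection of pre-normalized edges, and it is not clear how to amortize across overlapping cycles without re-deriving something like Theorem~\ref{T:ladeg}.
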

Let us emphasize that the surface~$S$ is not fixed in this result; the
constant in the $O(\cdot)$ notation does not depend on~$S$.

We also study the complexity of the problem in the case where $S$ is the
plane minus a finite set~$P$ of obstacle points, and $G_1$ and~$G_2$ are
piecewise-linear graph embeddings of~$G$ in~$\R^2\setminus P$.  In this
case, the input is the point set~$P$ together with the embeddings $G_1$
and~$G_2$, where each edge of each embedding is represented as a polygonal
path.  (Again, the embeddings $G_1$ and~$G_2$ may intersect arbitrarily.)
In this setting, we prove the following result.
\begin{theorem}\label{T:main-plane}
  Let $P$ be a set of $p$ points in the plane, and let $G_1$ and~$G_2$ be
  two piecewise-linear graph embeddings of the same graph~$G$ in
  $\R^2\setminus P$, of complexities (number of segments) $k_1$ and~$k_2$
  respectively.  We can determine whether $G_1$ and~$G_2$ are isotopic in
  $\R^2\setminus P$ in time $O(n^{3/2}\log n)$ time, where $n$ is the total
  size of the input.  In more detail, the running time is, for
  any~$\varepsilon>0$,
 \[ 
    O\Big((k_1+p)\log(k_1+p)+(k_2+p)\log(k_2+p)\ +
        \min\Big\{(k_1+k_2)p\ ,\
      p^{1+\varepsilon}+(k_1+k_2)\sqrt{p}\log p\Big\}\Big).
 \]
\end{theorem}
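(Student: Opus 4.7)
The strategy is to reduce Theorem~\ref{T:main-plane} to Theorem~\ref{T:main-surf} by compactifying $\R^2\setminus P$ and overlaying a suitable cellularly embedded graph. Adding a point at infinity makes the ambient surface a sphere with $p+1$ punctures. On it I place a cellular graph $H$ consisting of $p$ vertical upward rays, one from each point of $P$, all converging at $\infty$; each face of $H$ is an open vertical strip between consecutive punctures (when sorted by $x$-coordinate), hence a topological disk. Once the combinatorial maps of the arrangements of $G_i$ with $H$ are in hand, Theorem~\ref{T:main-surf} finishes the job in linear time. I may assume general position of $G_1$, $G_2$, and the rays by a symbolic perturbation; computing the $y$-intercepts and sorting the punctures takes the initial $O((k_i+p)\log(k_i+p))$ time.

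The arrangement reduces to finding all crossings between the segments of $G_i$ and the vertical rays, i.e., all pairs $(s,q)$ with $q\in P$ whose $x$-coordinate lies in the $x$-range of segment $s$ and $s$ passes below $q$. The two terms in the minimum correspond to two methods. First, a direct plane sweep of all $k_i$ segments and all $p$ rays computes the whole arrangement in $O((k_i+p)\log(k_i+p)+c_i)$ time, where $c_i=O(k_i p)$ is the number of crossings; this yields the $(k_1+k_2)p$ term. Second, when $p$ is large I preprocess $P$ into a multi-level partition tree supporting the appropriate ``points below a segment, in an $x$-slab'' query in time $O(\sqrt{p}\log p)$, returning the answer as $O(\log p)$ canonical subsets; after $O(p^{1+\varepsilon})$ preprocessing, the $k_1+k_2$ segment queries cost $O((k_1+k_2)\sqrt{p}\log p)$ in total.

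The main obstacle is the second regime: the arrangement itself may have size $\Omega(k_i p)$, so one cannot afford to materialize every crossing. I plan to exploit the strong structure of the crossings with vertical rays — consecutive crossings along a segment are linearly ordered by the $x$-coordinate of their punctures, and within each vertical strip the pieces of $G_i$ are ordinary non-crossing arcs — so that the combinatorial map of $G_i\cup H$ can be represented implicitly by the canonical subsets produced by the partition tree, in a form on which the linear-time algorithm of Theorem~\ref{T:main-surf} can still be simulated without ever paying for individual crossings beyond those that are genuinely consulted. Selecting the faster method depending on whether $(k_1+k_2)p$ or $p^{1+\varepsilon}+(k_1+k_2)\sqrt{p}\log p$ is smaller, and optimizing at the crossover $p\approx\sqrt{k_1+k_2}$, balances to the uniform $O(n^{3/2}\log n)$ bound claimed.
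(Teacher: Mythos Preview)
Your reduction for the first regime (the $(k_1+k_2)p$ term) is fine and matches the paper's first homotopy algorithm, modulo cosmetic differences: the paper uses a triangulation of a bounding box rather than vertical rays, but in either case each of the $O(k_i)$ segments crosses $O(p)$ cells, and an explicit arrangement of size $O(k_ip)$ is affordable.

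The second regime, however, has a genuine gap.  Your cellular graph~$H$ of vertical rays has \emph{linear} stabbing number: a single horizontal segment can cross all $p$ rays, so the arrangement of $G_i$ with~$H$ really has $\Theta(k_ip)$ crossings in the worst case, and the crossing word of a single cycle in~$\Lambda$ can have length $\Theta(k_ip)$.  Packing the answers to your partition-tree queries into $O(\log p)$ canonical subsets per segment does not help, because the downstream work---the linear-time algorithm of Theorem~\ref{T:main-surf}---must ultimately perform cyclic reduction of these crossing words in the free group on $p$ generators, and cancellation of adjacent inverse letters cannot be carried out on opaque canonical subsets without unpacking them.  Your sentence ``the combinatorial map \ldots\ can be represented implicitly \ldots\ in a form on which the linear-time algorithm of Theorem~\ref{T:main-surf} can still be simulated'' is precisely the missing idea, and I do not see how to make it work with vertical rays.

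The paper avoids this obstacle by choosing a different~$H$.  It does not reduce to Theorem~\ref{T:main-surf} via an arrangement; instead it invokes Theorem~\ref{T:ladeg} directly, so it only needs (i)~the oriented combinatorial map of each~$G_i$ by itself (computed by a sweep in $O((k_i+p)\log(k_i+p))$ time), and (ii)~explicit crossing words of the $\Lambda$-cycles with some cellular decomposition.  For~(ii) it uses the decomposition of Cabello et al.\ which, after $O(p^{1+\varepsilon})$ preprocessing, has the key property that any line meets only $O(\sqrt{p})$ of its edges.  Hence each $\Lambda$-cycle of complexity~$m$ has a crossing word of length $O(m\sqrt{p})$, computable explicitly in $O(m\sqrt{p}\log p)$ time via ray shooting, and the homotopy test on that word is then straightforward.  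The $\sqrt{p}$ stabbing-number bound is what makes the second term achievable; it is absent from your construction.
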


Let us emphasize that the isotopy is a continuous family of topological
embeddings; one may assume that all these embeddings are piecewise-linear,
but we claim no upper bound on their complexities.

\subsection{Overview of the Techniques}

If two graph embeddings $G_1$ and~$G_2$ of the same graph~$G$ are isotopic,
then clearly (1) there is an oriented homeomorphism of the surface that
maps $G_1$ to~$G_2$%
\footnote{assuming that $G_1$ and~$G_2$ are in the interior of~$S$, which
  is true by our general position assumption.}%
; and (2) if $\gamma$ is a cycle in~$G$ (possibly with
repeated vertices and edges), then its images in~$G_1$ and~$G_2$ are homotopic.

It was shown by Ladegaillerie~\cite{l-cip1c-84} that such necessary
conditions are, in fact, sufficient.  However, the second condition is not
algorithmic, since there are infinitely many cycles in~$G$.  A close
inspection of Ladegaillerie's proof reveals that $O(g+b)$ pairs of cycles
need to be tested for homotopy, where $g$ and $b$ denote the genus and
number of boundary components of the surface.  However, in Ladegaillerie's
construction, the complexity of the family of cycles is not explicitly
given.  With some work, and using some of our techniques, one might be able
to obtain an explicit algorithm using his construction, but the running
time would certainly be larger, by at least an additional $O(g)$~factor
(since Ladegaillerie's decomposition contains a pants decomposition), if
not more.

Using a very different method, we reprove Ladegaillerie's characterization
in a strengthened form (Theorem~\ref{T:ladeg} below): We provide an
explicit set of cycles~$\Lambda$ in~$G$ of linear overall complexity that
have to be tested for homotopy.  Our algorithmic results follow, since one
can perform efficiently the homeomorphism test (this essentially amounts to
checking equality of two combinatorial maps, see Section~\ref{S:combmaps}), as well as the test of
homotopy between two given cycles (using known algorithms or variants of
them~\cite{dg-tcs-99,lr-hts-12,clms-thpp-04}).

We note that it is not straightforward to find a suitable set~$\Lambda$
in~$G$ of overall linear complexity satisfying the above condition.  In
particular, a natural candidate for~$\Lambda$ would be the set of all facial
cycles in~$G_1$ (and thus in~$G_2)$.  However, Figure~\ref{fig:contreex}
shows that the condition is not fulfilled, even in the case where the
surface is the sphere with four punctures.

\begin{figure}[htb]
  \centering
  \def\svgwidth{10cm}
\begingroup
  \makeatletter
  \providecommand\color[2][]{%
    \errmessage{(Inkscape) Color is used for the text in Inkscape, but the package 'color.sty' is not loaded}
    \renewcommand\color[2][]{}%
  }
  \providecommand\transparent[1]{%
    \errmessage{(Inkscape) Transparency is used (non-zero) for the text in Inkscape, but the package 'transparent.sty' is not loaded}
    \renewcommand\transparent[1]{}%
  }
  \providecommand\rotatebox[2]{#2}
  \ifx\svgwidth\undefined
    \setlength{\unitlength}{2326.17722882pt}
  \else
    \setlength{\unitlength}{\svgwidth}
  \fi
  \global\let\svgwidth\undefined
  \makeatother
  \begin{picture}(1,0.34951523)%
    \put(0,0){\includegraphics[width=\unitlength]{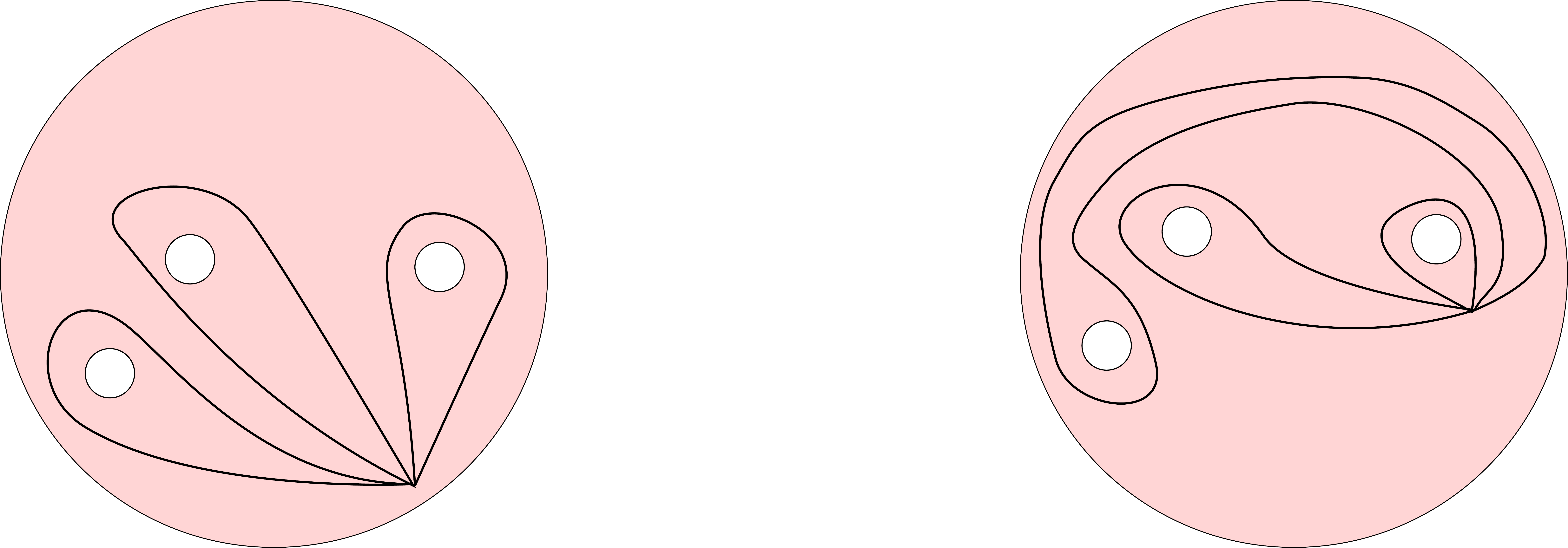}}%
    \put(0.03964688,0.163455){\color[rgb]{0,0,0}\makebox(0,0)[lb]{\smash{$e_1$}}}%
    \put(0.1359422,0.23911562){\color[rgb]{0,0,0}\makebox(0,0)[lb]{\smash{$e_2$}}}%
    \put(0.26171569,0.22339393){\color[rgb]{0,0,0}\makebox(0,0)[lb]{\smash{$e_3$}}}%
    \put(0.70290551,0.06912489){\color[rgb]{0,0,0}\makebox(0,0)[lb]{\smash{$e_1$}}}%
    \put(0.77070531,0.24304604){\color[rgb]{0,0,0}\makebox(0,0)[lb]{\smash{$e_2$}}}%
    \put(0.86405283,0.23321999){\color[rgb]{0,0,0}\makebox(0,0)[lb]{\smash{$e_3$}}}%
  \end{picture}%
\endgroup
  \caption{Two embeddings $G_1$ and~$G_2$ of a one-vertex graph with three
    loop edges on the sphere with four punctures.  These two embeddings are
    not isotopic, although there exists an oriented homeomorphism mapping
    one to the other, and the four cycles following the boundaries of the
    faces are homotopic in $G_1$ and~$G_2$}
  \label{fig:contreex}
\end{figure}

The paper is organized as follows.  After some preliminaries
(Section~\ref{S:prelim}), we characterize the existence of an isotopy
between two \emph{stable} arrangements of cycles, containing no disk with
less than four vertices on its boundary (Section~\ref{S:stable}).  We then
prove the aforementioned strengthened form of Ladegaillerie's result
(Section~\ref{S:ladeg}); the idea for the proof is to compute from~$G_1$
and~$G_2$ two stable arrangements of cycles in their tubular neighborhoods,
which are isotopic if and only if $G_1$ and~$G_2$ are isotopic, and to
apply the characterization for stable cycles.  We then deduce our main
computational results in Section~\ref{S:main-thms}.  In
Section~\ref{S:fixed}, we briefly indicate how our algorithms and results
extend to the case where we require some vertices of the graph to be fixed
throughout the isotopy.

\section{Preliminaries}\label{S:prelim}

Here, we review some topological notions.  We refer the reader to
Hatcher~\cite{h-at-02}, Stillwell~\cite{s-ctcgt-93}, or any previous paper
on related topics~\cite{ce-tnpcs-10} for standard background in algebraic
topology used in the present paper, and Farb and Margalit~\cite{fm-pmcg-11}
for the connection between covering spaces, hyperbolic geometry, and
surface homeomorphisms.

\subsection{Background}

\paragraph*{Curves and graphs on surfaces. }

\ Henceforth, $S$ is a compact, connected, orientable surface with genus~$g$
and $b$ boundary components; its \emphdef{Euler characteristic} is
$\chi(S)=2-2g-b$.  A \emphdef{path}, respectively a \emphdef{cycle}, on a
surface $S$ is a continuous map $p:[0,1] \rightarrow S$, respectively a
continuous map $\gamma:S^1 \rightarrow S$, where $S^1$ is the unit
circle. An \emphdef{arc} is a path intersecting the boundary of~$S$ exactly
at its endpoints.  A curve denotes a path or a cycle. Paths and cycles are
\emphdef{simple} if they are one-to-one.  We emphasize that, according to
our definition, a cycle in a graph is a closed walk \emph{possibly with
  repeated vertices and edges} (in contrast to the graph-theoretic notion).

An \emphdef{embedding} of a graph~$G$ on a surface~$S$ is, informally, a
crossing-free drawing of~$G$ on~$S$.  A graph embedding is
\emphdef{cellular} if its \emphdef{faces}, namely, the connected components
of the complement of the image of the graph, are homeomorphic to open
disks.  If~$G$ is embedded cellularly on a surface $S$ and if we denote by
$v, e$, and $f$ the respective numbers of its vertices, edges and faces, we
have $v-e+f=\chi(S)=2-2g-b$.

A (finite) family of cycles on~$S$ is \emphdef{in general position} if the
cycles are in the interior of~$S$, there are finitely many
(self-)intersection points, and each intersection is a transverse crossing
between exactly two pieces of cycles.  Similarly, a (finite) family of
graph embeddings on~$S$ is \emphdef{in general position} if all embeddings
are in the interior of~$S$, there are finitely many intersection points
between two different embeddings, and each intersection is a transverse
crossing of the interiors of exactly two edges. Classical approximation
techniques, see for example Epstein~\cite[Appendix]{e-c2mi-66}, allow us to
approximate every edge in a graph embedding by a piecewise linear edge
using an ambient isotopy.  In particular, all graph embeddings in this
paper can be assumed to be piecewise-linear.  By doing small perturbations
if necessary, this allows us to assume that, moreover, all the graph
embeddings we consider are in general position, and we will always make
this assumption unless stated otherwise.

The simultaneous drawing of a family of cycles or of graph embeddings in
general position on~$S$ gives a set of vertices (original vertices or
intersection points) and edges (connecting two such vertices); this graph
is called the \emphdef{arrangement} of the family.  (Here we also allow
edges that are simple cycles without any vertex on them.)

\paragraph*{Homeomorphisms and isotopies.}

\ At several occasions, we will consider a homeomorphism~$h$ that
\emphdef{maps} a cycle~$c:S^1\to S$ into another one, $c':S^1\to S$.
Unless stated otherwise, this expression means that $h\circ c=c'$, namely,
$h$ \emph{pointwise} maps $c$ to~$c'$.  However, we will sometimes need
weaker concepts.  We say that $h$ \emphdef{maps} $c$ to~$c'$ not
necessarily pointwise, but only \emphdef{as sets}, if they do so up to
reparameterization; namely, if there is a homeomorphism $\varphi:S^1\to
S^1$ such that $h\circ c\circ\varphi=c'$.  If furthermore the homeomorphism
$\varphi$ is increasing (intuitively, $h$ maps $c$ to the cycle~$c'$ not
necessarily pointwise, but the orientations of $h(c)$ and~$c'$ are the
same), we say that $h$ \emph{maps} $c$ to~$c'$ not necessarily pointwise,
but \emphdef{preserving the orientations} of the cycles.

Often we will also have an abstract~$G$ and two embeddings $G_1$ and~$G_2$
of~$G$ on a surface~$S$; we say that a homeomorphism~$h$ \emphdef{maps}
$G_1$ to~$G_2$ if it maps each edge of~$G_1$ (pointwise) to the
corresponding edge of~$G_2$.

Two embeddings $G_0$ and $G_1$ of the same abstract graph~$G$ on~$S$ are
\emphdef{isotopic} if there is a continuous family of embeddings
$(G_t)_{t\in[0,1]}$ between $G_0$ and~$G_1$.  In more detail, the data of
an embedding is given by the choice of a point for each vertex and a path
connecting the appropriate vertices for each edge (with some conditions
asserting that no crossing occurs); a family $(G_t)$ of embeddings is
continuous if all these maps vary continuously over~$t$.  (The vertices
may, in particular, move.) 

Two homeomorphisms $h_0$ and $h_1$ from~$S$ to~$S$ are \emphdef{isotopic}
if there is a continuous family of homeomorphisms $(h_t)_{t\in[0,1]}$
connecting them; the family $(h_t)$ is called an \emphdef{(ambient)
  isotopy} between them. By a common abuse of language, a homeomorphism
isotopic to the identity is also called an \emphdef{(ambient) isotopy}.
For $A \subseteq S$, we say that a homeomorphism~$h$ is an ambient isotopy
\emphdef{relatively to~$\bm{A}$} if there is a continuous family of
homeomorphisms between $h$ and the identity such that each homeomorphism is
the identity on~$A$.

It holds that two smooth embeddings of the cycle $S^1$ on~$S$ are isotopic
if and only if there exists an ambient isotopy of~$S$ mapping one to the
other (provided the two embeddings are in the interior of~$S$); this
follows from tools in differential topology (e.g., vector fields); see for
instance Farb and Margalit~\cite[Proposition~1.11]{fm-pmcg-11} or
Hirsch~\cite[Theorem~1.3]{h-dt-94}. In the case of (piecewise-linear) graph
embeddings, the same tools might yield the same result, but we did not find
any proof in the literature.  It is not an issue though, since our proof
also implies the statements of Theorems \ref{T:main-surf}
and~\ref{T:main-plane} in which graph isotopy is replaced with ambient
isotopy (see Corollaries \ref{C:ext-cont} and~\ref{C:ext-pl}).

We will need the following basic lemma.
\begin{lemma}[{Alexander's Lemma; see, e.g., Farb and Margalit~\cite[Lemma
    2.1]{fm-pmcg-11}}]
  \label{L:Alexander}
  Let $D$ be a disk and $h: D \rightarrow D$ be a homeomorphism
  fixed on the boundary of $D$. Then $h$ is an ambient isotopy relatively
  to the boundary of $D$.
\end{lemma}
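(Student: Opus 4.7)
The plan is to use the classical Alexander trick: identify $D$ with the closed unit disk in $\R^2$ and produce an explicit isotopy by shrinking the domain on which $h$ acts nontrivially down to the origin.

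First I would normalize, picking a homeomorphism $D \cong \{x \in \R^2 : |x| \le 1\}$; since this homeomorphism is fixed throughout and the statement is invariant under conjugation, it suffices to prove the lemma for the standard closed disk. By hypothesis $h$ is the identity on the unit circle.

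Next I would define the family $(h_t)_{t\in[0,1]}$ by
\[
  h_t(x) \;=\;
  \begin{cases}
     (1-t)\, h\!\left(\dfrac{x}{1-t}\right), & |x| \le 1-t, \\[4pt]
     x, & 1-t \le |x| \le 1,
  \end{cases}
\]
for $t \in [0,1)$, and $h_1 = \mathrm{id}_D$. On the circle $|x| = 1-t$, the first formula gives $(1-t)\,h(x/(1-t))$, and since $x/(1-t)$ lies on the unit circle where $h$ is the identity, this agrees with the second formula, so each $h_t$ is a well-defined homeomorphism of $D$ fixing $\partial D$. I would then check jointly in $(x,t)$ that the map $(x,t) \mapsto h_t(x)$ is continuous on $D\times[0,1)$ (piecewise continuity together with the agreement on the interface $|x|=1-t$), and that the family $t\mapsto h_t$ interpolates from $h_0 = h$ to the limiting value at $t=1$.

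The main obstacle is continuity at $t = 1$, i.e.\ showing $h_t \to \mathrm{id}$ uniformly as $t \to 1^-$. Here I would use compactness of $D$: for any $x$ with $|x|\le 1-t$ we have $|h_t(x)| \le (1-t)\cdot\max_{y\in D}|h(y)| \le 1-t$, and of course $|x|\le 1-t$ as well, so $|h_t(x)-x|\le 2(1-t)$, while outside the shrinking disk $h_t$ is the identity. Hence $\sup_{x\in D}|h_t(x)-x|\to 0$, which gives continuity at $t=1$. The resulting continuous family $(h_t)$ is the desired ambient isotopy from $h$ to the identity, fixing $\partial D$ at every time, which completes the proof.
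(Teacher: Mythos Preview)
Your proof is correct and follows exactly the same approach as the paper: the classical Alexander trick with the explicit shrinking formula $h_t(x) = (1-t)\,h(x/(1-t))$ on $|x|\le 1-t$ and the identity outside. The paper merely states the formula, whereas you additionally verify the interface agreement and the uniform convergence at $t=1$, but the argument is identical.
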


\begin{proof}
  We can assume that $D$ is the unit closed disk in the plane.  The
  continuous family of embeddings between~$h$ and the identity can be
  explicitly defined by
  \[ F(x,t)= \left\{\begin{array}{ll}(1-t)h(\frac{x}{1-t}) & \text{if
      }0\leq |x| \leq 1-t\\x & \text{if }1-t \leq |x| \leq
      1 \end{array}\right. \]%
  for $0 \leq t <1$, and $F(x,1)=x$ for each $x\in D$.
\end{proof}

\paragraph*{Covering spaces and hyperbolic geometry.}

\ If $S$ has a negative Euler characteristic and has no boundary, it can be
provided with a hyperbolic metric; this naturally induces a hyperbolic
metric on its universal cover $\widetilde{S}$, which is then isometric to
the open hyperbolic disk~\emphdef{$\bm{D_2}$}. This open disk is
classically compactified with a boundary $\partial D_2$ \cite[Chapter
1]{fm-pmcg-11}, and a non-contractible cycle on $S$ lifts into an arc in
$D_2\cup \partial D_2$. We call its intersections with the boundary the
\emphdef{endpoints} of the lift or, according to the orientation of the
lift, its \emphdef{source} and its \emphdef{target}. Two such lifts in
$D_2$ have common endpoints if and only if they stay at a bounded distance
from each other.  If $S$ has boundary, $\widetilde{S}$ is isometric to a
totally geodesic subspace of~$D_2$.

It is folklore~\cite[Lemma~1.6.5]{b-gscrs-92} that if two cycles on a
surface $S$ are homotopic, they have lifts in $D_2$ with the same
endpoints. We also recall that on a hyperbolic surface, every cycle is
homotopic to a unique geodesic~\cite[Proposition 1.3]{fm-pmcg-11}.

\subsection{Combinatorial Maps for Non-Cellular Embeddings}\label{S:combmaps}

To each embedding $G_1$ of an abstract graph~$G$ on~$S$ corresponds a
\emphdef{combinatorial map}, storing the combinatorial information of that
embedding.  For graphs cellularly embedded, this notion is well-known, and
there are several essentially equivalent data structures available to store
such combinatorial maps on surfaces~\cite{k-ugpdd-99}, like the gem
representation~\cite{e-dgteg-03,l-gem-82} and the halfedge data structure
of \textsc{Cgal}\footnote{\protect\url{www.cgal.org}}.  All these data
structures have linear complexity in the number of edges of the graph.  The
gem representation stores four \emph{flags} per edge; intuitively, if the
edge is oriented, two flags close to its head, one to its left and one to
its right, and similarly for the tail.  More formally, a flag represents an
incidence between a vertex, an edge, and a face of the embedding.  Three
involutions allow to move from flag to an ``incident'' flag in the graph:
The first one, $i_v$, keeps the same edge-face incidence and moves to the
opposite vertex; the second one, $i_e$, keeps the same vertex-face
incidence and moves to the opposite edge; the last one, $i_f$, keeps the
same vertex-edge incidence and moves to the opposite face.  Also, each flag
has a pointer to the underlying vertex, edge, and face of~$G$.

Two combinatorial maps of the same graph~$G$ are \emphdef{isomorphic} if
there is a bijection~$\varphi$ between their sets of flags $F_1$ and~$F_2$,
commuting with the three involutions, and such that the underlying vertex
(resp., edge) of a flag~$f$ in~$F_1$ is the same as that of~$\varphi(f)$;
such a bijection is called a \emphdef{map isomorphism}.  It is not hard to
test whether two combinatorial maps representing two cellular embeddings
$G_1$ and~$G_2$ of the same abstract graph~$G$ on~$S$ are isomorphic in
linear time.  Furthermore, we have the following lemma.

\begin{lemma}\label{L:combmap}
  The combinatorial maps representing two cellular embeddings $G_1$
  and~$G_2$ of the same abstract graph~$G$ on~$S$ are isomorphic if and
  only if there exists a homeomorphism of~$S$ mapping $G_1$ to $G_2$.
\end{lemma}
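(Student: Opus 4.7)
The plan is to prove the two directions separately. The forward direction is direct: a homeomorphism $h:S\to S$ pointwise mapping $G_1$ to $G_2$ carries each flag of $G_1$ (a small ``corner'' at a vertex-edge-face incidence) to a flag of $G_2$. The resulting bijection $\varphi$ preserves local adjacency, hence commutes with the three involutions $i_v,i_e,i_f$; and by construction it preserves the underlying vertex and edge of each flag.

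For the converse, I would build a homeomorphism $h$ of $S$ in stages, using the isomorphism $\varphi$ as a combinatorial blueprint. First, set $h$ to be the natural identification on the graph: map each vertex of $G_1$ to its counterpart in $G_2$, and each edge of $G_1$ homeomorphically to the corresponding edge of $G_2$ with matching endpoints. This defines $h$ as a homeomorphism from the image of $G_1$ onto the image of $G_2$ as subsets of $S$.

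Next, I would extend $h$ to a tubular neighborhood $N$ of $G_1$. Around each vertex $v$, the flags at $v$ are cyclically ordered by a suitable composition of two involutions; since $\varphi$ commutes with these involutions and preserves underlying vertices, the cyclic order of flags at $v$ in $G_1$ matches that of $G_2$ (up to a consistent reflection on each connected component). This combinatorial matching lifts to a local homeomorphism between small disks around $v$ in $G_1$ and around $v$ in $G_2$, agreeing with $h$ on the edge-germs; analogous extensions along edges produce a homeomorphism from $N$ onto a tubular neighborhood $N'$ of $G_2$. Now $S\setminus N$ and $S\setminus N'$ are disjoint unions of closed disks, one per face of each embedding, and $\varphi$ pairs these disks face-by-face (faces being orbits of a suitable composition of involutions, which $\varphi$ preserves). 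On the boundary of each such disk, $h$ is already defined as a homeomorphism onto the boundary of the paired disk, so I would extend $h$ over the interior by the coning construction, which turns any boundary homeomorphism of a closed disk into a homeomorphism of the whole disk. Gluing these pieces yields a homeomorphism of $S$ mapping $G_1$ to $G_2$.

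The main obstacle is bookkeeping: one must ensure that the local extensions around vertices, along edges, and into faces are compatible on overlaps. This reduces to checking that the rotation system transported through $\varphi$ is globally consistent (which follows from $\varphi$ commuting simultaneously with all three involutions) and that each face extension matches the prescribed boundary data (automatic by construction). Any residual flexibility in the face-filling step is absorbed by Lemma~\ref{L:Alexander}: two extensions of the same boundary homeomorphism of a disk are isotopic rel boundary, so the global homeomorphism $h$ is well-defined up to isotopy and, in particular, exists.
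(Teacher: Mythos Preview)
Your proof is correct and follows essentially the same approach as the paper: one direction is immediate, and for the converse you build the homeomorphism first on a tubular neighborhood of the graph (disks at vertices, strips along edges, glued according to the combinatorial map) and then extend it over each face by radial (coning) extension, using that every face is a disk. The paper's version is terser and omits the bookkeeping you spell out, but the construction is the same.
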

\begin{proof}
  If there exists a homeomorphism of~$S$ mapping $G_1$ to~$G_2$, then
  obviously the combinatorial maps of~$G_1$ and~$G_2$ are isomorphic.
  Conversely, any isomorphism of combinatorial maps extends naturally to a
  homeomorphism between the tubular neighborhoods of the graphs.
  (Informally, one can build a disk for each vertex and a strip for each
  edge of~$G$, and attach these disks and strips as prescribed by the
  combinatorial map of~$G_1$, so that their union forms a tubular
  neighborhood of~$G_1$.  And one can similarly do the same for~$G_2$, the
  gluings being combinatorially the same since the combinatorial maps are
  isomorphic.)  This homeomorphism between the two tubular neighborhoods
  extends by radial extension in every disk, and since all the faces are
  disks, we obtain therefore a homeomorphism of the whole surface~$S$.
\end{proof}

We adapt the gem representation (Figure~\ref{fig:Combmap}) to handle graphs
that are not cellularly embedded, by adding the following information.
(For simplicity of exposition, we only consider graphs without isolated
vertices; it is not hard to extend the data structure to handle this case.)
First, each face may be incident to several cycles of~$G$, and stores a
list containing one flag of each such cycle.  Conversely, each flag has a
pointer to the face it belongs to.  Also, a face may have non-zero genus
and contain some boundary components of~$S$, so we store this genus and
number of boundary components of~$S$.

\begin{figure}[htb]
  \centering
  \def\svgwidth{12cm}
 \begingroup%
  \makeatletter%
  \providecommand\color[2][]{%
    \errmessage{(Inkscape) Color is used for the text in Inkscape, but the package 'color.sty' is not loaded}%
    \renewcommand\color[2][]{}%
  }%
  \providecommand\transparent[1]{%
    \errmessage{(Inkscape) Transparency is used (non-zero) for the text in Inkscape, but the package 'transparent.sty' is not loaded}%
    \renewcommand\transparent[1]{}%
  }%
  \providecommand\rotatebox[2]{#2}%
  \ifx\svgwidth\undefined%
    \setlength{\unitlength}{1361.325bp}%
    \ifx\svgscale\undefined%
      \relax%
    \else%
      \setlength{\unitlength}{\unitlength * \real{\svgscale}}%
    \fi%
  \else%
    \setlength{\unitlength}{\svgwidth}%
  \fi%
  \global\let\svgwidth\undefined%
  \global\let\svgscale\undefined%
  \makeatother%
  \begin{picture}(1,0.39265054)%
    \put(0,0){\includegraphics[width=\unitlength]{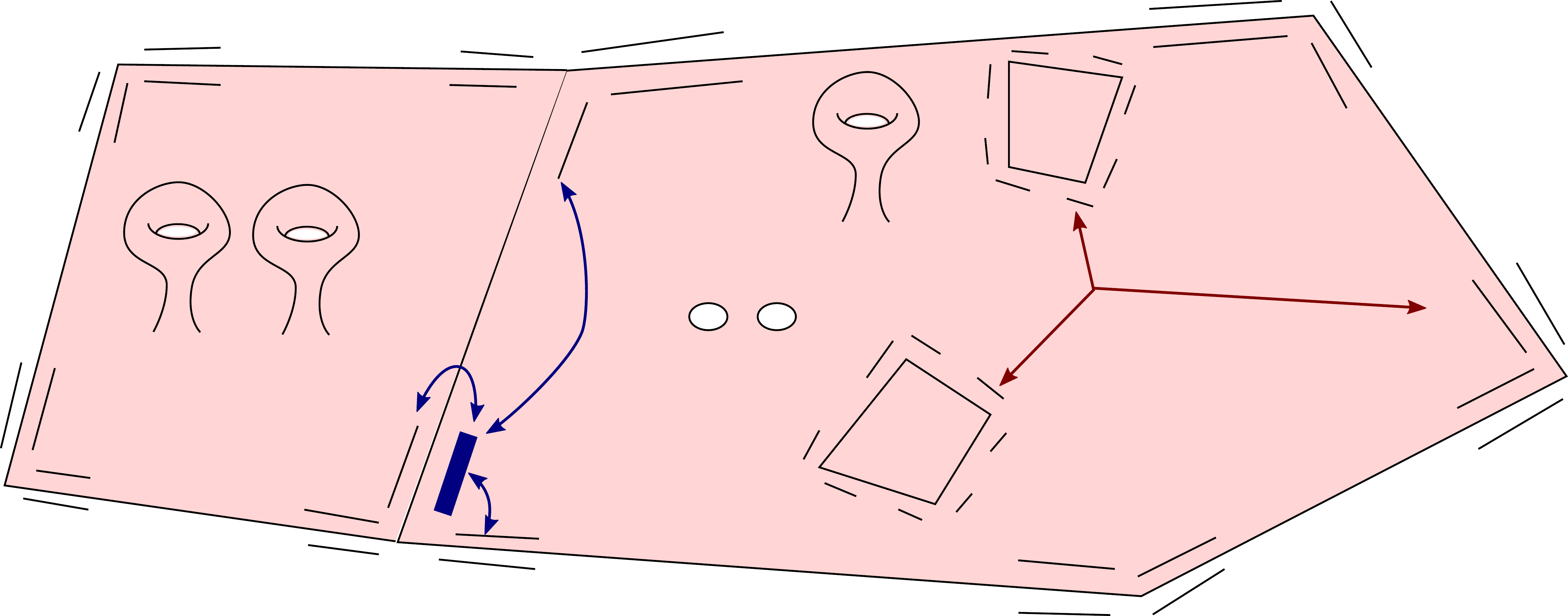}}%
    \put(0.43161468,0.24852066){\color[rgb]{0,0,0}\makebox(0,0)[lb]{\smash{$g=1$}}}%
    \put(0.42993564,0.21493993){\color[rgb]{0,0,0}\makebox(0,0)[lb]{\smash{$b=2$}}}%
    \put(0.11221509,0.1404802){\color[rgb]{0,0,0}\makebox(0,0)[lb]{\smash{$g=2$}}}%
    \put(0.11221509,0.10689946){\color[rgb]{0,0,0}\makebox(0,0)[lb]{\smash{$b=0$}}}%
    \put(0.25108756,0.18057689){\color[rgb]{0,0,0}\makebox(0,0)[lb]{\smash{$i_f$}}}%
    \put(0.39237126,0.16158075){\color[rgb]{0,0,0}\makebox(0,0)[lb]{\smash{$i_v$}}}%
    \put(0.33419562,0.07134912){\color[rgb]{0,0,0}\makebox(0,0)[lb]{\smash{$i_e$}}}%
    \put(0.72599076,0.23875254){\color[rgb]{0,0,0}\makebox(0,0)[lb]{\smash{$p_1$}}}%
    \put(0.79722626,0.17820235){\color[rgb]{0,0,0}\makebox(0,0)[lb]{\smash{$p_2$}}}%
    \put(0.67850045,0.15564444){\color[rgb]{0,0,0}\makebox(0,0)[lb]{\smash{$p_3$}}}%
  \end{picture}%
\endgroup%
  \caption{An example of extended combinatorial map, with the three
    involutions $i_v, i_e$, and $i_f$ for a given flag (in bold), and the
    pointers $p_1, p_2$ and $p_3$ between a face and one flag for each
    incident cycle}
  \label{fig:Combmap}
\end{figure}

Two such \emphdef{extended combinatorial maps}\footnote{We will drop the
  adjective ``extended'' when it is obvious from the context.} are
\emphdef{isomorphic} if, in addition to the conditions above for standard
combinatorial maps, the corresponding faces have the same genus and the
same number of boundary components of the surface. The corresponding
bijection will be called an \emphdef{extended map isomorphism}. As in the
cellular case, one can check whether two extended combinatorial maps of two
embeddings $G_1$ and~$G_2$ of~$G$ are isomorphic in linear time in their
complexity. And as in the standard case, we have the following lemma.

\begin{lemma}\label{L:extcombtest}
  Two extended combinatorial maps of $G_1$ and $G_2$ are isomorphic if and
  only if there exists a homeomorphism of the surface mapping $G_1$
  to~$G_2$.
\end{lemma}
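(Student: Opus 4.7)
The forward direction is immediate: if $h:S\to S$ is a homeomorphism mapping $G_1$ to $G_2$, then $h$ sends faces of $G_1$ bijectively to faces of $G_2$, and corresponding faces, being homeomorphic as surfaces, share the same genus and the same number of boundary components of $S$; the underlying flag bijection is the one induced from the cellular case. So I focus on the reverse direction.

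The plan is to mimic the proof of Lemma~\ref{L:combmap} but take care of the non-disk faces. Assume we are given an extended map isomorphism between the combinatorial maps of $G_1$ and $G_2$. Exactly as in the cellular case, the isomorphism between the underlying (non-extended) combinatorial data extends to a homeomorphism $\eta$ between tubular neighborhoods $N(G_1)$ and $N(G_2)$ that maps $G_1$ pointwise to $G_2$ (building a disk per vertex and a strip per edge, glued as prescribed by the combinatorial map). This also determines, for each face $F_1$ of $G_1$, a homeomorphism from $\partial(F_1\setminus N(G_1))$ to $\partial(F_2\setminus N(G_2))$, where $F_2$ is the face of $G_2$ corresponding to $F_1$ through the isomorphism.

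It remains to extend $\eta$ across each pair of corresponding faces. Write $\overline{F_i}$ for the closure of $F_i\setminus N(G_i)$: this is a compact orientable surface whose boundary consists of one circle per $G_i$-cycle incident to $F_i$, plus the boundary components of $S$ contained in $F_i$. By the extended map isomorphism, $\overline{F_1}$ and $\overline{F_2}$ have the same genus, the same number of $S$-boundary components, and, since the number of incident $G$-cycles is encoded by the list of flags attached to the face, the same total number of boundary components. The classification of compact orientable surfaces therefore implies that $\overline{F_1}$ and $\overline{F_2}$ are homeomorphic, and moreover that \emph{any} homeomorphism between their boundaries extends to a homeomorphism between $\overline{F_1}$ and $\overline{F_2}$: one picks arbitrary homeomorphisms between corresponding boundary circles, extends them to a fixed model of the surface via the classification, and then corrects on a collar of the boundary using Alexander's Lemma (Lemma~\ref{L:Alexander}) to obtain the prescribed boundary values. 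Applying this to the boundary homeomorphism already provided by $\eta$, we extend $\eta$ to each face; the extensions agree with $\eta$ on $\partial N(G_1)$ and thus patch into a global homeomorphism $S\to S$ sending $G_1$ pointwise to $G_2$.

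The main subtlety is the extension step: one must check that no obstruction arises from how $\eta$ cyclically orders the boundary circles or from the orientations it induces on them. This is what the classification guarantees in the orientable setting, and it is also the reason why the three data carried by each face in the extended map (genus, number of $S$-boundary components, and incident $G$-cycles) are exactly what is needed — dropping any one of them would allow non-homeomorphic faces to be matched and destroy the argument, as in the situation already highlighted by Figure~\ref{fig:contreex} for the weaker notion of isotopy.
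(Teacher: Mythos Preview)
Your proof follows essentially the same approach as the paper's: build the homeomorphism on tubular neighborhoods from the map isomorphism, then extend across each face using the fact that corresponding faces have the same topology. You simply supply more detail (classification of surfaces, collar correction) where the paper says only that the homeomorphism ``extends naturally.''

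One small overstatement: the assertion that \emph{any} homeomorphism between $\partial\overline{F_1}$ and $\partial\overline{F_2}$ extends to a homeomorphism $\overline{F_1}\to\overline{F_2}$ is not quite true---for a connected orientable surface, a boundary homeomorphism extends only if it is orientation-consistent across all boundary circles (e.g., on an annulus you cannot extend a map that preserves orientation on one boundary circle and reverses it on the other), and Alexander's Lemma on a collar will not repair a mismatch of this kind. In the present setting this is harmless because the lemma only asserts the \emph{existence} of a homeomorphism, and one can always arrange an orientation-consistent $\eta$; but the classification of surfaces alone does not ``guarantee'' what you claim in the last paragraph. The paper's one-line proof glosses over the same point.
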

\begin{proof}
  The proof is similar to that of Lemma~\ref{L:combmap}. The isomorphism
  extends to a homeomorphism between the neighborhoods of the graphs $G_1$
  and $G_2$; since each face of $G_1$ has the same topology as the
  corresponding one in $G_2$, this homeomorphism extends naturally to the
  whole surface.
\end{proof}

Our construction does not depend on whether $S$ is orientable.  If $S$ is
oriented, a flag has a natural \emphdef{orientation}, depending on whether
we turn clockwise or counterclockwise around the vertex of the flag when
starting on the edge of the flag, close to the vertex, and moving towards
the face of the flag.  Furthermore, each of the three involutions reverses
the orientation.  An \emph{orientation} of an extended combinatorial map
assigns an orientation ``clockwise'' or ``counterclockwise'' to each flag
such that each involution reverses the orientation of the flag.  There
exists an \emphdef{orientation-preserving isomorphism} between two oriented
extended combinatorial maps if they are isomorphic as extended
combinatorial maps and the bijection between the flags preserves the
orientation.  This can also be tested in linear time.  Equivalently, there
exists an oriented homeomorphism of the surface mapping one graph embedding
to the other.

\section{Isotopies of Stable Families of Cycles}\label{S:stable}

Let $\Gamma$ be a family of cycles in general position on~$S$.  A
\emphdef{$\bm k$-gon} in~$\Gamma$, for $k\ge1$, is an open disk on~$S$
whose boundary is formed by exactly $k$~subpaths of~$\Gamma$.  A
\emphdef{$\bm 0$-gon} is a disk whose boundary is a single simple cycle
in~$\Gamma$.  In general, $k$-gons may contain and may be crossed by other
pieces of the arrangement of~$\Gamma$; if this is not the case, we say that
the $k$-gon is \emphdef{empty}.

We say that $\Gamma$ is \emphdef{stable} if its arrangement contains no
\emph{empty} $k$-gon for $k\le3$.  In this section, we prove the following
result.
\begin{theorem}\label{T:stable}
  Let $S$ be an orientable surface and let $\Gamma_1=(\gamma_{1,1}, \ldots,
  \gamma_{1,n})$ and $\Gamma_2=(\gamma_{2,1}, \ldots \gamma_{2,n})$ be two
  \emph{stable} families of cycles on~$S$ in general position such that:
  \begin{enumerate}
  \item there exists an oriented homeomorphism~$h$ of $S$ mapping each
    cycle~$\gamma_{1,j}$ of~$\Gamma_1$ to the corresponding
    cycle~$\gamma_{2,j}$ of~$\Gamma_2$ not necessarily pointwise, but
    preserving the orientations of the cycles, and
  \item each cycle of~$\Gamma_1$ is homotopic to the corresponding cycle
    of~$\Gamma_2$.
  \end{enumerate}
  Then there is an isotopy of $S$ that maps each cycle of~$\Gamma_1$ to the
  corresponding cycle of~$\Gamma_2$, not necessarily pointwise, but
  preserving the orientations of the cycles.
\end{theorem}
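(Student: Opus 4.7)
The plan is to show that the combinatorial arrangement of a stable family of cycles is rigidly determined, up to ambient isotopy, by the free homotopy classes of its cycles together with the orientation of~$S$. The isotopy is then constructed cell-by-cell over the common arrangement, concluded by Alexander's Lemma (Lemma~\ref{L:Alexander}). Hypothesis~(1) supplies a combinatorial isomorphism between the arrangements of $\Gamma_1$ and $\Gamma_2$; hypothesis~(2) is what promotes it to an isotopy.

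I would first handle the surfaces of non-negative Euler characteristic (sphere, disk, annulus, torus, etc.) separately, using that stability severely restricts the admissible families there. For $\chi(S)<0$, we fix a hyperbolic metric with geodesic boundary and lift everything to $\widetilde S\subseteq D_2$. A preliminary observation is that stability rules out contractible cycles: a simple contractible cycle bounds an innermost empty $0$-gon, and a similar innermost-disk argument applied inside the disk bounded by a lift of a non-simple contractible cycle produces an empty $k$-gon with $k\leq 3$, also forbidden by stability. Hence every cycle lifts to arcs of $D_2\cup\partial D_2$ with well-defined endpoints on $\partial D_2$, and by the folklore result recalled in Section~\ref{S:prelim}, hypothesis~(2) lets us choose lifts of corresponding cycles with identical endpoints at infinity.

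The crux of the proof is a \emph{combinatorial rigidity at infinity} claim: in a stable family, two lifts cross in~$D_2$ if and only if their endpoint pairs are linked on $\partial D_2$, they cross then exactly once, and the cyclic order in which a given lift crosses the others along its length is dictated by the cyclic order of their endpoint pairs on $\partial D_2$. The proof is a classical innermost-disk argument: any excess crossing between two lifts would project downstairs to an empty bigon, contradicting stability, while the monogon and $3$-gon prohibitions rule out the more degenerate configurations. Since corresponding lifts of $\Gamma_1$ and $\Gamma_2$ share endpoints, their arrangements are combinatorially isomorphic and the isomorphism is compatible with the one provided by~$h$ in hypothesis~(1), which is really needed only to disambiguate labels when several cycles share a free homotopy class.

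The isotopy is then built in three stages: an isotopy supported in disjoint small disks moves each vertex of the arrangement of $\Gamma_1$ to its matched counterpart in $\Gamma_2$; for each matched edge, the two corresponding arcs now share endpoints and cobound a disk inside the strip they lift to in $D_2$, inside which a local isotopy superimposes them; finally, on each face of the resulting coincident arrangement, Alexander's Lemma extends the identity on the boundary to an ambient isotopy. I expect the main obstacle to be this third stage: a face of a stable arrangement need not be a disk (two parallel non-separating cycles on a torus already produce annular faces), so the straightforward application of Lemma~\ref{L:Alexander} fails, and one must argue that the residual homeomorphism of each face is isotopic to the identity --- either by exploiting the fact that cycles need only be matched \emph{as sets} (preserving orientations) to absorb potential Dehn twists along face cores, or via an extra combinatorial argument that uses hypothesis~(1) to exclude twisting on non-disk faces. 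The rigidity-at-infinity step is the other delicate point, especially for ruling out empty $3$-gons, which is subtler than the bigon case.
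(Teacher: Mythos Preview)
Your overall architecture---hyperbolic metric, lifts to the universal cover, endpoints at infinity, then Alexander's Lemma---matches the paper's in spirit, but the key technical device is different, and the difference is exactly what resolves the obstacle you flag at the end. The paper does \emph{not} try to build the isotopy over the cells of the arrangement of~$\Gamma_i$. Instead, it invokes a result of de~Graaf and Schrijver (Proposition~\ref{proposition13}) to push each family, by an ambient isotopy, into an $\varepsilon$-neighborhood of the corresponding geodesics; stability guarantees that no Reidemeister move is needed, so this really is an isotopy. The neighborhood of the geodesic graph is then cut into \emph{edge polygons} and \emph{vertex polygons} (the ``corridors'' of Section~\ref{S:follow}), all of which are disks. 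Inside each polygon the pieces of~$\Gamma_i$ are simple arcs joining opposite sides and meeting in a grid pattern (Proposition~\ref{P:trgle}); Lemma~\ref{L:homeo}, which lifts~$h$ to~$\widetilde S$, shows that the transverse orderings of these arcs agree for $\Gamma_1$ and~$\Gamma_2$. The isotopy is then assembled polygon by polygon via Alexander's Lemma on disks, so the non-disk faces of the arrangement never enter the picture. Your proposed workaround of absorbing Dehn twists along face cores is plausible but would require substantially more work than this.

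There is also a gap in your ``rigidity at infinity'' step that the corridor construction silently handles. Your argument that an excess crossing between two lifts in~$D_2$ ``projects downstairs to an empty bigon'' is not automatic: the cycles here are in general non-simple, a bigon in~$\widetilde S$ need not project to an embedded bigon on~$S$, and Lemma~\ref{lemeuler} forbids $k$-gons \emph{on~$S$}, not in the cover. The paper circumvents this because once the curves sit in corridors, all the relevant bigon/monogon arguments take place inside a single polygon on~$S$, where the projection issue disappears. Similarly, your edge-matching step (``the two corresponding arcs cobound a disk inside the strip they lift to'') presupposes a strip structure on~$S$ that you have not built; the corridor decomposition is precisely that structure. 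In short: your plan is reasonable, but the missing ingredient is the de~Graaf--Schrijver normalization, which simultaneously supplies the disk cells for Alexander's Lemma and localizes the $k$-gon arguments to~$S$.
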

We note that conversely, if there exists an ambient isotopy of~$S$ mapping
$\Gamma_1$ to~$\Gamma_2$, then conditions (1) and~(2) are satisfied.  

As a side remark, we will actually use this result in a setting where we
know that $h$ maps each cycle of~$\Gamma_1$ \emph{pointwise} to the
corresponding cycle of~$\Gamma_2$; however, even under this stronger
hypothesis, it does not always hold that there exists an isotopy of~$S$
that maps each cycle in~$\Gamma_1$ \emph{pointwise} to the corresponding
cycle of~$\Gamma_2$; see Figure~\ref{F:contreex2}. The following corollary
adds another hypothesis to ensure that this does not happen. It will be not
used directly in this paper, but inspires the techniques used in
Section~\ref{S:ladeg}.

\begin{figure}
\centering
\def\svgwidth{15cm}
\begingroup
  \makeatletter
  \providecommand\color[2][]{%
    \errmessage{(Inkscape) Color is used for the text in Inkscape, but the package 'color.sty' is not loaded}
    \renewcommand\color[2][]{}%
  }
  \providecommand\transparent[1]{%
    \errmessage{(Inkscape) Transparency is used (non-zero) for the text in Inkscape, but the package 'transparent.sty' is not loaded}
    \renewcommand\transparent[1]{}%
  }
  \providecommand\rotatebox[2]{#2}
  \ifx\svgwidth\undefined
    \setlength{\unitlength}{3143.75149967pt}
  \else
    \setlength{\unitlength}{\svgwidth}
  \fi
  \global\let\svgwidth\undefined
  \makeatother
  \begin{picture}(1,0.17730271)%
    \put(0,0){\includegraphics[width=\unitlength]{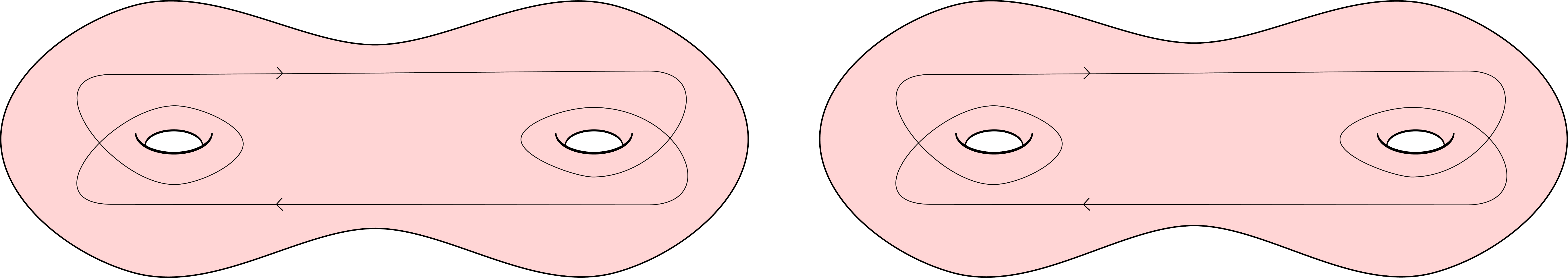}}%
    \put(0.03796587,0.08422273){\color[rgb]{0,0,0}\makebox(0,0)[lb]{\smash{$A$}}}%
    \put(0.43588947,0.08422273){\color[rgb]{0,0,0}\makebox(0,0)[lb]{\smash{$B$}}}%
    \put(0.56030486,0.08422273){\color[rgb]{0,0,0}\makebox(0,0)[lb]{\smash{$B$}}}%
    \put(0.95822844,0.08422273){\color[rgb]{0,0,0}\makebox(0,0)[lb]{\smash{$A$}}}%
  \end{picture}%
\endgroup
\caption{A stable family comprised of a single cycle $\gamma: S^1
  \rightarrow S$, embedded in two different ways on a double torus.
  Letter~$A$ denotes the image of the same point of~$S^1$ in both
  embeddings, and similarly for~$B$.  Both cycles are homotopic, some
  oriented homeomorphism of~$S$ maps one to the other pointwise, and there
  exists an isotopy of~$S$ (namely, the identity) that maps one to the
  other, preserving their orientation; however, there exists no isotopy
  of~$S$ that maps one to the other pointwise}
\label{F:contreex2}
\end{figure}

\begin{corollary}\label{C:stable}
  Let $S$ be an orientable surface and let $\Gamma_1=(\gamma_{1,1}, \ldots,
  \gamma_{1,n})$ and $\Gamma_2=(\gamma_{2,1}, \ldots \gamma_{2,n})$ be two
  \emph{stable} families of cycles on~$S$ in general position such that:
  \begin{enumerate}
  \item there exists an oriented homeomorphism~$h$ of $S$ mapping each
    cycle~$\gamma_{1,j}$ of~$\Gamma_1$ to the corresponding
    cycle~$\gamma_{2,j}$ of~$\Gamma_2$ not necessarily pointwise, but
    preserving the orientations of the cycles, and
  \item each cycle of~$\Gamma_1$ is homotopic to the corresponding cycle
    of~$\Gamma_2$.
  \end{enumerate}
  If there is only one orientation-preserving map automorphism of
  $\Gamma_1$, namely the identity, then there is an ambient isotopy of $S$
  mapping each cycle of~$\Gamma_1$ to the corresponding cycle of~$\Gamma_2$
  pointwise.
\end{corollary}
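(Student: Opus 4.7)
The plan is to first invoke Theorem~\ref{T:stable} to produce an ambient isotopy $\Phi$ of~$S$ mapping each $\gamma_{1,j}$ to $\gamma_{2,j}$ as sets and preserving orientations. After replacing $\Gamma_1$ with $\Phi(\Gamma_1)$, we may assume that $\gamma_{1,j}$ and $\gamma_{2,j}$ have the same image in~$S$ and induce the same orientation on it, for every~$j$. It then suffices to build a second ambient isotopy that leaves the image of the arrangement setwise invariant but reparameterizes each cycle pointwise as required.

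For each~$j$, let $\varphi_j\colon S^1\to S^1$ be the orientation-preserving homeomorphism satisfying $\gamma_{2,j}=\gamma_{1,j}\circ\varphi_j$, and write the vertex-visits of cycle~$j$ in $\gamma_{1,j}$, in cyclic order, as $v_1^{(j)},\dots,v_{k_j}^{(j)}$. Because $\varphi_j$ preserves orientation of~$S^1$ and $\gamma_{2,j}$ traces the same underlying set as $\gamma_{1,j}$ with the same orientation, the sequence of vertex-visits of $\gamma_{2,j}$ is the very same cyclic sequence, shifted by some integer~$s_j\bmod k_j$. Consistency of these shifts at each intersection vertex, together with the way they act on the incident flags, shows that $(s_1,\dots,s_n)$ defines an orientation-preserving map automorphism of the arrangement of~$\Gamma_1$. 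By the hypothesis of the corollary, this automorphism is the identity; hence every $s_j$ is zero, and each $\varphi_j$ fixes the parameter corresponding to every vertex-visit on cycle~$j$.

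Consequently, on every edge~$e$ of the arrangement of~$\Gamma_1$, the map induced by~$\varphi_j$ is an orientation-preserving self-homeomorphism of~$e$ fixing both of its endpoints; such a map is isotopic to the identity relatively to~$\partial e$ by the one-dimensional analogue of Lemma~\ref{L:Alexander}, given by the exact same explicit formula. To globalize these edge-by-edge isotopies into an ambient isotopy of~$S$, choose pairwise disjoint closed disk neighborhoods $D_v$ of the arrangement vertices together with thin strip neighborhoods $N_e$ of the open edges, glued consistently with the $D_v$'s near each vertex. Put the identity on each $D_v$ and outside $\bigcup_v D_v \cup \bigcup_e N_e$; on each strip $N_e$, interpolate between the reparameterization of~$e$ along the core of the strip and the identity along the two long sides of~$N_e$. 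The resulting homeomorphism is an ambient isotopy of~$S$ that maps each $\gamma_{1,j}$ pointwise to~$\gamma_{2,j}$.

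The main delicate step is the middle one: rigorously verifying that the collective shifts $(s_j)$ assemble into a genuine orientation-preserving map automorphism of the (extended) combinatorial map of~$\Gamma_1$. This boils down to checking compatibility of shifts at every intersection vertex and the preservation of the cyclic order and orientation of flags around each vertex, crucially using that $\Phi$ was orientation-preserving and that the orientations of the cycles agree in $\Gamma_1$ and~$\Gamma_2$.
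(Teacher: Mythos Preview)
Your proof follows essentially the same strategy as the paper's: apply Theorem~\ref{T:stable} to get an ambient isotopy carrying $\Gamma_1$ onto $\Gamma_2$ as oriented sets, observe that this yields an orientation-preserving map automorphism of the combinatorial map of~$\Gamma_1$, invoke the hypothesis to conclude it is the identity, and then fix things up edge-by-edge via a local isotopy.

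The one place where your write-up diverges from the paper is the middle step, and there you make it harder than necessary. You encode the automorphism through the tuple of cyclic ``shifts'' $(s_1,\dots,s_n)$ of the vertex-visit sequences and then have to argue that these shifts are consistent at every crossing so as to assemble into a genuine flag permutation commuting with the three involutions; you flag this yourself as the delicate step and only sketch it. The paper bypasses this entirely: the ambient isotopy~$\Phi$ from Theorem~\ref{T:stable} is itself an orientation-preserving homeomorphism of~$S$ carrying the arrangement of~$\Gamma_1$ onto that of~$\Gamma_2$, so it \emph{automatically} induces an orientation-preserving map isomorphism between the two extended combinatorial maps---no consistency check is needed. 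Since~$h$ provides another such isomorphism, their composite is the desired automorphism of~$\Gamma_1$. In effect, the ``consistency of the shifts'' that you set out to verify is nothing other than the well-definedness of~$\Phi$ on vertices and flags, which is immediate because $\Phi$ is a homeomorphism of the whole surface. Your concluding local isotopy in tubular neighborhoods of the edges is the same as the paper's final sentence, just spelled out in more detail.
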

\begin{proof}
  Theorem~\ref{T:stable} shows that there is an ambient isotopy mapping
  each cycle of~$\Gamma_1$ to the corresponding cycle of~$\Gamma_2$ that is
  not necessarily pointwise, but preserves the orientation of each cycle.
  Hence this ambient isotopy induces an orientation-preserving map
  isomorphism between $\Gamma_1$ and $\Gamma_2$. But the extended
  combinatorial maps of $\Gamma_1$ and $\Gamma_2$ are the same, since there
  exists an oriented homeomorphism between them. We thus get an
  orientation-preserving map automorphism of $\Gamma_1$, and it has to be
  the identity.  This means that each vertex of the arrangement
  of~$\Gamma_1$ is mapped to the corresponding vertex of the arrangement
  of~$\Gamma_2$, and similarly for each edge of these arrangements (which
  are mapped with the same orientation).  After an isotopy in a
  neighborhood of these edges, we get a pointwise ambient isotopy.
\end{proof}

The remaining part of this section is devoted to the proof of Theorem~\ref{T:stable}
if the surface $S$ has negative Euler characteristic. The proof for the
remaining surfaces uses slightly different tools and is deferred to
Section~\ref{S:exc}.
\subsection{Basic Consequences of Euler's Formula}

The following lemmas are simple consequences of Euler's formula.
\begin{lemma}\label{lemeuler2}
  Assume just for this lemma that $S$ is a sphere.  Let $G$ be a connected
  (hence cellularly embedded) graph on $S$, such that every vertex has
  degree four. Then there are at least three faces in $G$ with degree
  smaller than four.
\end{lemma}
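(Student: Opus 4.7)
The plan is to use Euler's formula together with a double-counting argument on the face degrees, in the standard discharging style.

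First I would write down the two combinatorial identities that govern the situation. Since $G$ is cellularly embedded on $S^2$, Euler's formula gives $v - e + f = 2$, where $v, e, f$ denote the numbers of vertices, edges, and faces. The $4$-regularity assumption gives $2e = \sum_{u} \deg(u) = 4v$, hence $e = 2v$, and substituting into Euler's formula yields $f = v + 2$.

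Next I would count edge--face incidences in two ways. Let $f_i$ denote the number of faces of degree $i$ (where the degree of a face counts each bounding edge with multiplicity, so that an edge adjacent to the same face on both sides contributes twice). Then $\sum_i i f_i = 2e = 4v$, while $\sum_i f_i = f = v + 2$. Subtracting four times the second identity from the first gives
\[
\sum_i (i-4)\, f_i \;=\; 4v - 4(v+2) \;=\; -8,
\]
or equivalently
\[
3 f_1 + 2 f_2 + f_3 \;=\; 8 + \sum_{i \geq 5}(i-4)\, f_i \;\geq\; 8.
\]

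Finally I would conclude. If we had $f_1 + f_2 + f_3 \leq 2$, then $3 f_1 + 2 f_2 + f_3 \leq 3(f_1 + f_2 + f_3) \leq 6 < 8$, a contradiction. Hence $f_1 + f_2 + f_3 \geq 3$, as required. I do not anticipate any real obstacle here; the only point to be careful about is the convention that a face of degree $i$ counts each bounding edge with multiplicity (which is needed so that the identity $\sum_i i f_i = 2e$ remains valid even in the presence of loops or bridges), and this is the convention already implicit in the rest of the paper.
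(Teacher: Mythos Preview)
Your proof is correct and follows essentially the same approach as the paper: Euler's formula combined with $2e=4v$ and the face-degree double count $\sum_i i f_i = 2e$ to force $\sum_i (i-4)f_i=-8$, from which the conclusion is immediate. The paper phrases the final step as a direct lower bound on the total face degree (at least $4f-6$ if only two faces are small, contradicting $2e=4f-8$) rather than your discharging identity, but this is only a cosmetic difference.
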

\begin{proof}
  Let $v$, $e$, and~$f$ denote the number of vertices, edges, and faces
  of~$G$.  The sum of the degrees of all faces is~$2e$, which, by Euler's
  formula $v-e+f=2$ and double-counting of the vertex-edge incidences
  $4v=2e$, equals $4f-8$.  Since every face has degree at least one, if at
  most two faces have degree smaller than four, the sum of the degrees of
  all faces is at least $4f-6$.  This is a contradiction.
\end{proof}
A trivial corollary is the following:
\begin{corollary}\label{coreuler}
  Assume just for this corollary that $S$ is a sphere, a disk, or an annulus.
  Let $G$ be a connected graph embedded on~$S$, such that every vertex has
  degree four. Then there is at least one face in $G$ with degree smaller
  than four.
\end{corollary}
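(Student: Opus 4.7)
The plan is to reduce the disk and annulus cases to the sphere case, where Lemma~\ref{lemeuler2} applies directly.

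If $S$ is already a sphere, then Lemma~\ref{lemeuler2} immediately gives the conclusion (even producing three faces of degree less than four).

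For $S$ a disk or an annulus, I would cap off each boundary component of $S$ by gluing a closed disk along it, producing a sphere $S'$ in which $G$ is now embedded. Since $G$ lies in the interior of $S$ by the general position assumption in force throughout the paper, no vertex or edge of $G$ meets the newly attached disks, so on $S'$ the graph $G$ is still connected and every vertex still has degree four. Lemma~\ref{lemeuler2} therefore yields at least three faces of $G$ on $S'$ with degree less than four.

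It remains to show that these small-degree faces on $S'$ come from small-degree faces of $G$ on~$S$. Each face of $G$ on $S'$ is obtained from a face of $G$ on $S$ by absorbing whichever capping disks happened to lie inside it (zero, one, or two of them, depending on how the boundary circles of $S$ were distributed among the faces of $G$). Because the capping disks contain no edges of $G$, they contribute nothing to the boundary walk of the enlarged face in $G$, and thus its combinatorial degree (counted as a number of edge-sides) is preserved. Consequently, each degree-less-than-four face of $G$ on $S'$ corresponds to a degree-less-than-four face of $G$ on~$S$, as required.

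The argument has no serious obstacle; the only point requiring a moment of care is the degree preservation under capping off, which, as noted, follows immediately from $G$ being disjoint from $\partial S$.
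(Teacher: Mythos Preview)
Your proof is correct and follows essentially the same approach as the paper: cap off the boundary components to obtain a sphere and invoke Lemma~\ref{lemeuler2}. The only minor difference is in the final step, where the paper uses pigeonhole (at least three small-degree faces on the sphere, at most two capping disks, so at least one small-degree face avoids them entirely and is literally a face of~$G$ on~$S$) rather than your degree-preservation argument; both are valid.
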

\begin{proof}
  Let $\bar S$ be the sphere obtained from~$S$ by attaching a disk to each
  of the boundary components of~$S$.  The graph $G$ is embedded on~$\bar
  S$, and by Lemma~\ref{lemeuler2} it contains at least three faces
  (on~$\bar S$) with degree smaller than four.  At least one of these faces
  does not contain the disks attached to~$S$, since there are at most two
  such disks.
\end{proof}

\begin{lemma}\label{lemeuler}
  Let $\Gamma$ be a stable family of cycles in a surface $S$.  Then no
  $k$-gon can exist in~$\Gamma$ for $k\le3$.
\end{lemma}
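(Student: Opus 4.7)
My plan is to argue by contradiction using a minimality argument. Suppose some $k$-gon with $k\le 3$ exists in $\Gamma$; among all such, I would pick one, $D$, that is minimal for inclusion. By stability, $D$ cannot be empty, so some portion of $\Gamma$ lies in its interior; the goal is then to exhibit a $k'$-gon with $k'\le 3$ strictly inside $D$, contradicting the minimality of $D$.

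To produce such a smaller $k'$-gon, I would focus on the connected component $G_1$ of the arrangement $\Gamma\cap \bar D$ containing $\partial D$. Since $G_1$ is connected and contains $\partial D$, a short Jordan-curve argument in the disk $\bar D$ shows that every face of $G_1$ in $\bar D$ is a topological disk; in particular each such face is a $k'$-gon of $\Gamma$ in the paper's sense, and Euler's formula $V-E+F=1$ applies. Classifying vertices of $G_1$ by their truncated degree in $\bar D$ — $v_i$ interior vertices of degree $4$, $v_b$ non-corner boundary vertices of degree $3$ (where some cycle of $\Gamma$ crosses a side of $D$), and the $k$ corners of $D$, of degree $2$ — a degree-sum gives $E = 2v_i + \tfrac{3}{2}v_b + k$ and hence $F = 1 + v_i + \tfrac{1}{2}v_b$. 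Counting face-corners at each vertex type (at every transverse crossing the two incident strands are distinct arcs of $\Gamma$, so every face-incidence is an actual corner: interior vertices contribute $4$, boundary non-corners contribute $2$, corners of $D$ contribute $1$) yields $\sum_f k'_f = 4v_i + 2v_b + k$, so
\[
\sum_f (k'_f - 4) \;=\; k - 4 \;\le\; -1,
\]
which forces some face of $G_1$ to have $k'_f \le 3$. If $G_1\supsetneq\partial D$, every face of $G_1$ is strictly smaller than $D$, so this face is the desired $k'$-gon and we are done.

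It remains to handle the exceptional case $G_1 = \partial D$, in which no cycle of $\Gamma$ enters $D$ from outside. In that case the nontrivial pieces of $\Gamma$ inside $D$ form floating connected components disjoint from $\partial D$; picking any such component $C$, either $C$ is a simple closed curve and bounds a $0$-gon strictly inside $D$, or every vertex of $C$ has degree $4$ (no truncation occurs since $C$ misses $\partial D$) and Lemma~\ref{lemeuler2}, applied to $C$ embedded in the sphere obtained by capping $\bar D$, yields a face of $C$ of degree $<4$ that does not contain the cap, hence a $k'$-gon with $k'\le 3$ strictly inside $D$. The main obstacle I expect is the Euler bookkeeping in the main case: one must correctly tally the truncated degrees along $\partial D$ and verify that at each of the three vertex types every adjacent face-incidence is a true corner in the $k'$-gon sense, which is what makes the clean identity $\sum_f(k'_f-4)=k-4$ work out exactly; the side case $G_1=\partial D$ then needs to be disposed of separately via floating components and Lemma~\ref{lemeuler2}.
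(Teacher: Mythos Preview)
Your proof is correct and follows essentially the same approach as the paper's. Both argue by minimality/infinite descent, split into the two cases according to whether $\partial D$ is crossed by other pieces of~$\Gamma$, carry out the same Euler-formula bookkeeping in the main case (the paper phrases it as assuming every interior face has degree at least four and deriving $v_{extr}\ge 4$, which is equivalent to your identity $\sum_f(k'_f-4)=k-4$), and dispose of the floating-component case via Lemma~\ref{lemeuler2} (the paper uses its Corollary~\ref{coreuler}, which is just Lemma~\ref{lemeuler2} after capping, as you do).
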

\begin{proof}
  Consider a hypothetical $k$-gon~$D$ for~$\Gamma$, with $k\le3$.  We prove
  that $D$ strictly contains another $k'$-gon with $k'\le3$, which is a
  contradiction, since by induction this would give an infinite family of
  $k$-gons in a finite graph.

  Assume first that $D$ has exactly $k$ vertices on its boundary; in other
  words, no cycle crosses the boundary of $D$.  Since $D$ cannot be an
  empty $k$-gon, there must be a connected component~$\Gamma'$ of the
  arrangement of~$\Gamma$ that is entirely inside~$D$.
  Corollary~\ref{coreuler} implies that $\Gamma'$ contains empty $k'$-gons
  with $k'\le3$ inside~$D$, and thus $D$ contains a smaller $k'$-gon, as
  desired.

  Assume now that $D$ has at least $k+1$ vertices on its boundary, i.e., it
  intersects at least another cycle in $\Gamma$.  Consider the restriction
  of the arrangement of~$\Gamma$ to the closed disk~$D$, and let $G$ be the
  connected component of that restriction that contains the boundary
  of~$D$.  To prove the lemma, it suffices to prove that $G$ has an
  interior face of degree at most three.  We now assume that every interior
  face of~$G$ has degree at least four, and will reach a contradiction.

  $G$ has two types of edges: $e_{ext}$ \emph{external} edges lying on the
  boundary of the $k$-gon, and $e_{int}$ \emph{internal} edges. Similarly,
  it has three types of vertices: $v_{int}$ vertices in the interior of the
  $k$-gon, $v_{flat}$ degree-three vertices on the boundary of the $k$-gon,
  and $v_{extr}=k$ degree-two vertices on the boundary of the $k$-gon. Let
  $f$ denote the number of interior faces of~$G$ in the plane.  By
  double-counting arguments, we obtain:
  \begin{eqnarray}
    v_{extr}+v_{flat}&=&e_{ext} \label{eqneuler1}\\ 4f
    &\leq&2e_{int}+e_{ext} \label{eqneuler2}\\
    4v_{int}+v_{flat}&=&2e_{int}. \label{eqneuler3}
  \end{eqnarray}
  Euler's formula implies
  \begin{eqnarray*}
    4 &= & 4(v_{extr}+v_{flat}+v_{int})-4(e_{ext}+e_{int})+4f\\
    &\stackrel{(\ref{eqneuler1})}{=}&4v_{int}-4e_{int}+4f\\
    &\stackrel{(\ref{eqneuler3})}{=}&-v_{flat}-2e_{int}+4f\\
    &\stackrel{(\ref{eqneuler2})}{\leq}&-v_{flat}-2e_{int}+2e_{int}+e_{ext}.\\
  \end{eqnarray*}
  With equation $(\ref{eqneuler1})$, this gives $v_{extr} \geq 4$ which
  yields a contradiction and concludes the proof.
\end{proof}

\subsection{Following the Geodesics}\label{S:follow}

We recall that $\Gamma_1=(\gamma_{1,1}, \ldots, \gamma_{1,n})$ and
$\Gamma_2=(\gamma_{2,1}, \ldots \gamma_{2,n})$ are two stable
families of cycles on~$S$ in general position satisfying the hypotheses of
Theorem~\ref{T:stable}. Since we assume that the Euler characteristic
of~$S$ is negative, we can endow~$S$ with a hyperbolic metric and identify
$\widetilde S$ with a subset of the open hyperbolic disk. For each~$j$, if
$\gamma_{1,j}$ is not null-homotopic, let $g_j$ be the unique geodesic
homotopic to it.  (Some $g_j$'s may be identical.)
\begin{figure}[htb]
  \centering
  \def\svgwidth{15cm}
\begingroup
  \makeatletter
  \providecommand\color[2][]{%
    \errmessage{(Inkscape) Color is used for the text in Inkscape, but the package 'color.sty' is not loaded}
    \renewcommand\color[2][]{}%
  }
  \providecommand\transparent[1]{%
    \errmessage{(Inkscape) Transparency is used (non-zero) for the text in Inkscape, but the package 'transparent.sty' is not loaded}
    \renewcommand\transparent[1]{}%
  }
  \providecommand\rotatebox[2]{#2}
  \ifx\svgwidth\undefined
    \setlength{\unitlength}{2105.05584pt}
  \else
    \setlength{\unitlength}{\svgwidth}
  \fi
  \global\let\svgwidth\undefined
  \makeatother
  \begin{picture}(1,0.39840485)%
    \put(0,0){\includegraphics[width=\unitlength]{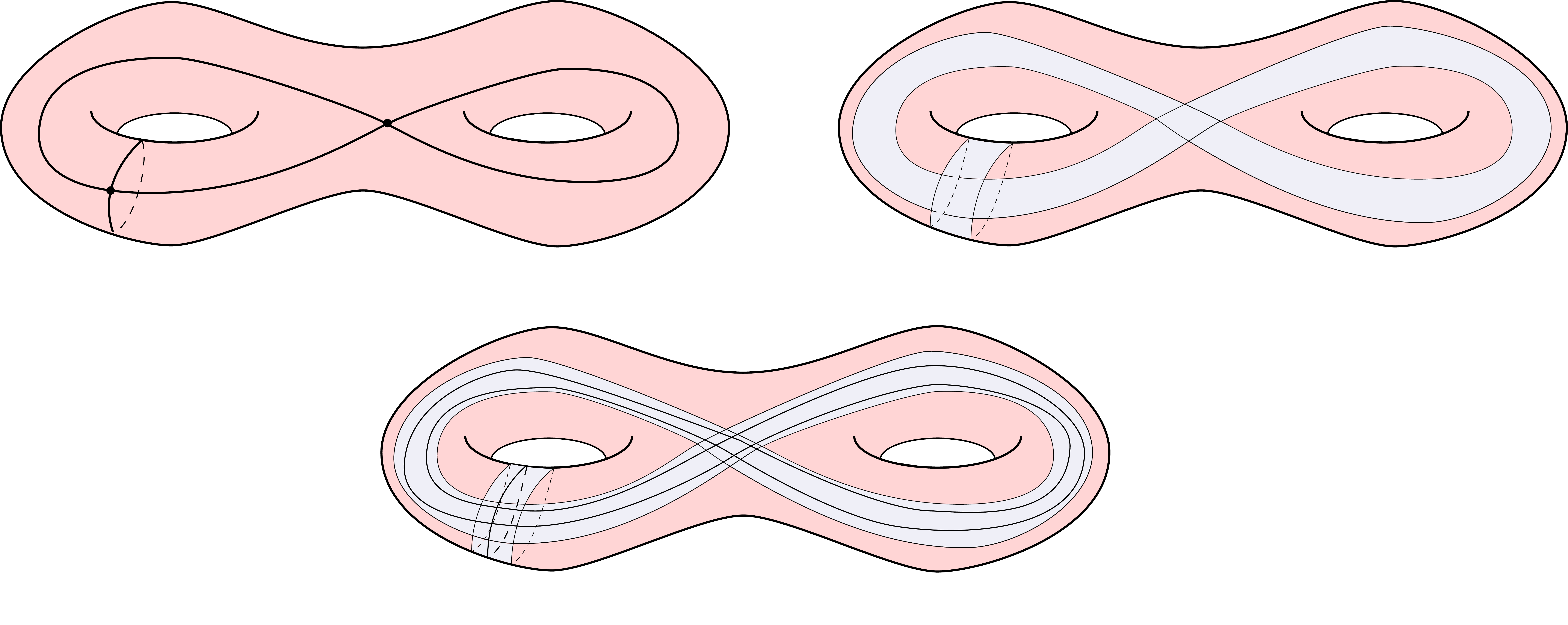}}%
    \put(0.75136359,0.3439438){\color[rgb]{0,0,0}\makebox(0,0)[lb]{\smash{${\scriptstyle {P_{v_1}}}$}}}%
    \put(0.241321,0.32824224){\color[rgb]{0,0,0}\makebox(0,0)[lb]{\smash{$v_1$}}}%
    \put(0.07586501,0.28022193){\color[rgb]{0,0,0}\makebox(0,0)[lb]{\smash{$v_2$}}}%
    \put(0.39833459,0.35584553){\color[rgb]{0,0,0}\makebox(0,0)[lb]{\smash{$e_1$}}}%
    \put(0.1290509,0.36398918){\color[rgb]{0,0,0}\makebox(0,0)[lb]{\smash{$e_2$}}}%
    \put(0.05615785,0.30001888){\color[rgb]{0,0,0}\makebox(0,0)[lb]{\smash{$e_3$}}}%
    \put(0.18361016,0.28222325){\color[rgb]{0,0,0}\makebox(0,0)[lb]{\smash{$e_4$}}}%
    \put(0.62671127,0.27099754){\color[rgb]{0,0,0}\makebox(0,0)[lb]{\smash{${\scriptstyle P_{v_2}}$}}}%
    \put(0.63940344,0.38287484){\color[rgb]{0,0,0}\makebox(0,0)[lb]{\smash{$\scriptstyle {P_{e_2}}$}}}%
    \put(0.71134903,0.27607327){\color[rgb]{0,0,0}\makebox(0,0)[lb]{\smash{${\scriptstyle P_{e_4}}$}}}%
    \put(0.87836492,0.3882301){\color[rgb]{0,0,0}\makebox(0,0)[lb]{\smash{${\scriptstyle P_{e_1}}$}}}%
    \put(0.63341351,0.31535245){\color[rgb]{0,0,0}\makebox(0,0)[lb]{\smash{${\scriptstyle P_{e_3}}$}}}%
    \put(0.2094017,0.19785854){\color[rgb]{0,0,0}\makebox(0,0)[lb]{\smash{$1.$}}}%
    \put(0.76669943,0.19785854){\color[rgb]{0,0,0}\makebox(0,0)[lb]{\smash{$2.$}}}%
    \put(0.46646987,0.00078197){\color[rgb]{0,0,0}\makebox(0,0)[lb]{\smash{$3.$}}}%
    \put(0.5120239,0.15442957){\color[rgb]{0,0,0}\makebox(0,0)[lb]{\smash{$\scriptstyle {\gamma}$}}}%
    \put(0.58880307,0.12986023){\color[rgb]{0,0,0}\makebox(0,0)[lb]{\smash{$\scriptstyle{\gamma'}$}}}%
    \put(0.31508537,0.01277203){\color[rgb]{0,0,0}\makebox(0,0)[lb]{\smash{$\scriptstyle{\gamma''}$}}}%
  \end{picture}%
\endgroup
  \caption{1. Two geodesics on a double torus, and the associated
    graph. 2. The corridors associated to them. 3. Cycles in the
    corridors. In vertex polygons, they intersect in a grid-like pattern}
  \label{fig:Corridors}
\end{figure}

The following result uses a proposition by de Graaf and
Schrijver~\cite[Proposition 13]{gs-mcmcr-97}, the proof of which is based
on a slight refinement of Ringel's theorem~\cite{r-tegtg-55} for
arrangements of curves in the disk. It allows us to push each curve in $\Gamma_1$ and
$\Gamma_2$ in the neighborhood of the corresponding geodesic.
\begin{proposition}\label{proposition13}
  No cycle in~$\Gamma_1$ and~$\Gamma_2$ is null-homotopic. Furthermore, for
  $i=1,2$, for each $\varepsilon>0$, up to replacing $\Gamma_i$ with its
  image by an ambient isotopy of~$S$, we may assume that each
  cycle~$\gamma_{i,j}$ has a lift that belongs to an
  $\varepsilon$-neighborhood of a lift of~$g_j$.
\end{proposition}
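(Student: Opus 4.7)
The plan is to prove the proposition in two stages: first, to show by contradiction that no cycle in~$\Gamma_i$ is null-homotopic, using Lemma~\ref{lemeuler}; and second, to invoke \cite[Proposition~13]{gs-mcmcr-97} (cited in the statement) to push each cycle into an $\varepsilon$-neighborhood of its geodesic representative.

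For the first stage, suppose some $\gamma_{i,j}$ is null-homotopic. If $\gamma_{i,j}$ is simple, then the embedded disk it bounds is, by definition, a $0$-gon in~$\Gamma_i$, contradicting Lemma~\ref{lemeuler} immediately. Otherwise, I lift to the universal cover $\widetilde S \subseteq D_2$: the lift $\tilde\gamma_{i,j}$ is a closed, possibly self-intersecting, curve in~$D_2$. Viewing $\tilde\gamma_{i,j}$ alone as a $4$-valent planar graph on its $n$ self-crossings, a standard Euler-characteristic computation (there are $n+1$ bounded faces of total degree at most $4n$) yields a bounded face of degree at most~$3$; this is a $k$-gon with $k\le 3$ in the full lifted arrangement. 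Restricting the full lifted arrangement to the closure of this bounded face (which meets only finitely many other lifts, by compactness) and running the induction argument from the proof of Lemma~\ref{lemeuler}, I obtain an innermost open disk $\widetilde D$ bounded by at most three arcs of the lifted arrangement, whose interior is disjoint from the arrangement. Since $\widetilde D$ is bounded and every non-trivial deck transformation of $D_2$ is a fixed-point-free hyperbolic or parabolic isometry, $\widetilde D$ cannot be preserved by any non-trivial deck transformation, so it projects injectively to~$S$. Its image is an empty $k$-gon in~$\Gamma_i$ with $k\le 3$, again contradicting Lemma~\ref{lemeuler}.

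For the second stage, now that each $\gamma_{i,j}$ is essential, it is freely homotopic to a unique geodesic $g_j$. Proposition~13 of~\cite{gs-mcmcr-97}, which rests on a refinement of Ringel's theorem for arrangements of curves in the disk, asserts that any stable family of essential closed curves on a hyperbolic surface is ambient isotopic to a family lying in an arbitrarily small tubular neighborhood of its geodesic representatives; applying it to $\Gamma_i$ with the prescribed $\varepsilon>0$ yields the second conclusion.

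The main obstacle is the first stage: adapting the Euler-count of Lemma~\ref{lemeuler} to the restricted lifted arrangement inside a bounded region of~$D_2$, and justifying the injectivity of the projection of the innermost disk back to~$S$. Both technical points rely on the compactness of the region enclosed by $\tilde\gamma_{i,j}$ (which follows from null-homotopy) and on the classical fact that non-trivial deck transformations of a hyperbolic surface act fixed-point-freely on~$D_2$ and so cannot preserve a compact disk.
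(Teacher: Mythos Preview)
Your approach reverses the order of the paper's argument, and this inversion both complicates the proof and leaves a genuine gap in your second stage.

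The paper applies de Graaf--Schrijver \emph{first}: their Proposition~13 moves an arbitrary family of cycles, via Reidemeister moves (not increasing the number of crossings) together with ambient isotopies, into an $\varepsilon$-neighborhood of geodesics, or of a point for null-homotopic cycles. The crucial observation---which you omit---is that stability of~$\Gamma_i$ forbids any Reidemeister move, since each such move requires an empty $k$-gon with $k\le 3$. Hence only ambient isotopies occur. Your second stage misstates the content of \cite[Proposition~13]{gs-mcmcr-97}: it does not assert that stable essential families are ambient isotopic into tubular neighborhoods of geodesics; it asserts the weaker statement allowing Reidemeister moves, and the passage to pure isotopies is precisely where stability is used.

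Once the isotopy is applied first, your first stage becomes almost trivial and needs no universal-cover argument: any null-homotopic $\gamma_{i,j}$ now lies in a small disk on~$S$ itself; if simple it is a $0$-gon, and if non-simple Corollary~\ref{coreuler} applied to that disk on~$S$ gives a $k$-gon with $k\le 3$, contradicting Lemma~\ref{lemeuler}. No lift, no Euler count in~$D_2$, no projection. Your projection step also has a loose end: ``$\widetilde D$ cannot be preserved by any non-trivial deck transformation, so it projects injectively'' is a non-sequitur as written. Injectivity requires that no two points of~$\widetilde D$ are related by a deck transformation, not merely that no deck transformation fixes~$\widetilde D$ setwise; you would still need to argue that, since $\widetilde D$ is a face of the deck-invariant lifted arrangement, $\tau(\widetilde D)$ and~$\widetilde D$ sharing an interior point forces $\tau(\widetilde D)=\widetilde D$.
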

\begin{proof}
  It follows from the aforementioned result by de Graaf and
  Schrijver~\cite[Proposition 13]{gs-mcmcr-97} that the cycles
  $\gamma_{i,j}$ in $\Gamma_i$ can be moved through Reidemeister moves (not
  increasing the number of crossings) and ambient isotopies into cycles
  which have lifts that are $\varepsilon$-close to some lift of a geodesic~$\alpha_j$
  if $\gamma_{i,j}$ is not null-homotopic, and $\varepsilon$-close to a
  point of~$\wt{S}$ otherwise. As $\Gamma_i$ is stable, no Reidemeister
  move \emph{at all} is possible, so the only possible moves are actually
  isotopies of the surface.

  Now, assume that $\gamma_{i,j}$ is null-homotopic. For $\varepsilon$
  small enough, $\gamma_{i,j}$ is a contractible cycle in a disk; hence if
  it is simple, it forms a $0$-gon; and if it is non-simple, by
  Corollary~\ref{coreuler}, it contains a $k$-gon for some $k\le3$. This
  contradicts the stability of $\Gamma_1$ and $\Gamma_2$, hence no cycle
  $\gamma_{i,j}$ is null-homotopic.

  Now, since lifts of $\gamma_{i,j}$ and $\alpha_j$ are $\varepsilon$-close,
  they share the same endpoints and thus $\alpha_j$ is the unique geodesic
  homotopic to $\gamma_{i,j}$, which shows that $\alpha_j=g_j$ and
  concludes the proof.
\end{proof}
So henceforth we assume that $\Gamma_1$ and~$\Gamma_2$ satisfy the
conclusion of the above proposition.

The union of the geodesics~$g_j$ forms a graph, possibly with simple cycles
without vertices, on the surface $S$; we denote by $E$ and $V$ its edges and
vertices, see Figure~\ref{fig:Corridors}. (Simple cycles without vertex are
considered to be closed edges.) Each vertex has even degree, and each
geodesic arriving at a vertex from an edge leaves it via the opposite edge.

Now, following ideas by de Graaf and Schrijver~\cite{gs-mcmcr-97}, we
introduce a polygonal decomposition of an $\varepsilon$-neighborhood of the
graph $(V,E)$; see Figure~\ref{fig:Corridors}.  To each edge~$e\in E$, we
associate an \emphdef{edge polygon}~$P_e$ (actually, a quadrilateral), and
to each vertex $v\in V$ of degree~$2d$, we associate a \emphdef{vertex
  polygon} $P_v$ with $2d$ sides, such that each edge~$e=uv$ lies in the
interior of $P_e\cup P_u \cup P_v$ and $v$ lies in the interior of $P_v$,
and such that the union of all the polygons forms a tubular neighborhood of
the cycles $g_i$.%
\footnote{In the case where a geodesic coincides with a boundary of the
  surface, one of the edges of each polygon of that geodesic actually lies
  on this boundary.  Also, for simple cycles without vertex, we introduce
  an edge polygon with two opposite sides glued together.}  %
We can assume that all the polygons are mutually disjoint, except for $P_e$
and~$P_u$ for $u$ a vertex incident to~$e$, which share an edge.

Let us call a \emph{corridor} the polygonal neighborhood of a single
geodesic~$g_i$, that is $\bigcup P_e \cup \bigcup P_u$ for all $e$ and $u$
in~$g_i$.  Every cycle in~$\Gamma_i$ belongs to a single corridor.

The following proposition is proved using Lemmas~\ref{lemeuler2}
and~\ref{lemeuler} and by standard bigon flipping arguments.
\begin{proposition}\label{P:trgle}
  For $i=1,2$, up to replacing $\Gamma_i$ with its image under an ambient
  isotopy of~$S$, we may assume that:
  \begin{itemize}
  \item each maximal piece of a cycle in~$\Gamma_i$ within a polygon
    is simple, and has its endpoints on opposite sides of the
    polygon;
  \item two such pieces cross at most once; moreover, if they cross, then
    the four endpoints of these two pieces are all in different sides of
    the polygon (in particular, the polygon is a vertex polygon).
  \end{itemize}
\end{proposition}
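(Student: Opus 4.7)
The plan is to construct an ambient isotopy of~$S$ that brings the family~$\Gamma_i$ into the stated canonical form inside each polygon, in three successive stages, using Proposition~\ref{proposition13}, Lemma~\ref{lemeuler}, and a bigon-flipping argument in each corridor.

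First, I would invoke Proposition~\ref{proposition13} with $\varepsilon$ chosen small enough so that every maximal piece of a cycle of~$\Gamma_i$ inside a polygon~$P$ lies in a thin tubular neighborhood of the corresponding piece of its geodesic~$g_j$. Since each such geodesic piece enters and exits~$P$ through two opposite sides, so does the cycle piece; this yields the first bullet of the statement.

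Second, the simplicity of each piece and the \emph{at most one crossing} part of the second bullet both follow directly from Lemma~\ref{lemeuler}. A self-intersection of a piece would bound an innermost disk inside~$P$, forming a $1$-gon of~$\Gamma_i$; two crossings between two pieces inside~$P$ would similarly bound a $2$-gon of~$\Gamma_i$ (the disk cobounded by two consecutive crossing subarcs). Stability rules out both configurations.

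The main step, and the main obstacle, is the \emph{four different sides} condition. Let $\alpha$ and~$\beta$ be two pieces in~$P$ that cross. Each of them follows a unique geodesic and has endpoints on the pair of opposite sides of~$P$ corresponding to the entry and exit of that geodesic. When $\alpha$ and~$\beta$ follow distinct geodesics, the two pairs of opposite sides are disjoint, because each side of~$P$ is adjacent to a unique edge polygon and hence associated with a single geodesic; thus the four endpoints lie on four different sides, and $P$ must be a vertex polygon containing the crossing of those two geodesics. The remaining case, in which $\alpha$ and~$\beta$ follow the same geodesic~$g$, is the one I need to rule out. To handle it, I would perform a further bigon-flipping ambient isotopy supported in the corridor of~$g$: in this corridor (topologically an annulus in which both parent cycles of $\alpha$ and~$\beta$ are freely homotopic to the core), any transverse crossing between two such cycles must cobound a disk, i.e., a $2$-gon of~$\Gamma_i$. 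Lemmas~\ref{lemeuler2} and~\ref{lemeuler} together with the stability of~$\Gamma_i$ forbid this configuration, so after the isotopy the two pieces are disjoint, contradicting our assumption and completing the proof.
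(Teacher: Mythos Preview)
Your argument has two genuine gaps.

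\textbf{First gap (opposite sides).} Invoking Proposition~\ref{proposition13} with small~$\varepsilon$ does \emph{not} by itself force every maximal piece to have its endpoints on opposite sides of a polygon. Being $\varepsilon$-close to the geodesic only constrains the piece transversally; it does not prevent the piece from entering an edge polygon through one side, advancing a little, and then retreating back through the same side. Such a wiggle stays within the $\varepsilon$-tube of the geodesic, yet produces a maximal piece with both endpoints on the same side. Consequently there is no well-defined ``corresponding piece of the geodesic'' to compare with. The paper handles exactly this situation by an explicit isotopy: one takes an innermost such piece~$p$ bounding a half-disk~$D$ with the side~$s$, observes that the pieces inside~$D$ are simple and pairwise disjoint (else a $3$-gon with~$p$ would appear), and pushes~$p$ across~$s$, strictly decreasing the number of intersections of~$\Gamma_i$ with the polygon sides. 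Iterating terminates. This bigon-flipping with respect to the polygon sides is the actual isotopy you need to construct; it cannot be replaced by a smallness-of-$\varepsilon$ argument.

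\textbf{Second gap (same-corridor crossings).} Your treatment of the case where $\alpha$ and~$\beta$ lie in the same corridor~$C$ is not quite right. You assert that a crossing would cobound a $2$-gon of~$\Gamma_i$, but if $C$ contains vertex polygons then other cycles of~$\Gamma_i$ (from transverse corridors) cut through~$C$ and may subdivide any disk bounded by subarcs of $\alpha$ and~$\beta$; the resulting face of~$\Gamma_i$ need not be a $2$-gon. The paper separates two cases: if $C$ contains a vertex polygon, then some transverse cycle crosses every cycle in~$C$, and any (self-)intersection in~$C$ would create a $3$-gon; if $C$ contains no vertex polygon, then $C$ is an honest annulus and Corollary~\ref{coreuler} applied to the arrangement inside~$C$ yields a $k$-gon with $k\le 3$. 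Either way Lemma~\ref{lemeuler} gives the contradiction. Also, you should not speak of a ``bigon-flipping ambient isotopy'' here: stability forbids Reidemeister moves on~$\Gamma_i$, so no such isotopy is available; the argument is purely by contradiction, not by further deformation.
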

\begin{proof}
  If a maximal piece of~$\Gamma_i$ within a polygon is non-simple, it forms
  a 1-gon, which is impossible (Lemma~\ref{lemeuler}).  If two pieces cross
  twice, they form a 2-gon, which is impossible for the same reason.

  If such a piece has its endpoints on two different sides of a polygon
  that are not opposite in that polygon, that polygon is a vertex polygon,
  and the corresponding cycle does not belong to a single corridor, which
  is impossible by construction.

  If such a piece~$p$ has its endpoints on the same side~$s$ of a polygon,
  without loss of generality assume that the disk~$D$ bounded by~$s$
  and~$p$ contains no other piece with both endpoints on~$s$.  Thus the
  pieces inside~$D$ are simple, pairwise disjoint (otherwise these two
  pieces would form a 3-gon with~$p$), and connect $s$ to~$p$; so with an
  isotopy of the surface, we can push~$p$ across~$s$, decreasing the total
  number of intersections between the cycles and the sides of the polygon.
  After finitely many such operations, no such piece~$p$ exists.

  There only remains to prove that there cannot be any (self-)intersection
  among cycles in the same corridor~$C$.  We distinguish two cases:
  \begin{itemize}
  \item If $C$ contains at least one vertex polygon, some cycle crosses
    every cycle in~$C$; if there were a \mbox{(self-)}intersection in~$C$, there
    would be a 3-gon.
  \item On the other hand, if the corridor $C$ contains no vertex polygon,
    then $C$ is an annulus.  Consider the arrangement of the cycles
    in~$\Gamma$, and assume there is a crossing.  One connected component
    of this arrangement is a graph where all vertices have degree four.
    Thus, by Corollary~\ref{coreuler}, $\Gamma$ contains a $k$-gon for
    $k\le3$, contradicting Lemma~\ref{lemeuler}.\qedhere
  \end{itemize}
\end{proof}

We can now assume that $\Gamma_1$ and~$\Gamma_2$ satisfy the conclusion of
Proposition~\ref{P:trgle}.  It follows that all the arcs of~$\Gamma_i$ in a
given edge polygon~$P_e$ are simple, disjoint, and belong to different
cycles.  Moreover, each polygon~$P_v$ is actually a quadrilateral, because
otherwise three arcs in~$P_v$ coming from six different sides would cross
inside~$P_v$, yielding a 3-gon because of the general position assumption,
which is impossible (Lemma~\ref{lemeuler}).  Finally, within each
polygon~$P_v$, the arcs intersect in a grid-like fashion, as in
Figure~\ref{fig:Corridors}.

\subsection{A Technical Result on Corridors}\label{S:tech}

Henceforth, let us choose an arbitrary orientation on~$S$.  Let $C$ be a
corridor, oriented in the direction of its geodesic~$g$; let $C_l$
and~$C_r$ be the left and right boundaries of~$C$, respectively. We recall
that $h$ is the oriented homeomorphism specified in the hypotheses of Theorem~\ref{T:stable}.

Let $\Gamma_1^C$ be the subfamily of cycles in~$\Gamma_1$ that belong
to~$C$, and let $\Gamma_2^C$ be its image by~$h$.  Recall that $\Gamma_i^C$
has no crossing in an edge polygon~$P_e$.  For $i=1,2$, the \emph{ordering}
of~$\Gamma_i^C$ along~$C$ is defined as follows: Consider an arc in an edge
polygon~$P_e$, going from~$C_l$ to~$C_r$, crossing each cycle
in~$\Gamma_i^C$ exactly once, and record the index of the cycles
in~$\Gamma_i$ encountered, in this order along the arc.  By construction,
this ordering does not depend on the choice of the polygon and arc.

\begin{lemma}\label{L:homeo}
  The orderings of $\Gamma_1^C$ and~$\Gamma_2^C$ are the same.
\end{lemma}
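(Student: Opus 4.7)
The plan is to transfer the question to the universal cover $D_2$, where it becomes a statement about linear orderings of disjoint arcs sharing two endpoints. First, I would fix a lift $\widetilde{g}$ of the corridor's geodesic, with endpoints $(a,b)\in\partial D_2$. Since each cycle in $\Gamma_i^C$ is homotopic to $g$, by the folklore result cited in Section~\ref{S:prelim} it has a lift with endpoints $(a,b)$; this lift, which I denote $\widetilde{\gamma}_{i,j}$, is unique as a subset, because any two lifts with the same endpoints would differ by an element of $\langle g\rangle$, which translates along $\widetilde{g}$ and hence preserves the lift setwise. The arcs $\widetilde{\gamma}_{i,\cdot}$ for fixed $i$ are pairwise disjoint in $D_2$ (the $\gamma_{i,j}$'s being pairwise disjoint on $S$ by Proposition~\ref{P:trgle}), and disjoint arcs with common endpoints in a disk carry a canonical linear order. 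This order agrees with the corridor-ordering: the transversal arc in an edge polygon $P_e$ used to define the ordering lifts, in a copy of $P_e$ crossed by $\widetilde{g}$, to a transversal meeting each $\widetilde{\gamma}_{i,j}$ exactly once in the prescribed order.

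Next, I would lift $h$ to a homeomorphism $\widetilde{h}\colon D_2\to D_2$ with $\widetilde{h}(a)=a$ and $\widetilde{h}(b)=b$. Hypothesis~2 of Theorem~\ref{T:stable} says that $h(\gamma_{1,j})=\gamma_{2,j}$ is homotopic to $\gamma_{1,j}$, so $h_*$ preserves the (oriented) conjugacy class of $g$ in $\pi_1(S)$. After composing an initial lift of $h$ with a suitable deck transformation, $\widetilde{h}$ commutes with the action of $g$ on $D_2$ and therefore permutes the fixed-point set $\{a,b\}$ of $g$ on $\partial D_2$; because $h$ is orientation-preserving and preserves the orientation of each cycle, the attracting fixed point of $g$ is preserved, ruling out a swap and giving $\widetilde{h}(a)=a$, $\widetilde{h}(b)=b$. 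The uniqueness observed in the previous paragraph then forces $\widetilde{h}(\widetilde{\gamma}_{1,j})=\widetilde{\gamma}_{2,j}$ for every $j\in\Gamma_1^C$.

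To conclude, $\widetilde{h}$ is an orientation-preserving homeomorphism of $D_2$ fixing $a$ and $b$, so it preserves the canonical linear order of any family of disjoint arcs with endpoints $(a,b)$: it sends a transversal arc near $a$ to another transversal near $a$, hitting the arcs in the same order. Hence the ordering of the $\widetilde{\gamma}_{1,j}$'s matches that of the $\widetilde{\gamma}_{2,j}$'s, which by the first step is exactly the statement of the lemma. The main technical point I foresee is justifying the existence of a lift $\widetilde{h}$ that pointwise fixes both $a$ and $b$: this requires combining the conjugacy-class preservation coming from hypothesis~2 with the two orientation hypotheses in order to exclude a swap of the two boundary fixed points.
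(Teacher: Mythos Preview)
Your proposal is correct and follows essentially the same approach as the paper's proof: lift $h$ to the universal cover, compose with a deck transformation so that the lift respects the chosen lift of the corridor, use the homotopy hypothesis to rule out a reversal of the axis, and then invoke orientation-preservation of~$\widetilde h$ to conclude that the left--right order of the lifted cycles is preserved. The only cosmetic difference is that you phrase everything in terms of the boundary endpoints $a,b\in\partial D_2$ and commutation with the deck transformation of~$g$, whereas the paper works directly with the lifted geodesic~$\widetilde g$ inside the lifted corridor~$\widetilde C$ and its ``source'' and ``target''; these are equivalent viewpoints. One small remark: in your step~(b), the swap $a\leftrightarrow b$ is excluded not by the fact that $h$ is an orientation-preserving homeomorphism of~$S$ (that is used only in the final ordering step) but by combining hypothesis~2 with the fact that $h$ sends $\gamma_{1,j}$ to $\gamma_{2,j}$ preserving the cycle's orientation---exactly the argument the paper gives.
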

\begin{proof}
  We first claim that the oriented homeomorphism $h: S \rightarrow S$ lifts
  to an oriented homeomorphism $\widetilde{h}:\wS\rightarrow\wS$.  Indeed,
  if we denote by~$\pi$ the projection $\pi: \wS \rightarrow S$ and apply
  the lifting theorem~\cite[Proposition 1.33]{h-at-02} to $h\circ\pi$, we
  get%
  \footnote{The hypotheses are trivially fulfilled since the universal
    cover has trivial fundamental group.} %
  a continuous map $\widetilde{h}:\wS\rightarrow\wS$ satisfying $\pi\circ
  \wt{h}=h\circ\pi$.  Let $x\in\wS$ and $y=\wt h(x)$.  Similarly, we lift
  $h^{-1}$ to a continuous map~$\wt{h^{-1}}$ such that $\wt{h^{-1}}(y)=x$.
  Then $\wt{h^{-1}}$ and $\wt{h}$ are inverse continuous maps on $\wS$, so
  $\wt{h}$ is a homeomorphism.  Furthermore, $\wt h$ is oriented because
  $h$ is oriented.

  Let $\wt g$ be the lift of a geodesic inside a lift~$\wt C$ of~$C$.  The
  homeomorphism~$\wt h$ maps~$\wt g$ into a possibly different lift of~$C$.
  However, up to composing $\wt h$ with a deck transformation of~$\wS$, we
  may assume that $\wt h$ maps $\wt g$ into~$\wt C$.  Furthermore, $\wt g$
  and its image by~$\wt h$ have the same orientation in~$\wt C$, because
  otherwise the endpoints of each lift of~$\Gamma_1^C$ in~$\wt C$ would be
  exchanged under~$\wt h$, which is not the case since $h$ preserves the
  homotopy classes of the cycles in~$\Gamma_1^C$.  Therefore, $\wt h$ maps
  $\wt g$ to a path in~$\wC$ with the same source and target.  Since $\wt
  h$ is oriented, the orderings of the cycles in $\wt\Gamma_1^C$
  and~$\wt\Gamma_2^C$, from left to right in~$\wt C$, are the same.  It
  follows that they are also the same in~$C$.
\end{proof}

\subsection{End of Proof}

\begin{proof}[Proof of Theorem~\ref{T:stable}]
  Let $P_v$ and~$P_e$ be incident vertex and edge polygons, respectively,
  and let $p$ be the path that is their common boundary.  We first build an
  isotopy of the surface such that, when restricting to~$p$, the image of
  each cycle in~$\Gamma_1$ is the same as the corresponding cycle
  in~$\Gamma_2$.  For this purpose, note that the restriction of~$\Gamma_1$
  to~$p$ is a finite set of points, and similarly for~$\Gamma_2$;
  furthermore, the numbers of points are the same (by Lemma~\ref{L:homeo}).
  We can easily push the intersection points on~$p$ so that they coincide,
  by an isotopy of~$S$ that is the identity outside a neighborhood of~$p$.
  Lemma~\ref{L:homeo} now implies that, after this isotopy, each arc
  in~$P_e$ corresponds to the same cycle in~$\Gamma_1$ and~$\Gamma_2$.

  We can do this operation for every intersection $p$ of a vertex and an
  edge polygon.  Now, within each edge polygon, the arcs of~$\Gamma_1$ are
  simple, pairwise disjoint and in the same order as the arcs
  of~$\Gamma_2$; thus, there exists a homeomorphism from~$P_e$ to~$P_e$
  that is the identity on its boundary and maps the image of~$\Gamma_1$
  inside~$P_e$ to the image of~$\Gamma_2$ inside~$P_e$.  By Alexander's
  lemma, this homeomorphism is an ambient isotopy.  Now, within each edge
  polygon~$P_e$, the images of~$\Gamma_1$ and~$\Gamma_2$ are the same, and
  each arc corresponds to the same cycle in~$\Gamma_1$ and~$\Gamma_2$.

  Now, within each vertex polygon~$P_v$, the endpoints of the arcs
  of~$\Gamma_1$ and~$\Gamma_2$ coincide; moreover, they form
  combinatorially isomorphic arrangements of arcs (namely, grids)
  inside~$P_v$.  The same argument as above shows that an isotopy of~$P_v$
  maps the arcs of~$\Gamma_1$ to the arcs of~$\Gamma_2$.

  Finally, we have found an ambient isotopy~$i$ of~$S$ that maps each cycle
  $\gamma_{1,j}$ in~$\Gamma_1$ to the corresponding cycle~$\gamma_{2,j}$
  in~$\Gamma_2$, as sets but not necessarily pointwise.  Furthermore, since
  $i(\gamma_{1,j})$ is homotopic to $\gamma_{1,j}$, it is also homotopic to
  $\gamma_{2,j}$, so the ambient isotopy~$i$ preserves the orientations of
  the cycles.
\end{proof}

\section{Isotopies of Graph Embeddings}\label{S:ladeg}

In this section, we prove the following result.
\begin{theorem}\label{T:ladeg}
  Let $G_1$ and~$G_2$ be two graph embeddings of a graph~$G$ on an
  orientable surface~$S$.  Assume that there is an oriented
  homeomorphism~$h$ of~$S$ mapping $G_1$ to~$G_2$.  There exists a
  family~$\Lambda$ of cycles in~$G$ such that the following holds: If, for
  each cycle $\gamma$ in~$\Lambda$, the images of~$\gamma$ in $G_1$
  and~$G_2$ are homotopic, then there exists an ambient isotopy of~$S$
  taking $G_1$ to~$G_2$ pointwise.

  Furthermore, the cycles in~$\Lambda$ use each edge of~$G$ at most four times in
  total and, given only the combinatorial map of~$G_1$ on~$S$, one can
  compute the cycles of~$\Lambda$ in linear time in the complexity of that
  combinatorial map.
\end{theorem}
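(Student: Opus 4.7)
My plan is to reduce Theorem~\ref{T:ladeg} to Corollary~\ref{C:stable} by extracting from $G_1$ a stable family of cycles realized in a tubular neighborhood of $G_1$ (and symmetrically for $G_2$). The cycles of $\Lambda$ will be combinatorial walks in $G$ that, once traced on $G_i$ and perturbed slightly off the graph into the adjacent faces, form an arrangement $\Sigma_i$ on $S$ that is both stable and has trivial orientation-preserving map automorphism group. The hypothesis that every $\gamma\in\Lambda$ has homotopic images in $G_1$ and $G_2$ will transfer, via the perturbation (which keeps each cycle in a tubular neighborhood of its trace in $G_i$), into cycle-by-cycle homotopy of $\Sigma_1$ and $\Sigma_2$. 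The oriented homeomorphism $h$ given by the hypothesis, after a local perturbation inside the tubular neighborhoods, will provide the oriented homeomorphism mapping $\Sigma_1$ to $\Sigma_2$ preserving orientations required by Corollary~\ref{C:stable}. That corollary will then produce a pointwise ambient isotopy $\Sigma_1\to\Sigma_2$, which we convert to a pointwise isotopy $G_1\to G_2$ by using that $G_i$ is canonically recoverable from its arrangement.

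Concretely, I would take $\Lambda$ to contain two types of walks: (a) the facial walks of $G_1$, one per face, read off from its combinatorial map; and (b) for each vertex $v$ of $G$, one or a few short closed walks encoding the rotation at $v$ (i.e.\ the cyclic order of the edges around $v$). The edge-usage bound of $4$ splits evenly: facial walks contribute at most $2$ per edge, since each edge is adjacent to at most two faces, and the vertex walks contribute at most $2$ per edge, since each edge has two endpoints. All these walks can be computed by a standard face-tracing procedure and by scanning the rotation at each vertex, both of which are linear in the combinatorial map of $G_1$, as required.

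The geometric realization $\Sigma_i$ is obtained by tracing each walk of $\Lambda$ in $G_i$ and then pushing it off $G_i$: facial walks are pushed slightly into their face, while each vertex walk becomes a small loop encircling its vertex together with excursions along the incident edges. I would then verify, case by case on the local combinatorics, that no empty $k$-gon with $k\leq 3$ can arise: the small encircling loop around each vertex together with the parallel facial arcs along each edge rules out monogons, bigons, and triangles via the double-counting argument behind Lemmas \ref{lemeuler2} and~\ref{lemeuler}. These encircling loops also break all potential orientation-preserving automorphisms of the arrangement, because each vertex of $G$ becomes the unique point enclosed by its distinguished loop; this is precisely what is needed to invoke Corollary~\ref{C:stable} rather than only Theorem~\ref{T:stable}. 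Once Corollary~\ref{C:stable} delivers a pointwise ambient isotopy $\Sigma_1\to\Sigma_2$, extending it to a pointwise isotopy $G_1\to G_2$ is straightforward: the extra room between the pushed cycles and $G_i$ inside the tubular neighborhood lets us adjust the isotopy via Alexander's Lemma (Lemma~\ref{L:Alexander}) on small disks around each vertex and on thin strips around each edge.

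The main obstacle I anticipate is the careful design of the vertex walks in (b) together with the precise choice of perturbation: both must be tuned so that stability of $\Sigma_i$ holds under the sole input of the combinatorial map of $G_1$, and so that the edge-usage bound of $4$ is respected globally. Degenerate situations --- vertices of degree $0$ or $1$, tree components of $G$, non-cellular embeddings with faces of positive genus, or surfaces of non-negative Euler characteristic where the hyperbolic machinery of Section~\ref{S:stable} is not available --- will require separate treatment, likely by adding a small number of auxiliary cycles to $\Lambda$ within the remaining edge budget, and by appealing to the exceptional-surface case handled in Section~\ref{S:exc}.
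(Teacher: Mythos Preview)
Your construction does not yield a stable family, and this is not a detail to be patched later but a structural failure of the plan.

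First, the vertex walks: any ``small loop encircling a vertex $v$'' of $G_1$, whether realized literally as a circle around~$v$ or as a closed walk of the form $e_1\bar e_1 e_2\bar e_2\cdots$ in~$G$, is null-homotopic on~$S$. But Proposition~\ref{proposition13} (and its analogue in the non-hyperbolic cases) shows that a stable family contains \emph{no} null-homotopic cycle. So these loops cannot belong to~$\Sigma_i$, and with them goes your mechanism for killing the automorphism group and for preventing small $k$-gons near the vertices.

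Second, the facial walks: whenever a face~$f$ of~$G_1$ is a disk, pushing its facial walk slightly into~$f$ produces a simple cycle bounding a smaller disk in~$f$. Nothing else in your arrangement enters the interior of that disk (the vertex loops live near vertices, the other facial walks live in other faces), so you have created an empty $0$-gon. Hence $\Sigma_i$ is not stable as soon as $G_1$ has any disk face --- in particular, whenever $G_1$ is cellular. The paper flags this trap explicitly: Figure~\ref{fig:contreex} exhibits two non-isotopic embeddings on a four-punctured sphere for which all facial cycles are pairwise homotopic and an oriented homeomorphism exists; the text remarks that the ``natural candidate'' of taking all facial cycles fails.

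The paper's construction is quite different from yours. A preprocessing step (Proposition~\ref{P:preproc}) first reduces to the situation where either no face is a disk or $G_1$ is a cut graph. In the former case, the stable family~$\Gamma_1$ is the boundary of a tubular neighborhood of each component of~$G_1$, with one \emph{crossover} introduced along each non-tree edge (Proposition~\ref{P:constr}); the inner disks then have degree~$\ge4$ because the cyclomatic number is~$\ge2$, and the outer faces inherit the (non-disk) topology of the faces of~$G_1$. The automorphism is pinned down not by contractible vertex loops but by a single \emph{fundamental cycle} per component (Proposition~\ref{P:fix} and Lemma~\ref{L:homocycles}), which is non-contractible. The cut-graph case is handled by deleting one edge, applying the previous case, and adding one more fundamental cycle --- this is where the fourth edge-use comes from.
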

Note that, by the homeomorphism condition, the combinatorial maps of~$G_1$
and~$G_2$ on~$S$ have to be the same. We also emphasize that in contrast to
Theorem~\ref{T:stable}, the ambient isotopy we obtain is pointwise.

Conversely, if $G_1$ and~$G_2$ are isotopic (in particular, if there is an
ambient isotopy taking $G_1$ to~$G_2$), there must exist an oriented
homeomorphism mapping one to the other, and the images of any cycle of~$G$
in $G_1$ and~$G_2$ are homotopic.  Therefore, Theorem~\ref{T:ladeg} implies
Ladegaillerie's result~\cite{l-cip1c-84} stated in the introduction, and also:
\begin{corollary}\label{C:ext-cont}
  Let $G_1$ and~$G_2$ be two graph embeddings of a graph~$G$ in the
  interior of an orientable surface~$S$.  Assume that there exists an
  isotopy between $G_1$ and~$G_2$.  Then there exists an ambient isotopy
  of~$S$ between $G_1$ to~$G_2$.
\end{corollary}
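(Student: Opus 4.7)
The plan is to reduce the corollary to Theorem~\ref{T:ladeg} by verifying its two hypotheses directly from the isotopy data. Write $(G_t)_{t\in[0,1]}$ for an isotopy with $G_0 = G_1$ and $G_1 = G_2$, both of which lie in the interior of~$S$ and are in general position.

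First, I would produce the oriented homeomorphism $h:S\to S$ mapping $G_1$ to~$G_2$ that is needed to invoke Theorem~\ref{T:ladeg}. The idea is that the oriented extended combinatorial map of~$G_t$ is a locally constant function of~$t$: around each vertex of~$G_t$, the cyclic ordering of incident edges with respect to the fixed orientation of~$S$ cannot change under a sufficiently small continuous perturbation of the embedding, and similarly the partition of facial cycles into faces, together with the genus and number of boundary components of each face, is a discrete invariant that is stable under small perturbations. Since $[0,1]$ is connected, this oriented extended combinatorial map is independent of~$t$, so the combinatorial maps of~$G_1$ and~$G_2$ are related by an orientation-preserving extended map isomorphism. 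The oriented version of Lemma~\ref{L:extcombtest} then yields the desired oriented homeomorphism~$h$.

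Next, I would verify the cycle-homotopy condition for the family~$\Lambda$ in~$G$ provided by Theorem~\ref{T:ladeg} applied to $(G_1,G_2,h)$. For any cycle $\gamma:S^1\to G$ in~$\Lambda$, composing with the continuous family of embeddings yields a continuous family of cycles $(G_t\circ\gamma)_{t\in[0,1]}$ in~$S$, which is by definition a free homotopy between the images of~$\gamma$ in~$G_1$ and in~$G_2$. Applying Theorem~\ref{T:ladeg} then produces a pointwise ambient isotopy of~$S$ from~$G_1$ to~$G_2$, as desired.

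The main obstacle is making rigorous the local constancy of the oriented extended combinatorial map along the isotopy. Since $(G_t)$ is only required to be continuous, one cannot immediately appeal to a smoothly varying tubular neighborhood, and one must argue that no vertex crosses an edge, no two edges swap position in the cyclic ordering around a vertex, and no face changes its topological type as $t$ varies. A clean route is to cover a small neighborhood of each vertex and each edge of a reference~$G_{t_0}$ by open sets on which the combinatorial data is manifestly preserved under any $C^0$-small perturbation of the embedding, and then use compactness of $[0,1]$ to propagate constancy globally; alternatively, one can first invoke the piecewise-linear approximation results cited in Section~\ref{S:prelim} to replace $(G_t)$ by a piecewise-linear isotopy, reducing to finitely many combinatorial events that can be ruled out one by one.
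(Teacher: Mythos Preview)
Your proposal is correct and follows the same approach as the paper: verify the two hypotheses of Theorem~\ref{T:ladeg} (oriented homeomorphism and cycle homotopies) directly from the isotopy, then invoke that theorem. The paper's own justification is extremely terse---it simply asserts that an isotopy yields an oriented homeomorphism mapping $G_1$ to~$G_2$ and that images of any cycle of~$G$ are homotopic---whereas you spell out the first point via local constancy of the oriented extended combinatorial map together with Lemma~\ref{L:extcombtest}, and you correctly flag the $C^0$-regularity issue as the one place requiring care.
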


In our proof of Theorem~\ref{T:ladeg}, if the input graph embeddings are
piecewise-linear with respect to a fixed triangulation of~$S$ (which we can
assume, after an ambient isotopy, by using techniques as in
Epstein~\cite[Appendix]{e-c2mi-66}), our ambient isotopy can be chosen so
as to be piecewise-linear.  In particular:
\begin{corollary}\label{C:ext-pl}
  Let $G_1$ and~$G_2$ be two piecewise-linear graph embeddings of a
  graph~$G$ in the interior of an orientable surface~$S$.  Assume that
  there exists a (not necessarily piecewise-linear) isotopy between $G_1$
  and~$G_2$.  Then there exists a piecewise-linear ambient isotopy of~$S$
  between $G_1$ to~$G_2$.
\end{corollary}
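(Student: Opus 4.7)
The plan is to trace through the proofs of Theorem~\ref{T:ladeg} and the underlying Theorem~\ref{T:stable}, and to verify that when $G_1$ and $G_2$ are piecewise-linear, every homeomorphism and every intermediate ambient isotopy produced in the argument can be taken to be piecewise-linear. Since the final ambient isotopy delivered by Theorem~\ref{T:ladeg} is built as a composition of these ingredients, it will then be PL.

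First, the hypotheses of Theorem~\ref{T:ladeg} have to be checked. The existence of a (possibly non-PL) isotopy between $G_1$ and $G_2$ implies that their combinatorial maps coincide, since the combinatorial map is a discrete invariant and hence locally constant along a continuous family of embeddings; so by Lemma~\ref{L:extcombtest} and its orientation-preserving refinement, there is an oriented self-homeomorphism of~$S$ taking $G_1$ to $G_2$. Moreover, restricting the isotopy to any closed walk of~$G$ is a homotopy between its images in $G_1$ and $G_2$. Hence Theorem~\ref{T:ladeg} applies and produces an ambient isotopy; what remains is to make it PL.

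The reduction inside Theorem~\ref{T:ladeg} constructs two stable families of cycles $\Gamma_1,\Gamma_2$ lying in tubular neighborhoods of $G_1,G_2$ and then invokes Theorem~\ref{T:stable}; these tubular neighborhoods and auxiliary cycles can all be chosen PL when $G_1,G_2$ are PL. The ambient isotopy produced inside Theorem~\ref{T:stable} is then assembled from three elementary kinds of moves: (i)~the ambient isotopy of Propositions~\ref{proposition13} and~\ref{P:trgle} that puts each $\Gamma_i$ into a polygonal corridor decomposition, (ii)~isotopies supported in small collars of the polygon boundaries that reparametrize each boundary arc so as to match intersection points, and (iii)~within each polygon (a disk), an application of Alexander's Lemma that extends a boundary-fixing homeomorphism to an isotopy of the disk. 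Step~(i) uses the hyperbolic metric only to certify the existence of a stable target configuration; the ambient isotopies it actually involves are iterated bigon and triangle eliminations together with generic perturbations, each of which can be carried out PL on PL input. Step~(ii) is a PL reparametrization of an arc along an annular collar, hence PL.

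The main obstacle is step~(iii): the explicit formula $F(x,t)=(1-t)h(x/(1-t))$ used to prove Lemma~\ref{L:Alexander} is not PL even when $h$ is, so it cannot be used directly. The remedy is to invoke a PL version of Alexander's Lemma, namely that any PL self-homeomorphism of a PL disk that is the identity on the boundary is PL-ambient-isotopic to the identity relative to the boundary. This statement is classical in PL topology and can be proved, for instance, by choosing a triangulation of the disk with a single interior vertex $v$ and compatible with both $h$ and the identity, and then linearly interpolating the vertex positions of the image triangulation to those of the source triangulation while keeping the boundary fixed, a PL avatar of Alexander's radial trick. With this PL Alexander lemma in hand, every step in the proof of Theorem~\ref{T:ladeg} remains in the PL category, and the composition of all the PL isotopies yields the desired PL ambient isotopy between $G_1$ and $G_2$.
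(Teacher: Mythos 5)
Your approach is essentially the one the paper intends: the paper's ``proof'' of Corollary~\ref{C:ext-pl} is the single remark that one can trace through the argument for Theorem~\ref{T:ladeg} and check that each step stays in the PL category when the input is PL, and you carry this out carefully. Your identification of the one place where the argument genuinely leaves the PL category---the radial-extension formula in the proof of Lemma~\ref{L:Alexander}---is exactly the right thing to notice, and the remedy (invoke the classical PL Alexander trick in place of the smooth/topological one) is correct and is what the authors implicitly rely on.

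Two small remarks on the details. First, your sketch of the PL Alexander lemma (triangulate with one interior vertex and linearly interpolate vertex positions) does not quite work as stated: linear interpolation of the vertex of a simplicial map need not give an embedding at intermediate times, since simplices can degenerate or fold over. The standard proof runs instead through the cone/pseudo-radial construction in the PL category (see, e.g., Rourke and Sanderson); since you are invoking a classical theorem rather than proving it, this is not a gap in the argument so much as an inaccurate parenthetical. Second, Alexander's lemma is applied in the paper not only inside Theorem~\ref{T:stable} (your step~(iii)) but also at the end of the proof of Theorem~\ref{T:ladeg} and inside the preprocessing Lemmas~\ref{L:preproc-disk} and~\ref{L:preproc-tree}; in those places the boundary-fixing homeomorphism to which Alexander's lemma is applied is a restriction of the given (possibly non-PL) homeomorphism~$h$. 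One cannot just apply the PL Alexander trick to a non-PL map, so the correct fix is to first replace~$h$, on each such disk, by a PL homeomorphism fixing the boundary and mapping $G_1\cap D$ to $G_2\cap D$; such a PL homeomorphism exists because $G_1\cap D$ and $G_2\cap D$ are PL graphs with the same combinatorial arrangement in a PL disk, as certified by~$h$. This is a minor point that your write-up leaves implicit, but it completes the reduction to the PL Alexander lemma.
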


For the proof of Theorem~\ref{T:ladeg}, the difficulty of the construction
resides in the fact that the families $\Lambda_1$ and~$\Lambda_2$ (the
images of $\Lambda$ in $G_1$ and $G_2$) must have small complexity.  In a
sense, $\Lambda_1$ forms a topological decomposition of the tubular
neighborhood of~$G_1$ using cycles.  However, all known topological
decompositions of surfaces made of cycles (like pants
decompositions~\cite{cl-oslos-05}, octagonal
decompositions~\cite{ce-tnpcs-10}, or systems of loops~\cite{ew-gohhg-05})
have worst-case complexity $\Omega(gn)$, where $n$ is the complexity of the
surface; our construction has linear size in the complexity of the object
studied.  We suspect that our construction can be useful for other purposes
as well.

\subsection{Preprocessing Step}\label{S:preproc}

For the proof of Theorem~\ref{T:ladeg}, we assume for simplicity of
exposition that $G_1$ and~$G_2$ are known.  It is immediate to check that,
actually, only the combinatorial map of $G_1$ on~$S$ is needed in the
constructions.

\begin{proposition}\label{P:preproc}
  Without loss of generality, we may assume that (1) $G_1$ has a single
  face, or none of its faces is a disk, and (2) $G_1$ has no vertex of
  degree zero or one. Via the oriented homeomorphism $h$, the same holds
  for $G_2$.
\end{proposition}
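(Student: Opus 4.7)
The plan is to describe three elementary reductions that simultaneously modify $G_1$ and $G_2$ via the oriented homeomorphism $h$, strictly decrease the combinatorial complexity, and preserve the isotopy relation in the sense that $G_1$ and $G_2$ are isotopic if and only if the reduced pair is. Iterating until no reduction applies yields embeddings satisfying both conditions, from which the proposition follows.

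The first reduction removes an isolated vertex $v$ of $G_1$ together with $h(v)$ in $G_2$; these lie in corresponding faces, and since any face is path-connected, sliding a point along a path inside it gives an ambient isotopy of $S$ supported inside that face, so an ambient isotopy between the reduced embeddings extends to one between the original embeddings. The second reduction removes a leaf $v$ of $G_1$ together with its incident edge $e$ and the corresponding pair in $G_2$; the leaf is entirely contained in a single face of the reduced graph, attached at a fixed boundary vertex $u$, and any two such ``needles'' are related by an ambient isotopy supported inside that face that retracts the needle towards $u$ and redeploys it in the new direction. The third reduction applies when $G_1$ has at least two faces and some face $f$ is a disk; assuming $G$ is connected (disconnected cases reduce to connected ones by processing components nested inside each other), there must exist an edge $e$ on $\partial f$ whose other adjacent face $f'$ is distinct from $f$, since otherwise $\bar f$ would be a connected component of $S$ disjoint from the rest of the graph. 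Remove $e$ and its counterpart in $G_2$; the merged face has the topology of $f'$ because attaching a disk along a boundary arc does not change topology, so the number of disk faces strictly decreases.

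For the equivalence of isotopy under the third reduction, I would argue as follows. The forward direction is immediate by restricting the ambient isotopy. For the converse, given an ambient isotopy from $G_1 \setminus e^{(1)}$ to $G_2 \setminus e^{(2)}$, I apply it and observe that the image of $e^{(1)}$ is an arc in the merged face of $G_2 \setminus e^{(2)}$ with endpoints $u, v$, cutting off a disk whose boundary coincides with that of the disk cut off by $e^{(2)}$, because the oriented homeomorphism $h$ identifies the boundary edges of $f^{(1)}$ other than $e^{(1)}$ with those of $f^{(2)}$ other than $e^{(2)}$, and these edges are fixed pointwise after the isotopy between the reduced graphs. The two arcs thus together bound a disk in $S$, inside which Alexander's lemma yields an ambient isotopy sending one arc to the other while fixing the boundary; extending by the identity outside produces the desired ambient isotopy between $G_1$ and $G_2$.

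Since each reduction strictly decreases $|V(G)| + |E(G)|$, iteration terminates. The terminal state satisfies condition (2) by the absence of the first two reductions, and condition (1) by the absence of the third: either only one face remains or no face is a disk. Throughout, each step removes corresponding vertices and edges in $G_1$ and $G_2$, so the restriction of $h$ remains an oriented homeomorphism between the reduced embeddings and the conclusion applies simultaneously to both. The main obstacle is the converse direction of the third reduction, where one must carefully identify the two bounding disks and appeal to Alexander's lemma; the first two reductions are comparatively routine.
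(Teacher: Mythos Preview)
Your overall strategy matches the paper's: the same three reductions (isolated vertices, leaves, and edges separating a disk face from another face), iterated until none applies, with the equivalence of ambient isotopy established at each step. The first two reductions are fine and essentially coincide with the paper's Lemmas on degree-zero and degree-one vertices.

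The gap is in the converse direction of your third reduction. After composing with the ambient isotopy of the reduced graphs, you have two simple arcs $a_1$ (the image of $e^{(1)}$) and $a_2=e^{(2)}$ in the merged face, both with endpoints $u,v$, each bounding a disk together with the same complementary boundary arc~$c$. You then assert that ``the two arcs thus together bound a disk in~$S$'' and apply Alexander's lemma. But nothing prevents $a_1$ and $a_2$ from crossing each other: $a_1$ is the image of $e^{(1)}$ under an isotopy you do not control, and there is no reason it should be disjoint from $e^{(2)}$. If they cross, $a_1\cup a_2$ is not a simple closed curve and does not bound a disk, so Alexander's lemma does not apply directly. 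What you \emph{do} know is that each $a_i$ is homotopic rel endpoints to~$c$ (via the disk it bounds), hence $a_1$ and $a_2$ are homotopic rel endpoints to each other; but turning that into an ambient isotopy inside the merged face (which need not be a disk) still requires an argument.

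The paper handles this as follows: since $a_i\cup c$ bounds a disk, one can first isotope $a_i$, with fixed endpoints, into an arbitrarily small neighborhood of~$c$ (by shrinking that disk). Doing this for both $a_1$ and $a_2$ places them inside a common \emph{disk} neighborhood of~$c$, where any two simple arcs with the same endpoints are related by a boundary-fixing homeomorphism (Jordan--Sch\"onflies), which is then an isotopy by Alexander's lemma. This extra ``push both arcs near~$c$'' step is exactly the missing idea in your argument.
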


Intuitively, the proof is simple: Whenever $e$ is an edge of~$G$ bounding
two different faces, at least one of which is a disk, removing $e$ in $G_1$
and~$G_2$ does not change whether $G_1$ and~$G_2$ are isotopic.  Similarly,
removing vertices of degree zero has no effect on the existence of an isotopy.

In more detail, we will need the following two lemmas.
\begin{lemma}\label{L:preproc-disk}
  Let $e$ be an edge of~$G$ bounding two different faces, at least one of
  which is a disk, in the embedding~$G_1$ (and thus also in~$G_2$).  Let
  $G'_1$ and $G'_2$ be the embedded graphs obtained after the removal
  of~$e$.  Then $G_1$ and~$G_2$ are ambient isotopic if and only if $G'_1$
  and~$G'_2$ are ambient isotopic.
\end{lemma}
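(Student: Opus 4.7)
The forward direction is immediate: if $(\psi_t)$ is an ambient isotopy of $S$ with $\psi_1(G_1)=G_2$ respecting the combinatorial correspondence, then, writing $e_2$ for the edge of $G_2$ corresponding to $e$, we automatically have $\psi_1(G'_1)=\psi_1(G_1)\setminus\psi_1(e)=G_2\setminus e_2=G'_2$, so $(\psi_t)$ itself witnesses the ambient isotopy of $G'_1$ and $G'_2$.

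For the reverse direction, fix an ambient isotopy $(\phi_t)$ with $\phi_1(G'_1)=G'_2$. Let $F_1$ be the disk face of $G_1$ bordered by $e$, $F_2$ the other face, and $F_1^{(2)},F_2^{(2)}$ the corresponding faces of $G_2$. Removing $e$ merges $F_1$ and $F_2$ into a single face $\tilde F$ of $G'_1$; let $\tilde F'$ be its counterpart in $G'_2$, so $\phi_1(\tilde F)=\tilde F'$. Let $\gamma_0=\partial F_1\setminus\mathrm{int}(e)$ be the boundary walk of $F_1$ other than $e$; since $\phi_1$ takes $G'_1$ to $G'_2$ realizing the combinatorial correspondence, we have $\phi_1(\gamma_0)=\partial F_1^{(2)}\setminus\mathrm{int}(e_2)=:\gamma_0^{(2)}$, and $\phi_1(F_1)$ is an open disk contained in $\tilde F'$ whose boundary consists of $\phi_1(e)$ and $\gamma_0^{(2)}$. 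The plan is to postcompose $\phi_1$ with a homeomorphism $\eta$ of $S$, isotopic to the identity through an isotopy fixing $G'_2$, such that $\eta(\phi_1(e))=e_2$; the composition $\eta\circ\phi_1$ then maps $G_1$ to $G_2$ via an ambient isotopy.

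The key claim is that $\phi_1(e)$ and $e_2$ are isotopic rel endpoints in $\overline{\tilde F'}$. For homotopy rel endpoints: the arc $\phi_1(e)$ is homotopic rel endpoints to $\gamma_0^{(2)}$ through the disk $\overline{\phi_1(F_1)}$, and $e_2$ is homotopic rel endpoints to $\gamma_0^{(2)}$ through the disk $\overline{F_1^{(2)}}$, so $\phi_1(e)\simeq e_2$ rel endpoints in $\overline{\tilde F'}$. Epstein's theorem~\cite{e-c2mi-66} then upgrades this to an isotopy rel endpoints between the two simple arcs, and the isotopy extension theorem for surfaces produces an ambient isotopy of $\overline{\tilde F'}$ fixing $\partial\tilde F'$ which carries $\phi_1(e)$ to $e_2$; prolonging by the identity outside $\tilde F'$ yields the desired $\eta$ on all of $S$. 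The main technical step is precisely this passage from the homotopy of arcs to an ambient isotopy of $S$ fixing $G'_2$, and it is here that the hypothesis ``$F_1$ is a disk'' is used in an essential way, through the common homotopy target $\gamma_0^{(2)}$ that it makes available.
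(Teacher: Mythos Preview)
Your proof is correct and follows essentially the same route as the paper's: both arguments use the disk face to see that $\phi_1(e)$ and $e_2$ are each homotopic rel endpoints to the complementary boundary walk $\gamma_0^{(2)}$ (called~$c$ in the paper), hence to one another, and then upgrade this homotopy to an ambient isotopy fixing~$G'_2$. The only difference is in the upgrade step: the paper first pushes both arcs into a disk neighborhood of~$c$ and then applies the Jordan--Sch\"onflies theorem together with Alexander's Lemma inside that disk, whereas you invoke Epstein's theorem and the isotopy extension theorem as black boxes; the two are equivalent in content, the paper's version being marginally more self-contained.
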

\begin{proof}
  The direct implication is obvious. Now, assume we have an isotopy~$i$
  mapping $G'_1$ to $G'_2$; we want to deduce that there is an isotopy
  mapping $e_1$ to~$e_2$ (the images of $e$ in $G_1$ and $G_2$).  By
  composition with~$i$, we may assume that $G'_1=G'_2$.  By the existence
  of~$h$, we know that $e_1$ and~$e_2$ are arcs with the same endpoints in
  the same face of $G'_1=G'_2$; furthermore, that face is split into two
  pieces, one of which is a disk, by $e_1$ (resp., $e_2$).

\begin{figure}[htb]
  \centering
  \def\svgwidth{10cm}
\begingroup%
  \makeatletter%
  \providecommand\color[2][]{%
    \errmessage{(Inkscape) Color is used for the text in Inkscape, but the package 'color.sty' is not loaded}%
    \renewcommand\color[2][]{}%
  }%
  \providecommand\transparent[1]{%
    \errmessage{(Inkscape) Transparency is used (non-zero) for the text in Inkscape, but the package 'transparent.sty' is not loaded}%
    \renewcommand\transparent[1]{}%
  }%
  \providecommand\rotatebox[2]{#2}%
  \ifx\svgwidth\undefined%
    \setlength{\unitlength}{437.9671959bp}%
    \ifx\svgscale\undefined%
      \relax%
    \else%
      \setlength{\unitlength}{\unitlength * \real{\svgscale}}%
    \fi%
  \else%
    \setlength{\unitlength}{\svgwidth}%
  \fi%
  \global\let\svgwidth\undefined%
  \global\let\svgscale\undefined%
  \makeatother%
  \begin{picture}(1,0.21411942)%
    \put(0,0){\includegraphics[width=\unitlength]{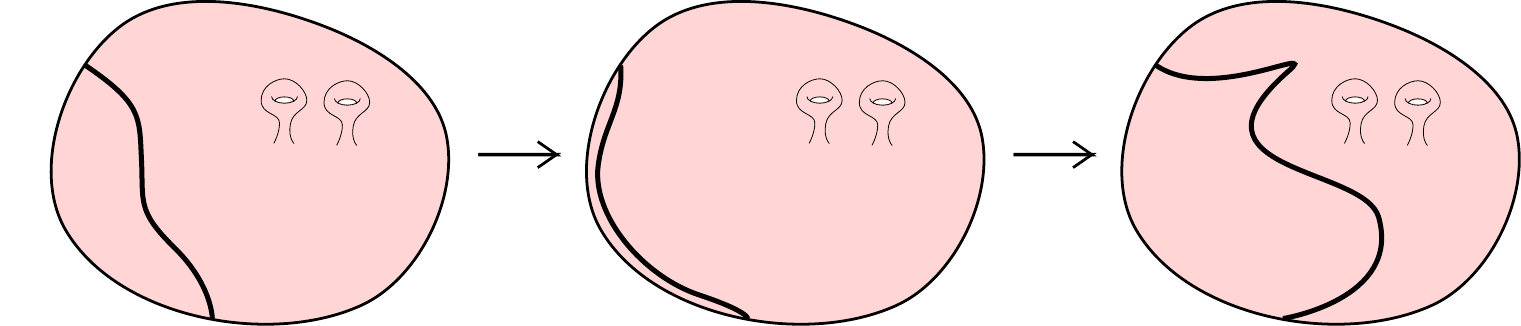}}%
    \put(0.14068126,0.04089621){\color[rgb]{0,0,0}\makebox(0,0)[lb]{\smash{$e_1$}}}%
    \put(0.92443534,0.04272314){\color[rgb]{0,0,0}\makebox(0,0)[lb]{\smash{$e_2$}}}%
    \put(-0.00181949,0.06647325){\color[rgb]{0,0,0}\makebox(0,0)[lb]{\smash{$c$}}}%
    \put(0.34895156,0.06647325){\color[rgb]{0,0,0}\makebox(0,0)[lb]{\smash{$c$}}}%
    \put(0.70154953,0.06647325){\color[rgb]{0,0,0}\makebox(0,0)[lb]{\smash{$c$}}}%
  \end{picture}%
\endgroup%
  \caption{Illustration of the proof of Lemma~\ref{L:preproc-disk}.}
  \label{fig:Preproc}
\end{figure}
The rest of the proof is illustrated in Figure~\ref{fig:Preproc}.  Since
one of the faces bounded by $e_1$ is a disk, $e_1$ can be isotoped (with
fixed extremities) to a neighborhood of the curve~$c$ closing this disk,
and the same goes for $e_2$.  After this isotopy, consider a disk
neighborhood of~$c$ containing both $e_1$ and~$e_2$.  Since these two edges
have the same endpoints in this disk, the Jordan--Sch\"onflies theorem
implies that there is a homeomorphism of the disk, fixed on the boundary,
that maps one to another; then, by Alexander's lemma, that homeomorphism
can be obtained by an isotopy of the disk.
\end{proof}
\begin{lemma}\label{L:preproc-tree}
  Let $v$ be a vertex of~$G$ of degree one, and let $e$ be its incident
  edge.  Let $G'_1$ and $G'_2$ be the embedded graphs obtained after the
  removal of $e$ and~$v$.  Then $G_1$ and~$G_2$ are ambient isotopic if and
  only if $G'_1$ and~$G'_2$ are ambient isotopic.
\end{lemma}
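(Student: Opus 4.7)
The plan is to follow the same pattern as Lemma~\ref{L:preproc-disk}. The forward direction is immediate: any ambient isotopy of~$S$ that carries $G_1$ pointwise to~$G_2$ restricts to one carrying $G'_1$ pointwise to~$G'_2$.

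For the converse, I would first compose with the given ambient isotopy from $G'_1$ to~$G'_2$, reducing to the case $G'_1=G'_2$; call this common embedded graph~$G'$, and let $u$ denote the non-leaf endpoint of~$e$. Let $e_i, v_i$ denote the images of~$e$ and~$v$ in~$G_i$, for $i=1,2$. By Lemma~\ref{L:extcombtest} together with the oriented homeomorphism hypothesis of Theorem~\ref{T:ladeg}, the arcs $e_1$ and $e_2$ lie in the same face~$F$ of~$G'$, both emanate from~$u$, and both terminate at free interior points $v_1, v_2$ of~$F$.

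The main step is then to show that $e_1$ can be retracted, via an ambient isotopy of~$S$ fixing~$G'$ pointwise, to an arbitrarily short stub at~$u$ inside~$F$. For this, I would take a thin closed regular neighborhood~$N$ of~$e_1$ in the closure of~$F$; by making it small enough, one can arrange that $N$ is a closed disk, meets~$\partial F$ along a short arc through~$u$, and satisfies $N\cap G'\subseteq\partial F$. Inside~$N$, one writes down an explicit homeomorphism fixed on~$\partial N$ (and hence on $N\cap G'$) that carries $e_1$ to a tiny stub near~$u$; Alexander's lemma (Lemma~\ref{L:Alexander}) promotes it to an ambient isotopy of~$N$, which extends by the identity to an ambient isotopy of~$S$ fixing~$G'$ pointwise. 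Applying the same retraction to~$e_2$ reduces the problem to two short stubs based at~$u$ in the same local wedge of~$F$, and these are trivially related by an ambient isotopy supported in a small disk at~$u$, again fixing~$G'$. Composing these three ambient isotopies yields the required one from $G_1$ to~$G_2$.

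The main obstacle I expect is the careful choice of the regular neighborhood~$N$ so that $N\cap G'$ is controlled (contained in~$\partial F$ near~$u$), which is what guarantees that the resulting isotopy genuinely fixes~$G'$ pointwise. A few degenerate configurations also need to be checked separately—most notably when~$u$ itself has degree zero or one in~$G'$, or when~$G$ consists of a single edge so that~$G'$ is essentially empty—but the same construction goes through after minor adjustments.
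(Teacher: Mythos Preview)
Your proposal is correct and follows essentially the same strategy as the paper: reduce to $G'_1=G'_2$, use the ambient homeomorphism~$h$ to place $e_1$ and~$e_2$ in the same face and the same wedge at~$u$, and then conclude via Alexander's lemma inside suitable disks. The only difference is cosmetic: you perform the disk argument in three steps (retract $e_1$ to a stub, retract $e_2$ to a stub, match the stubs), whereas the paper does it in one, by stating the topological fact that any two simple arcs in a disk issuing from a common boundary point and otherwise interior can be related by an ambient isotopy fixing the boundary (extend both arcs to simple arcs with common endpoints, apply Jordan--Sch\"onflies, then Alexander).
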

\begin{proof}
  Again, one direction is trivial; the converse can be proved using similar
  ideas as the previous lemma.  Here, the topological statement that is
  used is the following: Let~$p$ be a point on the boundary of a disk, and
  let $e_1$ and~$e_2$ be two simple paths having $p$~as an endpoint and
  intersecting the boundary of the disk exactly at~$p$; then there is an
  ambient isotopy of the disk, fixed on its boundary, that maps $e_1$
  to~$e_2$.  This again follows by an application of the
  Jordan--Sch\"onflies theorem and Alexander's lemma (by first extending
  $e_1$ and~$e_2$ to simple arcs with the same endpoints).
\end{proof}
\begin{proof}[Proof of Proposition~\ref{P:preproc}]
  We show below how to build in linear time a subgraph $G''$ of~$G$
  satisfying the desired properties and such that, if $G''_1$ (resp.,
  $G''_2$) denotes the restriction of~$G_1$ (resp., $G_2$) to~$G''$, then
  $G_1$ and~$G_2$ are ambient isotopic if and only if $G''_1$ and~$G''_2$
  are ambient isotopic.  This is enough to prove the proposition.

  Let $G=(V,E)$.  We initially set $E':=E$, and, for each edge of~$E'$ in
  turn, we remove it from~$E'$ if and only if it is incident to two
  distinct faces of $(V,E')$, at least one of which is a disk (in the
  embedding $G_1$ or~$G_2$).  This is easy to do in linear time, by
  initially labeling each face of $(V,E')$ with its topology (genus and
  number of boundary components) and maintaining this labeling during the
  process.

  Let $G'_1$ and~$G'_2$ be the embeddings of~$(V,E')$ induced by $G_1$
  and~$G_2$, respectively.  Lemma~\ref{L:preproc-disk} implies that $G_1$
  and~$G_2$ are isotopic if and only if $G'_1$ and~$G'_2$ are isotopic.
  Furthermore, if $G'_1$ has at least two faces, one of which is a disk,
  there exists an edge in~$G'_1$ incident to a disk and to another face;
  such an edge would have been removed in the process, which is a
  contradiction.  So the first condition is satisfied.

  Moreover, we can, in linear time, iteratively remove all degree-one
  vertices with their incident edges, until no degree-one vertex remains.
  (Put all degree-one vertices in any list-type data structure; while the
  structure is non-empty, extract any vertex; if it still has degree one,
  remove it with its incident edge; if the opposite vertex on that edge has
  now degree one, add it to the structure; repeat.)
  Lemma~\ref{L:preproc-tree} implies that $G_1$ and~$G_2$ are isotopic if
  and only if these new graph embeddings, $G''_1$ and~$G''_2$, are
  isotopic.

  Finally, if $G''_1$ (and $G''_2$) have isolated vertices, we can safely
  remove them: Since there is a homeomorphism of~$S$ taking $G_1$ to~$G_2$,
  the isolated vertices belong to the same faces in both embeddings.
\end{proof}

Proposition~\ref{P:preproc} leads us to distinguish two cases, leading to
slightly different constructions depending on whether after the
preprocessing, $G_1$ has a single face, or none of its faces are disks.  We
describe the latter first, as the former will build on it; it will be
described in Section~\ref{S:cutgraph}.

\subsection{Proof of Theorem~\ref{T:ladeg} if no face is a disk}\label{S:nodisk}

In this section, we prove Theorem~\ref{T:ladeg} in the special case where
no face of~$G_1$ (or, equivalently, $G_2$) is a disk.  We can assume
without loss of generality that $G_1$ satisfies the properties of
Proposition~\ref{P:preproc}.

\paragraph*{Construction of the Stable
  Family~$\Gamma$.}\label{S:construction}

\ We first build a family~$\Gamma$ of cycles in~$G$ whose images in $G_1$
or~$G_2$ are slight perturbations of stable families $\Gamma_1$
and~$\Gamma_2$.  If the images of each cycle in~$\Gamma$ in $G_1$ and~$G_2$
are homotopic, then this almost implies that $G_1$ and~$G_2$ are isotopic,
which suffices to prove Theorem~\ref{T:ladeg}.  Unfortunately, this is not
entirely true, and we need to test a larger family $\Lambda\supset\Gamma$
for homotopy.
\begin{proposition}\label{P:constr}
  In linear time, we can construct a family of cycles~$\Gamma$ in~$G$ such
  that:
  \begin{itemize}
  \item each edge of~$G$ is used at most twice by all the cycles
    in~$\Gamma$;
  \item there exists a \emph{stable} family~$\Gamma_1$ on~$S$ whose cycles
    are homotopic (by an arbitrarily small perturbation) to the cycles in
    the images of~$\Gamma$ in~$G_1$;
  \item $G_1$ does not meet the interior of the faces of the arrangement
    of~$\Gamma_1$ that are not disks.
  \end{itemize}
\end{proposition}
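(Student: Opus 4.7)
The plan is to let $\Gamma$ consist of the closed walks in $G$ induced by the boundary components of the faces of $G_1$: for each face $F$ of $G_1$ and each boundary component of $F$, include the corresponding closed walk in $G$ as a cycle of $\Gamma$. Because any edge of $G_1$ lies on at most two such boundary components (once per side of the edge, or twice within the same face if both sides of the edge lie in that face), each edge of $G$ is used at most twice by the cycles in $\Gamma$, which gives the first condition. The enumeration is immediate from the extended combinatorial map of $G_1$, so the whole construction runs in linear time.

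For the second condition I will take $\Gamma_1$ to be obtained from the images of $\Gamma$ in $G_1$ by pushing each boundary walk an arbitrarily small distance into the interior of its corresponding face. The resulting curves form a pairwise disjoint family of simple closed curves on $S$, namely the boundary components of a thin tubular neighborhood $N$ of $G_1$. Pairwise disjointness of the cycles rules out all $k$-gons for $k\ge 1$, so stability reduces to the absence of $0$-gons. To rule these out I will argue by contradiction: if a push-off curve $c'$ (associated with a boundary component of some face $F$ of $G_1$) bounded a disk $D \subseteq S$, a case analysis on the side of $c'$ on which $D$ lies would, using a connectedness argument together with the preprocessing hypotheses (no vertex of degree $\le 1$ and no face of $G_1$ is a disk), force either $F$ or a face of $G_1$ on the other side of $c'$ to itself be a disk, contradicting the standing hypothesis of this section.

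For the third condition I will analyze the faces of the arrangement of $\Gamma_1$, which come in two flavors: for each face $F$ of $G_1$, a ``middle'' face homeomorphic to $F$ (namely the interior of $F$ with a collar near each boundary component stripped off), which is not a disk in general but is disjoint from $G_1$ since $G_1 \subseteq \partial F$; and a single ``near-graph'' face equal essentially to the interior of the tubular neighborhood $N$, which contains $G_1$. The middle faces satisfy the required condition trivially, so the whole difficulty of condition~3 is concentrated on the near-graph face. Since $N$ is generally not a disk, the plan is to refine the perturbation so as to create additional crossings between the push-offs near each vertex of $G_1$ (this changes only the perturbations used to define $\Gamma_1$, not the cycles in $\Gamma$, so it preserves the edge-usage bound), and then to run a careful local analysis at each vertex and each edge of $G_1$ that uses the absence of degree-one vertices and the non-disk-face assumption to show that the near-graph face is subdivided into small disks covering $G_1$. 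I expect this last step to be the main technical obstacle, since it requires simultaneously maintaining stability and the disk condition while working within the tight constraint that each edge of $G$ is used at most twice.
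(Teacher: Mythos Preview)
Your first two bullet points are fine: taking the facial boundary walks and pushing them slightly into the faces does give a family of pairwise disjoint simple curves using each edge at most twice, and your $0$-gon argument (together with the ``no face is a disk'' hypothesis) goes through. The gap is exactly where you say it is, and it is a real one: you have not actually produced a construction satisfying the third condition.

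With disjoint push-offs the near-graph face of~$\Gamma_1$ is the whole tubular neighborhood~$N$ of~$G_1$, and $N$ is a disk only when the component has cyclomatic number~$0$, which the preprocessing excludes. Your proposed fix --- ``create additional crossings between the push-offs near each vertex'' --- does not work as stated. Consider a component that is a single loop (one vertex, one edge, cyclomatic number~$1$): the two push-offs are parallel curves; any single crossing you introduce between them turns the near-graph region into a $2$-gon, destroying stability. More generally, crossings placed at vertices give you no control over the topology of the near-graph pieces, because what makes that region fail to be a disk is its first Betti number, which is governed by the \emph{edges outside a spanning tree}, not by the vertices.

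The paper's construction supplies exactly this missing idea. For each connected component $(V',E')$ of cyclomatic number at least~$2$, fix a spanning tree $(V',E'')$ and place one ``crossover'' along each edge of $E'\setminus E''$ (the two strands of the tubular boundary swap sides across that edge). Then the near-graph face deformation retracts onto the spanning tree, hence is a single disk, and its boundary meets the arrangement in $2\,|E'\setminus E''|\ge 4$ crossings, so stability survives. Components of cyclomatic number~$1$ are single cycles and must be handled separately (one puts the cycle itself into~$\Gamma_1$, not its tubular boundary), precisely to avoid the $2$-gon phenomenon above. Note also that with crossovers the resulting cycles of~$\Gamma$ are no longer the facial walks of~$G_1$ --- they are walks that switch sides at each non-tree edge --- so your choice of~$\Gamma$ would have to change as well.
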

\begin{proof}
  It is actually simpler to explain the construction of~$\Gamma_1$ first;
  see Figure~\ref{fig:Cycles3} for an example.  For simplicity of notation,
  we let $G_1:=(V,E)$.  Recall that the \emph{cyclomatic number} of a
  connected graph is the minimum number of edges one needs to delete to
  obtain a tree.  Equivalently, it equals its number of edges minus its
  number of vertices plus one.

  \begin{figure}[htb]
    \centering \def\svgwidth{5cm}\begingroup
  \makeatletter
  \providecommand\color[2][]{%
    \errmessage{(Inkscape) Color is used for the text in Inkscape, but the package 'color.sty' is not loaded}
    \renewcommand\color[2][]{}%
  }
  \providecommand\transparent[1]{%
    \errmessage{(Inkscape) Transparency is used (non-zero) for the text in Inkscape, but the package 'transparent.sty' is not loaded}
    \renewcommand\transparent[1]{}%
  }
  \providecommand\rotatebox[2]{#2}
  \ifx\svgwidth\undefined
    \setlength{\unitlength}{1300.575pt}
  \else
    \setlength{\unitlength}{\svgwidth}
  \fi
  \global\let\svgwidth\undefined
  \makeatother
  \begin{picture}(1,0.13115113)%
    \put(0,0){\includegraphics[width=\unitlength]{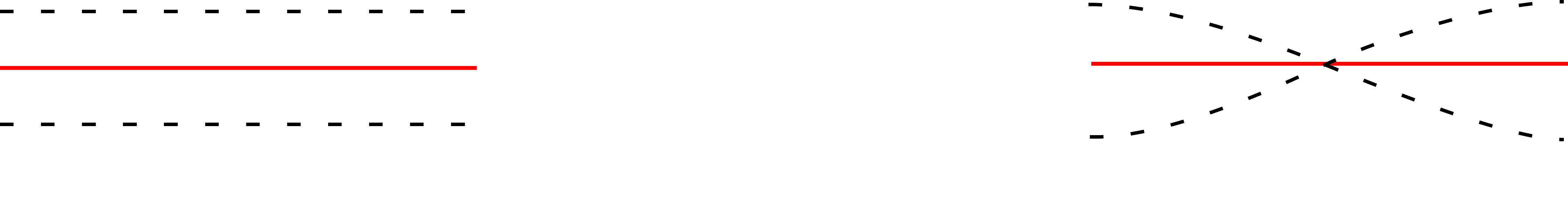}}%
    \put(0.14498532,0.00615113){\color[rgb]{0,0,0}\makebox(0,0)[lb]{\smash{1.}}}%
    \put(0.84181996,0){\color[rgb]{0,0,0}\makebox(0,0)[lb]{\smash{2.}}}%
  \end{picture}%
\endgroup\\
    \vspace{0.5cm}
    
    \includegraphics[width=.4\linewidth]{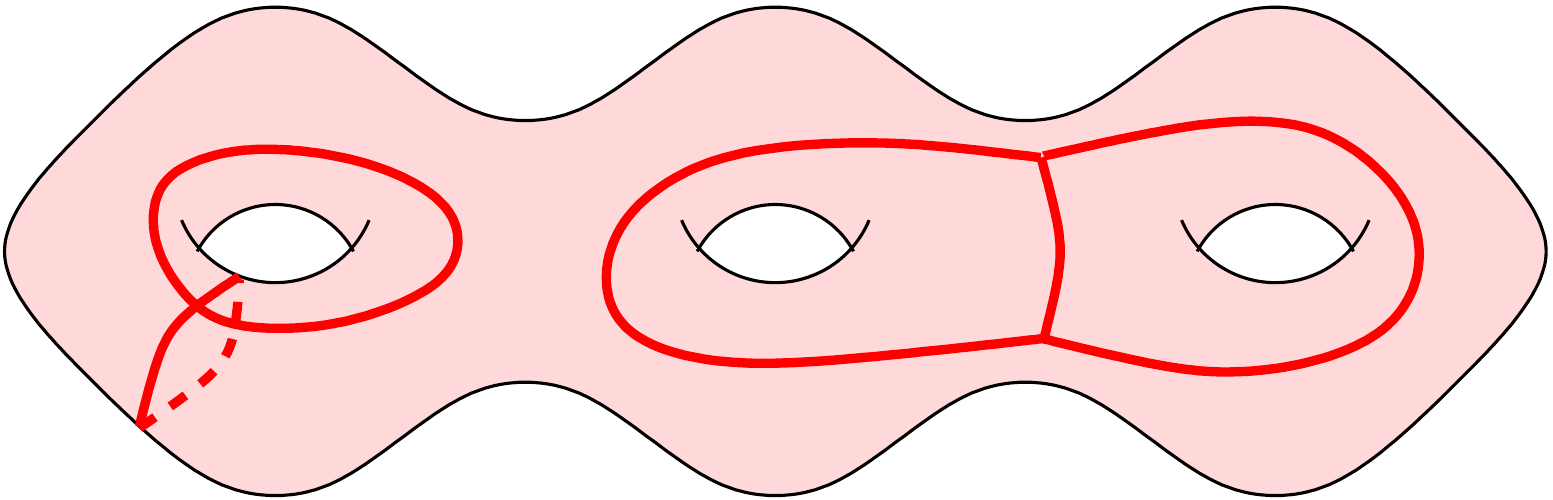}
    \vspace{0.5cm}
    
    \includegraphics[width=.4\linewidth]{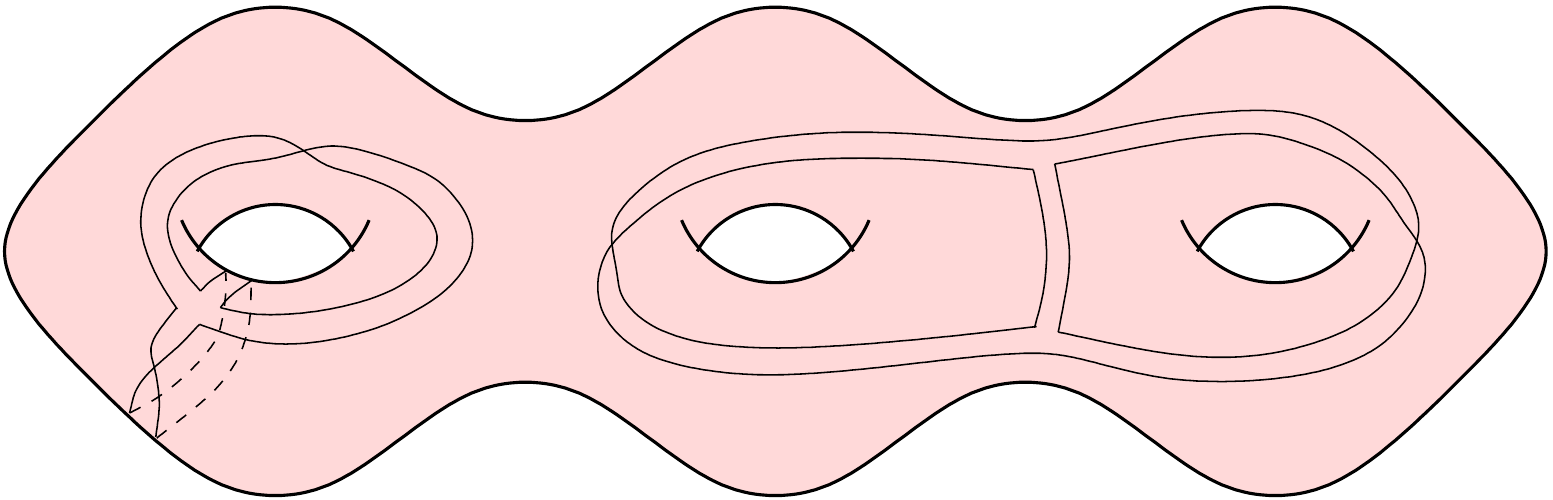}
    
    \caption{Top: A crossover along an edge of $E'\setminus E''$.  Middle:
      An embedded graph~$G_1$ on a genus 3 surface.  Bottom: The
      corresponding family~$\Gamma_1$}
    \label{fig:Cycles3}
  \end{figure}

  Let $(V',E')$ be a connected component of~$(V,E)$.  By
  Proposition~\ref{P:preproc}, we can assume that each vertex has degree at
  least two; in particular, $(V',E')$ has cyclomatic number at least one.
  \begin{itemize}
  \item If $(V',E')$ has cyclomatic number one, then it must be a single
    cycle. In this case, we add that cycle to~$\Gamma_1$.
  \item Otherwise, $(V',E')$ has cyclomatic number at least two.  We add
    to~$\Gamma_1$ the cycles that are the boundaries of a tubular
    neighborhood of~$(V',E')$.  Let $E''$ be the edge set of a spanning
    tree of~$(V',E')$.  For each edge $e\in E'\setminus E''$, we introduce
    a ``crossover'' as in Figure~\ref{fig:Cycles3} (top) on the two pieces
    of~$\Gamma_1$ that run along edge~$e$: Instead of locally having two
    pieces of cycles that run along edge~$e$ without touching it, we now
    have two pieces of cycles in the neighborhood of edge~$e$ that cross at
    a single interior point of~$e$.  Of course, this operation may change
    the number of cycles of~$\Gamma_1$ and create self-intersections.  See
    Figure~\ref{fig:Cycles3}.
  \end{itemize}
  We now prove that $\Gamma_1$ is a stable family. By construction, for
  each connected component, the cycles in $\Gamma_1$ do not intersect the
  chosen spanning tree of~$(V',E')$. Let $f$ be a face of the arrangement
  of the cycles in~$\Gamma_1$.  We have to prove that $f$ is not a $k$-gon
  with $k\le3$. Following the definition of $\Gamma_1$, we observe that $f$
  is either an \emphdef{inner disk}, namely, a disk containing entirely a
  spanning tree $(V',E'')$ of some connected component of~$(V,E)$, or is
  contained entirely in a single face of $(V,E)$.
  \begin{itemize}
  \item Assume first that $f$ is an inner disk, containing the spanning
    tree $(V',E'')$.  By construction, $(V',E')$ has cyclomatic number at
    least two, so $|E'\setminus E''|\ge2$.  Each edge $e\in E'\setminus
    E''$ corresponds to a single crossing between cycles of~$\Gamma_1$, and
    this crossing appears twice along the boundary of~$f$; so $f$ has
    $2|E'\setminus E''|\ge4$ crossings of~$\Gamma$ along its
    boundary.
  \item Otherwise, $f$ has the same topology as a face of~$(V,E)$, and
    therefore cannot be a disk.
  \end{itemize}
  Hence $\Gamma_1$ is a stable family.  It follows from the construction
  that the computation of~$\Gamma_1$ takes linear time in the complexity of
  the combinatorial map of~$G_1$.  The cycles in~$\Gamma_1$ have been
  constructed in a tubular neighborhood of the graph~$G_1$; more precisely,
  by construction, they run along a side of the edges of~$G_1$, swapping
  side whenever they run along an edge not in a spanning tree. Therefore
  (by retracting the tubular neighborhood) they naturally correspond to a
  family of cycles~$\Gamma$ in~$G$, and deducing the family~$\Gamma$
  from~$\Gamma_1$ takes linear time.  All these cycles use each edge of~$G$
  at most twice.  Furthermore, also by construction, $G_1$ is included
  in~$\Gamma_1$ and in the faces of~$\Gamma_1$ that are disks.
\end{proof}

The basic idea of the proof is as follows.  Assume that each cycle
in~$\Gamma_1$ is homotopic to the corresponding cycle in~$\Gamma_2$.
Theorem~\ref{T:stable} implies that, after an ambient isotopy of~$S$, we
can assume that each cycle in~$\Gamma_1$ coincides with the corresponding
cycle in~$\Gamma_2$ not necessarily pointwise, but with the same
orientation.  If this was the case pointwise, then, since $G_i$ is
``surrounded'' by cycles in~$\Gamma_i$, this would imply that $G_1$
and~$G_2$ almost coincide and could be moved one into the other by another
isotopy.  However, the first isotopy does not necessarily map $\Gamma_1$
to~$\Gamma_2$ pointwise, and we need to test that a few more pairs of
cycles are homotopic to ensure that it is the case.

\paragraph*{Fixing the Map Automorphism.}\label{S:fix}

\ We now prove:
\begin{proposition}\label{P:fix}
  In linear time, we can construct a family of cycles $\Lambda$ in $G$ such
  that:
\begin{itemize}
\item each edge of $G$ is used at most thrice by all the cycles in
  $\Lambda$.
\item if we denote by $\Lambda_1$ and $\Lambda_2$ the images of $\Lambda$
  in $G_1$ and $G_2$, if every cycle in $\Lambda_1$ is homotopic to its
  counterpart in $\Lambda_2$, then an ambient isotopy of~$S$ maps
  $\Gamma_1$ to $\Gamma_2$ pointwise.
\end{itemize}
\end{proposition}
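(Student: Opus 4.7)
The plan is to invoke Theorem~\ref{T:stable} on the subfamily $\Gamma\subset\Lambda$ provided by Proposition~\ref{P:constr}, producing an ambient isotopy $\phi$ of $S$ that maps $\Gamma_1$ to $\Gamma_2$ as oriented sets, and then to use the extra cycles of $\Lambda\setminus\Gamma$ to bridge the gap between ``setwise'' and ``pointwise''. The hypotheses of Theorem~\ref{T:stable} are satisfied because the oriented homeomorphism $h$ of Theorem~\ref{T:ladeg} supplies the homeomorphism requirement and the homotopy hypothesis of Proposition~\ref{P:fix}, restricted to $\Gamma$, supplies the second. The composition $\phi^{-1}\circ h$ preserves $\Gamma_1$ cycle by cycle (as oriented sets), and by Lemma~\ref{L:extcombtest} it induces an orientation-preserving map automorphism $\alpha$ of the arrangement of $\Gamma_1$. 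If we can ensure $\alpha$ is the identity, the final paragraph of the proof of Corollary~\ref{C:stable} (local isotopies in neighborhoods of the vertices and edges of the arrangement) upgrades $\phi$ into an isotopy mapping $\Gamma_1$ pointwise to $\Gamma_2$; hence the remainder of the proof is devoted to forcing $\alpha = \mathrm{id}$.

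The structure of $\alpha$ is heavily constrained. Since it fixes each labeled oriented cycle of $\Gamma_1$, its restriction to each such cycle is a cyclic shift of the arrangement vertices lying on it, and these arrangement vertices are exactly the crossovers produced at non-tree edges during the construction of $\Gamma_1$. The shifts on two cycles meeting at a common crossover must be compatible, and so within each connected component of $G$ of cyclomatic number at least two, $\alpha$ is determined by a single integer parameter. For a component of cyclomatic number exactly one, the corresponding cycle of $\Gamma_1$ carries no crossovers at all; any cyclic shift on it can be canceled by an ambient isotopy rotating a tubular neighborhood of that cycle, and such components therefore require no additional cycles in $\Lambda$.

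To pin down the single integer for each higher-cyclomatic component, I would build $\Lambda\setminus\Gamma$ as follows. Fix the spanning tree $T$ used in Proposition~\ref{P:constr}, pick one non-tree edge $e_0$ per such component, and add to $\Lambda$ the fundamental cycle $C_{e_0}$ of $e_0$ in $T + e_0$. Such a $C_{e_0}$ uses $e_0$ once and uses each tree edge on the $T$-path between the endpoints of $e_0$ exactly once, so $\Lambda\setminus\Gamma$ uses each edge of $G$ at most once; combined with the bound of Proposition~\ref{P:constr}, this keeps the total usage by $\Lambda$ at most three per edge. The construction is clearly linear-time from the combinatorial map of $G_1$. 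The rigidity claim is then: if the image of $C_{e_0}$ in $G_1$ is freely homotopic to its image in $G_2$, then the integer governing $\alpha$ on that component must vanish. This should follow from a ``winding'' argument carried out in a tubular neighborhood of $G_1$, whose fundamental group is free of rank equal to the cyclomatic number: a non-zero shift at the crossover at $e_0$ would modify the reduced word associated to the image of $C_{e_0}$ by inserting or removing the generator corresponding to $e_0$, contradicting the assumed homotopy.

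The main obstacle is making this winding argument fully rigorous: one must lift $C_{e_0}$ to the universal cover of the tubular neighborhood of $G_1$, carefully track orientations and the way the crossover at $e_0$ interacts with the arcs of $\Gamma_1$ in its vicinity, and rule out accidental cancellations in the free word. A secondary point, which I would argue separately, is the ``compatibility of shifts'' invoked above: it follows from the fact that each cycle of $\Gamma_1$ threads through crossovers in a combinatorial pattern inherited from the tree/non-tree structure of $G_1$, so that fixing the shift at a single crossover determines it everywhere in the component. The remaining arguments are routine bookkeeping.
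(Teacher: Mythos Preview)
Your construction of~$\Lambda$ and the overall plan coincide with the paper's: take $\Lambda=\Gamma\cup\Phi$ where $\Phi$ contains one fundamental cycle~$C_{e_0}$ per component of cyclomatic number at least two, invoke Theorem~\ref{T:stable} on~$\Gamma$, and use~$\Phi$ to force the induced orientation-preserving map automorphism~$\alpha$ of~$\Gamma_1$ to be the identity. The edge-usage bound and the linear-time claim are also correct.

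The genuine gap is in the step forcing $\alpha=\mathrm{id}$. Your winding argument lives in the fundamental group of a tubular neighborhood of~$G_1$, which is free; but the homotopy hypothesis is free homotopy in~$S$, and the inclusion-induced map on~$\pi_1$ need not be injective, so ``different reduced words'' in the free group does not by itself contradict homotopy in~$S$. The paper bridges this with Lemma~\ref{L:homocycles}: under the preprocessing hypothesis that no face of~$G_1$ is a disk (Proposition~\ref{P:preproc}), the fundamental cycles of a given component are pairwise non-homotopic \emph{in~$S$}. You never invoke this hypothesis, and without it the step fails outright (two fundamental cycles bounding a disk face of~$G_1$ would be homotopic in~$S$). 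With Lemma~\ref{L:homocycles} in hand, the paper also sidesteps your ``single integer parameter'' structural claim: it first observes that $\alpha$ maps each inner disk to itself (since inner disks are the only disk faces of~$\Gamma_1$, again by preprocessing), so if $\alpha$ sent the crossover~$v$ at~$e_0$ to a different crossover~$v'$, the isotopy would carry~$C_{e_0}$ to a cycle homotopic in~$S$ to the fundamental cycle through~$v'$, contradicting the lemma; once $\alpha(v)=v$, the orientation pattern of the four edges at~$v$ (two outgoing followed by two incoming in cyclic order) together with connectedness of the arrangement forces $\alpha=\mathrm{id}$ on the whole component.
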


We will need the following rather independent lemma in the course of the
proof.
\begin{lemma}\label{L:homocycles}
  Let $C$ be a family of simple cycles on~$S$, pairwise disjoint except at
  a single point~$p$, where two cycles may or may not cross.  Assume that
  no component of $S\setminus C$ is a disk bounded by one or two cycles.
  Then the cycles in~$C$ are pairwise (freely) non-homotopic.
\end{lemma}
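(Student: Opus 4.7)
The plan is to argue by contradiction. Suppose distinct cycles $\gamma_1, \gamma_2 \in C$ are freely homotopic; I will produce a component of $S \setminus C$ that is a disk bounded by one or two cycles of $C$, contradicting the hypothesis.

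As a preliminary step I would show that no cycle of $C$ is null-homotopic. If some $\gamma \in C$ were null-homotopic, it would bound a disk in $S$; among all disks in $S$ bounded by some cycle of $C$, choose one $D$ minimal under inclusion (such a disk exists because $C$ is finite). Any $\gamma' \in C$ meeting $D^\circ$ satisfies $\gamma' \cap \partial D \subseteq \{p\}$, hence $\gamma' \subset \overline{D}$, so $\gamma'$ is null-homotopic inside the disk $D$ and bounds a strictly smaller sub-disk, contradicting minimality. Therefore $D^\circ \cap C = \emptyset$, making $D^\circ$ a component of $S \setminus C$ that is a disk bounded by a single cycle, against the hypothesis. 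Consequently both $\gamma_1$ and $\gamma_2$ must be essential.

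Next I would rule out a crossing at $p$. Orient $\gamma_1$ and $\gamma_2$ compatibly with the free homotopy, making them homologous; since $S$ is orientable, the algebraic intersection form is antisymmetric, so $\gamma_1 \cdot \gamma_2 = \gamma_1 \cdot \gamma_1 = 0$, whereas a transverse crossing at the unique intersection point $p$ would force $\gamma_1 \cdot \gamma_2 = \pm 1$. Hence $\gamma_1$ and $\gamma_2$ touch at $p$ without crossing. I can then push $\gamma_1$ slightly off $\gamma_2$ in a small disk neighborhood of $p$, on the side already occupied by $\gamma_1 \setminus \{p\}$, obtaining a simple closed curve $\gamma_1'$ isotopic to $\gamma_1$ and disjoint from $\gamma_2$. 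Applying Epstein's theorem to the disjoint essential freely homotopic pair $\gamma_1', \gamma_2$ produces an annulus $A \subset S$ with $\partial A = \gamma_1' \cup \gamma_2$. Undoing the perturbation deforms $\gamma_1'$ back to $\gamma_1$, which touches $\gamma_2$ at the single point $p$; the resulting region $D$ cobounded by $\gamma_1$ and $\gamma_2$ is obtained from $A$ by collapsing an arc on one boundary circle to a point, so $D$ is a topological disk whose boundary $\gamma_1 \cup \gamma_2$ is pinched at $p$.

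To conclude, $D^\circ$ contains no cycle of $C$: any such $\gamma \in C$ would satisfy $\gamma \cap \partial D \subseteq \{p\}$, hence $\gamma \subset \overline{D}$ and would be null-homotopic, contradicting the preliminary step. Therefore $D^\circ$ is a component of $S \setminus C$ that is a disk bounded by the two cycles $\gamma_1, \gamma_2$, contradicting the hypothesis. The main technical obstacle is the appeal to Epstein's theorem together with the un-perturbation analysis, which requires making rigorous the claim that the pinched annulus cobounded by $\gamma_1$ and $\gamma_2$ is indeed a disk; minor additional care is also needed in degenerate ambient topologies (sphere, annulus, disk), where the essential case is vacuous and the contradiction comes directly from the preliminary null-homotopic step.
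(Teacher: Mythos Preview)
Your proof is correct and follows essentially the same route as the paper's: rule out contractible cycles via an innermost/minimal-disk argument, use parity of intersection to see that two homotopic cycles of~$C$ cannot cross transversally at~$p$, perturb to disjointness, invoke Epstein's annulus, and pinch back to obtain a disk face bounded by one or two cycles. Your phrasing via the algebraic intersection form is equivalent to the paper's bigon/even-crossing argument, and your final step is in fact slightly cleaner than the paper's, since you use the preliminary step to conclude directly that $D^\circ$ itself is the offending face rather than merely asserting that some face inside~$D$ works.
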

\begin{proof}
  We will use the fact that two simple homotopic cycles cross transversely
  an even number of times (because they form bigons
  \cite[Proposition~1.7]{fm-pmcg-11}).

  First, no cycle in~$C$ is contractible; otherwise, it would bound a disk
  on the surface.  The cycles inside that disk are all contractible, and
  therefore do not cross at $p$ because of the aforementioned fact.
  Taking an innermost such cycle, we obtain a component of $S\setminus C$
  bounded by one cycle, contradicting the assumption.

  Assume now for the sake of a contradiction that two cycles $c_1$
  and~$c_2$ are homotopic.  They meet at point~$p$ without crossing
  transversely.  After a local perturbation, these cycles become disjoint,
  and therefore bound an annulus \cite[Lemma~2.4]{e-c2mi-66}.  Thus, the
  unperturbed cycles can be viewed as two loops $\ell_1$ and~$\ell_2$ based
  at~$p$ that bound a disk.  There may be other cycles inside that disk,
  but in all cases a face inside it is a disk bounded by one or two cycles,
  which is impossible.
\end{proof}

\begin{proof}[Proof of Proposition~\ref{P:fix}]
  The family $\Lambda$ is the union of the stable family~$\Gamma$ defined
  in Proposition~\ref{P:constr} and of the family~$\Phi$ defined as
  follows.  Recall that in the proof of Proposition~\ref{P:constr}, we
  considered each connected component $(V',E')$ of the graph $G=(V,E)$ in
  turn.  If $(V',E')$ had cyclomatic number at least two, we considered the
  edge set $E''$ of a spanning tree of~$(V',E')$.  A \emphdef{fundamental
    cycle} of~$(V',E')$ is a simple cycle in~$(V',E')$ containing exactly
  one edge in~$E'\setminus E''$.  We put in~$\Phi$ an arbitrary fundamental
  cycle for each connected component $(V',E')$ of cyclomatic number at
  least two.  The fundamental cycles of a connected component $(V',E')$ can
  be extended towards an arbitrary root~$p$ of the spanning tree~$(V',E'')$
  and then slightly perturbed on~$S$ so that they become simple and
  pairwise disjoint except at~$p$, where they may or may not cross.  The
  faces of this new family~$C$ of perturbed cycles correspond to the faces
  of~$(V',E')$.  Moreover, $C$ satisfies the hypotheses of
  Lemma~\ref{L:homocycles}: Indeed, if there is a disk in $S \setminus C$
  bounded by one or two cycles, there must be at least one connected
  component of $G_1$ inside it because no face of $G_1$ is a disk; but then
  this connected component is contractible, which is absurd since the
  preprocessing removed all the contractible components of $G_1$.  Hence,
  the fundamental cycles of any given connected component $(V',E')$ are
  pairwise non-homotopic.

  Clearly the family $\Lambda=\Gamma\cup\Phi$ can be computed in linear
  time and uses each edge of~$G$ at most thrice.  Assume that, for each
  cycle $\lambda$ in~$\Lambda$, the images of~$\lambda$ in $G_1$ and~$G_2$
  are homotopic.  There remains to prove that some isotopy of~$S$ maps
  $\Gamma_1$ to~$\Gamma_2$ pointwise.

  By Proposition~\ref{P:constr}, $\Gamma_1$ is a stable family, and of
  course $\Gamma_2:=h(\Gamma_1)$ as well.  Since each cycle in~$\Gamma_1$
  is homotopic to the corresponding cycle in~$\Gamma_2$,
  Theorem~\ref{T:stable} implies that some isotopy of~$S$ takes $\Gamma_1$
  to~$\Gamma_2$, not necessarily pointwise, but preserving the orientations
  of the cycles.%
  \footnote{Actually, if $S$ has nonnegative Euler characteristic, the
    results in Section~\ref{S:exc} show that the isotopy can be chosen so
    as to be pointwise, which concludes the proof of this proposition.} %
  Therefore, up to composing with this isotopy, we can assume that each
  cycle in~$\Gamma_1$ coincides, as a set, with the corresponding cycle
  in~$\Gamma_2$, and with the same orientation.  Hence, this isotopy
  induces an orientation-preserving map isomorphism~$i$ between $\Gamma_1$
  and $\Gamma_2$, and since $\Gamma_1$ and $\Gamma_2$ have the same
  extended combinatorial maps, $i$ can be viewed as an
  orientation-preserving map automorphism of $\Gamma_1$.  As each cycle in
  $\Gamma_1$ is isotoped to the corresponding cycle in $\Gamma_2$, $i$ maps
  each connected component of $\Gamma_1$ to itself, and it maps each
  crossing of~$\Gamma_1$ to a crossing of~$\Gamma_1$.

  We now want to ensure that $i$ is the identity map automorphism, which
  would imply the existence of a pointwise ambient isotopy between
  $\Gamma_1$ and $\Gamma_2$ (as in the proof of Corollary~\ref{C:stable}).
  However, this is not necessarily the case, as was pictured in
  Figure~\ref{F:contreex2}.

  Let $(V',E')$ be a connected component of~$G$; let $\Gamma'_1$ be the
  arrangement of the cycles of~$\Gamma_1$ corresponding to that connected
  component.  If $(V',E')$ has cyclomatic number one, by construction,
  $\Gamma'_1$ is just a cycle, which $i$ maps to itself, preserving its
  orientation; so $i$ is the identity map automorphism on~$\Gamma'_1$.

\begin{figure}[htb]
\centering
\def\svgwidth{7cm}
\begingroup
  \makeatletter
  \providecommand\color[2][]{%
    \errmessage{(Inkscape) Color is used for the text in Inkscape, but the package 'color.sty' is not loaded}
    \renewcommand\color[2][]{}%
  }
  \providecommand\transparent[1]{%
    \errmessage{(Inkscape) Transparency is used (non-zero) for the text in Inkscape, but the package 'transparent.sty' is not loaded}
    \renewcommand\transparent[1]{}%
  }
  \providecommand\rotatebox[2]{#2}
  \ifx\svgwidth\undefined
    \setlength{\unitlength}{2016.6301551pt}
  \else
    \setlength{\unitlength}{\svgwidth}
  \fi
  \global\let\svgwidth\undefined
  \makeatother
  \begin{picture}(1,1.00017618)%
    \put(0,0){\includegraphics[width=\unitlength]{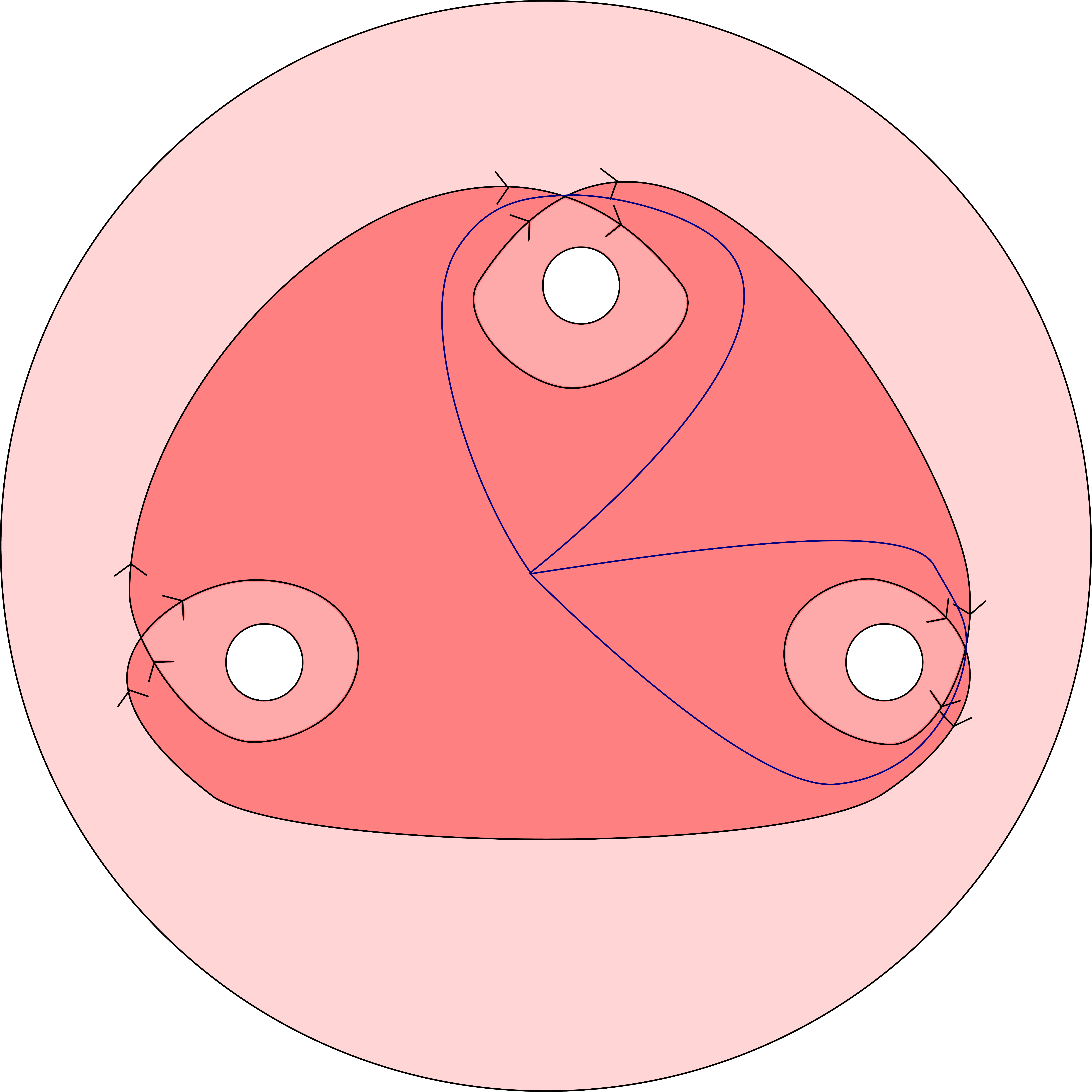}}%
    \put(0.64902127,0.82933445){\color[rgb]{0,0,0}\makebox(0,0)[lb]{\smash{$\Gamma_1$}}}%
    \put(0.51300936,0.84180221){\color[rgb]{0,0,0}\makebox(0,0)[lb]{\smash{$v$}}}%
    \put(0.9074439,0.39522978){\color[rgb]{0,0,0}\makebox(0,0)[lb]{\smash{$v'$}}}%
    \put(0.34186105,0.59018019){\color[rgb]{0,0,0}\makebox(0,0)[lb]{\smash{$\varphi$}}}%
    \put(0.67963777,0.52116281){\color[rgb]{0,0,0}\makebox(0,0)[lb]{\smash{$\varphi'$}}}%
  \end{picture}%
\endgroup
\caption{A graph $\Gamma_1$ drawn on a sphere with four holes. The inner disk
  is the darker part, and the orientations at each vertex are pictured
  according to the orientation of the single cycle. If $i(v)=v'$, $\varphi$
  is sent to $\varphi'$ which is not homotopic to it.}
\label{fig:fix}
\end{figure}

Otherwise, $(V',E')$ has cyclomatic number at least two.  We first note
that $\Gamma'_1$ is connected; indeed, the inner disk~$D$ of~$(V',E')$ is
bounded by all cycles in~$\Gamma'_1$.  Moreover, the faces of $\Gamma_1$
that are disks are exactly the inner disks; so $i$ maps inner disks to
inner disks, and therefore maps $D$ to itself.

Let $v$ be the vertex of~$\Gamma'_1$ corresponding to the cycle~$\varphi$
in~$\Phi=\Lambda\setminus\Gamma$ (see Figure~\ref{fig:fix}).  If $i$ maps
$v$ to another vertex $v'$ of~$\Gamma'_1$, as the inner disk is mapped to
itself, $i$ necessarily maps $\varphi$ to a cycle in the inner disk
crossing $v'$ once, i.e. another fundamental cycle in~$(V',E')$, which is,
as shown above, not homotopic to~$\varphi$; this is a contradiction.  So
$i$ maps~$v$ to itself.  Furthermore, if we orient the four edges incident
to~$v$ with the orientation of the corresponding cycles, $v$ has two
outgoing edges, consecutive in the cyclic order around~$v$, and two
incoming edges, also consecutive.  Since $i$ maps each edge to another edge
with the same orientation, it maps~$v$ to~$v$, and it is an
orientation-preserving map automorphism, it must thus map each edge
incident to~$v$ to itself, with the same orientation.  Since $\Gamma'_1$ is
connected, by propagation we deduce that $i$ is the identity map
automorphism on~$\Gamma'_1$, which concludes the proof.
\end{proof}

\paragraph*{End of Proof of Theorem~\ref{T:ladeg}.}

\ We now conclude the proof of Theorem~\ref{T:ladeg} if none of the faces of
$G_1$ are disks.

According to the hypotheses, for all the cycles $\gamma \in \Lambda$, the
images of $\gamma$ in $G_1$ and $G_2$ are homotopic, which implies by
Proposition~\ref{P:fix} that we can assume that $\Gamma_1=\Gamma_2$
pointwise. Then, each face of $\Gamma_1$ is mapped by $h$ to itself,
because~$h$ is an oriented homeomorphism.

In particular, $G_1\cap\Gamma_1=G_2\cap\Gamma_2$.  In every disk of
$S\setminus\Gamma_1=S\setminus\Gamma_2$, the oriented homeomorphism~$h$ is
the identity on the boundary; therefore, by Alexander's lemma, it is an
ambient isotopy relatively to the boundary.  This gives us an isotopy
between $G_1$ and $G_2$ on every such disk, relatively to
$\Gamma_1=\Gamma_2$.  By gluing these isotopies together along their
boundaries, we get an isotopy of~$S$ mapping~$G_1$ to~$G_2$, because $G_1$
and~$G_2$ are included in the closures of the faces of~$\Gamma_1=\Gamma_2$
that are disks (Proposition~\ref{P:constr}).

Since the family $\Lambda$ covers each edge of~$G$ at most thrice, it has
linear complexity. As it can be computed in linear time, this concludes the
proof of Theorem~\ref{T:ladeg} if none of the faces of $G_1$ are disks.

\subsection{Proof of Theorem~\ref{T:ladeg} if the only face of $G_1$ is a
  disk}\label{S:cutgraph}

By Proposition~\ref{P:preproc}, either (1) $G_1$ has no face that is a
disk, or (2) $G_1$ has a single face, and that face is a disk.  We proved
Theorem~\ref{T:ladeg} in case (1) in the previous section, and shall now
deal with case (2).  In other words, we assume that $G_1$ is a \emphdef{cut
  graph}.

In that case, the above construction does not seem to work: Since inner
disks are not the only faces of $\Gamma_1$ that are disks, there is no
guarantee that an inner disk is mapped to itself in the proof of
Proposition~\ref{P:fix}.  To circumvent this issue, the high-level idea is
the following: We remove one cycle from $G_1$ so that the only face of
$G_1$ is not a disk anymore but a cylinder, in which case the results from
the previous section apply.  We then check that the remaining cycle and its
counterpart in $G_2$ are homotopic, and prove that this guarantees the
existence of an isotopy between $G_1$ and $G_2$.

Since $G_1$ is a cut graph, $S$ has no boundary.  Moreover, $G$ is
connected, and $G_1$ is made of a spanning tree $T=(V',E')$ (as in section
\ref{S:construction}) and $2g$ additional edges.  Let $e$ be one of these
edges (chosen arbitrarily); let $\gamma_e$ be the fundamental cycle with
respect to~$T$ corresponding to edge~$e$.  Let $G'$ be the graph $G$ with
edge~$e$ removed, and let $G'_1$ and~$G'_2$ be the restrictions of $G_1$
and~$G_2$ to~$G'$.  Note that $G_1'$ has a single face, which is a
cylinder.

We can now apply the result of the previous section to $G'_1$ and~$G'_2$:
We obtain a family $\Lambda'$ of cycles in~$G'$ with the property that, if
their images in $G'_1$ and $G'_2$ are homotopic, then $G'_1$ and~$G'_2$ are isotopic.
Furthermore, $\Lambda'$ can be computed in linear time, and uses each edge
of~$G'$ at most thrice.

To prove Theorem~\ref{T:ladeg} for our graph~$G$, we take
$\Lambda:=\Lambda'\cup\{\gamma_e\}$.  Obviously, $\Lambda$ can be computed
in linear time and uses each edge of~$G$ at most four times in total.
Assume now that the images of each cycle of~$\Lambda$ in $G_1$ and~$G_2$
are homotopic.  It suffices to prove that, under this condition, some
ambient isotopy of~$S$ takes $G_1$ to~$G_2$.  Since the images of each
cycle in~$\Lambda'$ in $G'_1$ and~$G'_2$ are homotopic, we may assume that
$G'_1=G'_2$.  There remains to prove that an isotopy of the surface allows
to push the image~$e_1$ of~$e$ in~$G_1$ to the image~$e_2$ of~$e$ in~$G_2$.

The surface obtained after cutting~$S$ along $G'_1=G'_2$ is a cylinder $C$,
and the images of $e$ on $C$ become arcs $a_1$ and $a_2$ with the same
endpoints, one on each boundary, as shown on
Figure~\ref{fig:cutgraph}. (Indeed, if both endpoints were on the same
boundary, $a_1$ would bound two faces on $C$ which would correspond to two
faces on $S$, reaching a contradiction.)  Now, to conclude, we only need to
show that $a_1$ and $a_2$ are isotopic relatively to the boundary of this
cylinder.
  \begin{figure}[htb]
    \centering \def\svgwidth{15cm}
    \begingroup
  \makeatletter
  \providecommand\color[2][]{%
    \errmessage{(Inkscape) Color is used for the text in Inkscape, but the package 'color.sty' is not loaded}
    \renewcommand\color[2][]{}%
  }
  \providecommand\transparent[1]{%
    \errmessage{(Inkscape) Transparency is used (non-zero) for the text in Inkscape, but the package 'transparent.sty' is not loaded}
    \renewcommand\transparent[1]{}%
  }
  \providecommand\rotatebox[2]{#2}
  \ifx\svgwidth\undefined
    \setlength{\unitlength}{3074.44644pt}
  \else
    \setlength{\unitlength}{\svgwidth}
  \fi
  \global\let\svgwidth\undefined
  \makeatother
  \begin{picture}(1,0.23000222)%
    \put(0,0){\includegraphics[width=\unitlength]{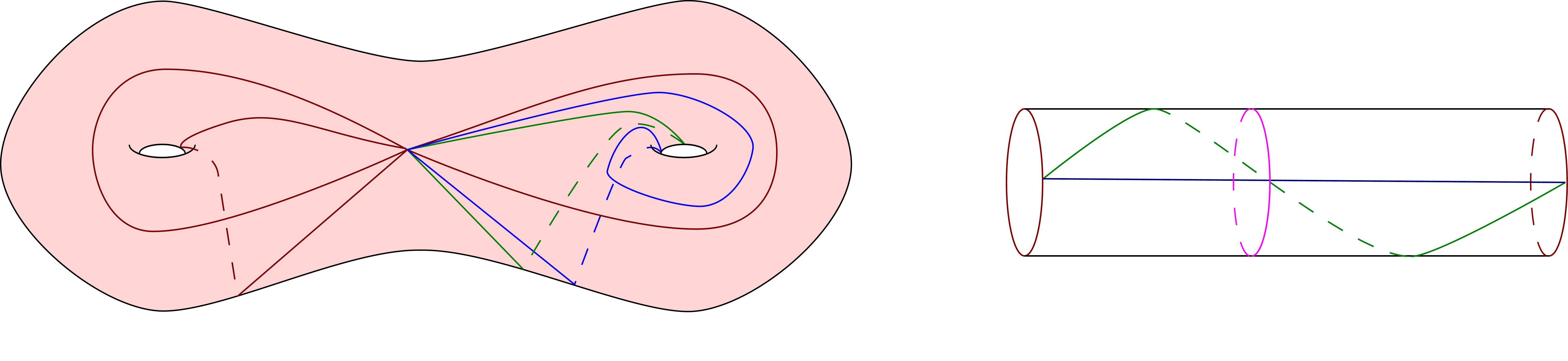}}%
    \put(0.26474846,0.00006861){\color[rgb]{0,0,0}\makebox(0,0)[lb]{\smash{1.}}}%
    \put(0.82304451,-0){\color[rgb]{0,0,0}\makebox(0,0)[lb]{\smash{2.}}}%
    \put(0.10283209,0.19657839){\color[rgb]{0,0,0}\makebox(0,0)[lb]{\smash{$G'_1$}}}%
    \put(0.45189853,0.16713906){\color[rgb]{0,0,0}\makebox(0,0)[lb]{\smash{$e_1$}}}%
    \put(0.42456201,0.15662501){\color[rgb]{0,0,0}\makebox(0,0)[lb]{\smash{$e_2$}}}%
    \put(0.72736662,0.12192864){\color[rgb]{0,0,0}\makebox(0,0)[lb]{\smash{$a_1$}}}%
    \put(0.90505406,0.08512948){\color[rgb]{0,0,0}\makebox(0,0)[lb]{\smash{$a_2$}}}%
    \put(0.6432542,0.17791595){\color[rgb]{0,0,0}\makebox(0,0)[lb]{\smash{$b$}}}%
    \put(0.78729668,0.17791595){\color[rgb]{0,0,0}\makebox(0,0)[lb]{\smash{$b'$}}}%
  \end{picture}%
\endgroup
    \caption{1. The graph $G'_1=G'_2$ and the edges $e_1$ and $e_2$. 2. The
      cylinder $C$ obtained after cutting along $G'_1$, with the arcs $a_1$
      and $a_2$ corresponding to $e_1$ and $e_2$, and the cycles $b$ and
      $b'$. If $a_1$ and $a_2$ are not isotopic, the images of~$\gamma_e$
      in $G_1$ and~$G_2$ cannot be (freely) homotopic}
    \label{fig:cutgraph}
  \end{figure}

  The end of the proof uses some elementary notions of homology, we refer
  to \cite{h-at-02} for the background. Assume, for the sake of a
  contradiction, that $a_1$ and~$a_2$ are non-isotopic arcs relatively to
  the boundary of~$C$.  This implies that they are non-homotopic
  on~$C$~\cite[Theorem~3.1]{e-c2mi-66}.  Hence there exists an
  integer~$n\ne0$ such that $a_2$ is homotopic to $a_1\cdot b^n$, where $b$
  is a loop that is a boundary of the cylinder~$C$.  Since the images
  of~$\gamma_e$ in $G_1$ and~$G_2$ are (freely) homotopic on~$S$, they are
  $\Z$-homologous.  This implies that $b^n$, and thus~$b$, has zero
  $\Z$-homology.  By translating along the cylinder, $b$ is homotopic
  on~$S$ to a simple cycle $b'$ that crosses $e_1$ exactly once and crosses
  $G_1$ nowhere else, as pictured on Figure~\ref{fig:cutgraph}. Hence $b'$
  is a simple cycle on~$S$ that crosses the image of $\gamma_e$ in~$G_1$
  exactly once; thus $b'$ is non-separating, and therefore cannot have zero
  $\Z$-homology.  This contradiction completes the proof of
  Theorem~\ref{T:ladeg}.

\section{Algorithms}\label{S:main-thms}

For Theorems \ref{T:main-surf} and~\ref{T:main-plane}, it suffices to be
able to test the existence of an oriented homeomorphism, and of homotopies
between the cycles in~$\Lambda$, as computed by Theorem~\ref{T:ladeg}, in
the indicated amount of time.  We prove Theorems \ref{T:main-surf}
and~\ref{T:main-plane} in Sections \ref{S:main-surf}
and~\ref{S:main-plane}, respectively.

\subsection{Surfaces: Proof of Theorem~\ref{T:main-surf}}\label{S:main-surf}

Recall that, in Theorem~\ref{T:main-surf}, the input of the algorithm
consists of a fixed graph~$H$ cellularly embedded on a fixed surface~$S$,
and of embeddings $G_1$ and~$G_2$ of a graph~$G$.  Furthermore, $k_1$
(resp.,~$k_2$) denotes the complexity of the combinatorial map of the
arrangement of~$G_1$ (resp.,~$G_2$) with~$H$.

\emph{Homeomorphism test.}  For the case of graphs on surfaces, the
existence of an oriented homeomorphism that maps $G_1$ to~$G_2$ can be
checked in $O(k_1+k_2)$ time.  Indeed, let us choose an arbitrary
orientation on~$S$; this induces an orientation of the combinatorial map
of~$H$, and hence an orientation of the combinatorial map of the arrangement of
$G_i$ and~$H$, for $i=1,2$.  Computing the number of boundary components
of~$S$ in each face, as well as the genus of each face (using the Euler
characteristic), and ``erasing'' the graph~$H$ in both arrangements gives
us oriented combinatorial maps for each~$G_i$.  They are isomorphic if and
only if there exists an oriented isomorphism of~$S$ between $G_1$ and~$G_2$
(Lemma~\ref{L:extcombtest}), and this can be checked in linear time.

\medskip

\emph{Homotopy tests.}  The homotopy tests can also be performed in
$O(k_1+k_2)$ time.  Indeed, recall that the input to the algorithm consists
of the combinatorial maps of the arrangement of $G_1$ and~$H$ on one hand,
and of $G_2$ and~$H$ on the other hand, where $H$ is a fixed cellular graph
embedding; $k_1$ and~$k_2$ denote the complexities of these two maps.  In
$O(k_1+k_2)$ total time, we can compute the cyclically ordered list of
edges of~$H$ crossed by each cycle of~$\Lambda$ in the embeddings $G_1$
and~$G_2$.  This gives us a set of pairs of cycles in the dual graph~$H^*$
of~$H$ that have to be tested for homotopy.  The total complexity of these
cycles is $O(k_1+k_2)$, and $H^*$ has complexity $O(k_1+k_2)$ as well.
Lazarus and Rivaud, and later Erickson and
Whittlesey~\cite{lr-hts-12,ew-tcsr-13} prove that, after a preprocessing
linear in the complexity of the cellular graph~$H^*$, one can test homotopy
of cycles in~$H^*$ in time linear in the complexities of these cycles. (An
earlier paper by Dey and Guha~\cite{dg-tcs-99} claims a similar result,
except for some low-genus surfaces, but Lazarus and Rivaud point out some
problems in their proof.)  These papers address only the case of surfaces
without boundary, but the case of surfaces with boundary is easier, as the
fundamental group is free; alternatively, homotopy tests for cycles on
surfaces with boundary can be performed using an algorithm for surfaces
without boundary by first attaching a handle to each boundary component,
which does not change the outcomes of the homotopy tests.

This concludes the proof of Theorem~\ref{T:main-surf}.

\subsection{Punctured Plane: Proof of
  Theorem~\ref{T:main-plane}}\label{S:main-plane}

We now give our algorithm for the punctured plane model; so let $G_1$
and~$G_2$ be two embeddings of a graph~$G$ in the punctured
plane~$\R^2\setminus P$.  Let $\cal P$ be a set of disjoint open polygons
(for example squares), one around each point of~$P$, that avoid $G_1$
and~$G_2$; also, let $B$ be a large closed square such that $G_1$, $G_2$,
and the closure of~$\cal P$, are in the interior of~$B$.  By compactness,
any isotopy between $G_1$ and~$G_2$, if it exists, must avoid neighborhoods
of~$P$ and stay in a bounded area of the plane; therefore, such an isotopy
exists if and only if such an isotopy exists in $B\setminus\cal P$.  In
other words, since $B\setminus\cal P$ is a surface with boundary, we are
exactly in the topological setting of the previous sections, except that
the input to the algorithm is given in a different form.

As above, our algorithm relies on two subroutines: a test for the existence
of an oriented homeomorphism, and a test for homotopy between cycles.  We
actually give two algorithms for the latter problem, because, depending on
the ratio between $k_1+k_2$ and~$p$, one is faster than the other.

\medskip

\emph{Homeomorphism test}.  To test whether there exists an oriented
homeomorphism of the plane that maps $G_1$ to~$G_2$ in~$\R^2\setminus P$,
we compute the oriented combinatorial map of~$G_1$ in the punctured plane
using a sweep-line algorithm for $G_1\cup P$ in $O((k_1+p)\log(k_1+p))$
time~\cite{bo-arcgi-79}.  Then we apply the same procedure with~$G_2$
instead of~$G_1$, and check that the two resulting oriented combinatorial
maps are isomorphic.

\medskip

\emph{First algorithm for homotopy tests.}  We transform the input into the
surface model.  For this purpose, we compute a triangulation~$T$ of~$B$ in
$O(p\log p)$ time (for example, a Delaunay triangulation).  We can then
easily determine the arrangement of $G_1$ with~$T$ in $O(k_1p)$ time,
because each segment in~$G_1$ has $O(p)$ crossings with~$T$; that
arrangement has complexity $O(k_1p)$.  We can apply the same procedure
to~$G_2$.  After a slight modification of~$T$ that does not affect its
complexity, we may assume that $T$~is a triangulation of the bounding box
minus a set of small square obstacles.  We can then test homotopy of cycles
in time linear in the number of their crossings with~$T$, either by
computing and comparing their cyclically reduced crossing words with~$T$
(since the fundamental group is a free group) or by applying the algorithm
by Lazarus and Rivaud or the one by Erickson and Whittlesey.  This takes
$O((k_1+k_2)p)$ time.

\medskip

\emph{Second algorithm for homotopy tests.}  To get a subquadratic running
time in the input size, we improve the homotopy test by adapting an
algorithm by Cabello et al.~\cite[Section~4]{clms-thpp-04}.  Their
algorithm tests homotopy for paths in the punctured plane, not homotopy for
cycles; however, it can be modified to handle this case also.  More
precisely, we show below that, after $O(p^{1+\varepsilon})$ preprocessing
time (for any~$\varepsilon>0$), one can test homotopy of two (possibly
non-simple) cycles $\gamma_1$ and~$\gamma_2$ of complexities $m_1$
and~$m_2$, respectively, in $O((m_1+m_2)\sqrt{p}\log p)$ time.

The main idea is to replace the triangulation~$T$ in the first algorithm
above with a cellular decomposition of~$B$ that has a nicer property: Each
line in the plane crosses at most $O(\sqrt{p})$ segments of that
decomposition.  Cabello et al.~\cite{clms-thpp-04} show how to compute such
a cellular decomposition in $O(p^{1+\varepsilon})$ time.  The two input
cycles $\gamma_1$ and~$\gamma_2$ cross this decomposition
$O((m_1+m_2)\sqrt{p})$ times.  Then, computing the cyclically ordered lists
of edges of the decomposition crossed by these two cycles takes
$O((m_1+m_2)\sqrt{p}\log p)$ time using ray shooting, as done also in the
paper by Cabello et al.  We conclude using the same method as in the first
algorithm, with the cellular decomposition in place of the
triangulation. This proves Theorem~\ref{T:main-plane}.

\section{Graph isotopies with fixed vertices}\label{S:fixed}

In this section, we briefly indicate how the previous techniques extend to
the graph isotopy problem where, in addition, we require the isotopy to fix
some vertices.  Formally, for $G_1$ and $G_2$ two embeddings of a graph
$G=(V,E)$ on the interior of a surface~$S$ and a set $V_f \subseteq V$ such
that the embeddings of $V_f$ are the same in both graphs, we want to test
whether there exists an isotopy $h_t$ between $G_1$ and $G_2$ such that
$h_t|_{V_f}$ is the identity for all $t \in [0,1]$. We call this the
\emph{fixed vertices graph isotopy problem}.

We will deal with this variant by applying our algorithm in the setting of
surfaces with punctures. Surfaces with punctures are surfaces where we
removed a finite number of points. We will denote such a surface by
$(S,P)$, where $P$ is the set of punctures on the surface without
punctures~$S$. Although such punctured surfaces are not compact, they share
many properties with usual surfaces. In particular, they can be endowed
with a hyperbolic metric if their Euler characteristic is negative, where
we define it by $\chi(S,P)=2-2g-b-|P|$.

\begin{theorem}[{\cite[Theorem~1.2]{fm-pmcg-11}}]
  Consider any surface, perhaps with punctures or boundary. If its Euler
  characteristic is negative, then there exists a complete, finite-area
  Riemannian metric of constant curvature $-1$ on the surface, with the
  property that the boundary of the surface is totally geodesic.
\end{theorem}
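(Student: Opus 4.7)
The plan is to reduce to the closed orientable case and then construct the hyperbolic metric explicitly via a pants decomposition; the existence of the metric then follows from gluing together standard model pieces. First I would prove the statement for a closed orientable surface $S$ of genus $g \geq 2$. Such a surface can be cut along $3g-3$ disjoint simple closed curves into $2g-2$ topological pairs of pants. For each triple $(\ell_1,\ell_2,\ell_3)$ of positive real numbers, there exists a hyperbolic pair of pants with totally geodesic boundary whose boundary lengths are $\ell_1,\ell_2,\ell_3$; this reduces to the classical fact that a right-angled hyperbolic hexagon is determined up to isometry by the lengths of three alternating sides. Fixing any positive lengths $\ell_i$ for the cut curves of $S$ and isometrically gluing the corresponding hyperbolic pants along matching boundaries produces a smooth complete hyperbolic metric on $S$ (smoothness across the seams comes from the geodesic boundary condition, which makes the two sides fit together in a collar neighborhood).

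Second, to handle a compact orientable surface with boundary, I would use the doubling trick. Let $DS$ be the closed surface obtained by gluing two copies of $S$ along their common boundary; one checks that $\chi(DS)=2\chi(S)-(\text{zero from the circle boundary})<0$, so the previous step applies. Choose a pants decomposition of $DS$ that is invariant under the canonical orientation-reversing involution $\sigma$ swapping the two copies of $S$, for example by doubling a cut system of $S$. Assigning the \emph{same} length to each pair of curves exchanged by $\sigma$ yields a hyperbolic metric on $DS$ for which $\sigma$ is an isometry. Its fixed-point set, the common boundary, is then a totally geodesic $1$-submanifold, and restricting the metric to one copy of $S$ gives the desired metric with totally geodesic boundary.

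Third, to incorporate punctures, I would allow degenerate pairs of pants in which one, two, or three boundary components are replaced by cusps. Such hyperbolic pants still exist and are uniquely determined by the remaining boundary lengths, corresponding to right-angled hyperbolic hexagons with some vertices pushed to the ideal boundary $\partial D_2$. Gluing these mixed pants along their remaining geodesic boundaries (and doubling as above for boundary components) produces a complete finite-area hyperbolic metric with totally geodesic boundary and cusps at the punctures. The non-orientable case, not needed elsewhere in this paper, can be handled by lifting to the orientation double cover, building an invariant hyperbolic structure there, and descending.

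The main obstacle is the existence and rigidity of the hyperbolic pair of pants with a prescribed mixture of geodesic boundary components and cusps; this requires the hyperbolic trigonometry of right-angled hexagons allowing ideal vertices, together with verifying that the cusp ends of the glued surface are genuine horocyclic cusps of finite area (so that the global metric is complete and of finite area). Once this model computation is in hand, the gluing arguments are routine, and the invariance of the construction under the doubling involution in step two is automatic from the $\sigma$-symmetric choice of boundary lengths.
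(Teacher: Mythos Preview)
The paper does not prove this theorem at all; it merely quotes it from Farb and Margalit~\cite[Theorem~1.2]{fm-pmcg-11} as background for Section~\ref{S:fixed}. Your sketch is a correct outline of the standard construction---pants decompositions built from right-angled hyperbolic hexagons, with ideal vertices for cusps and a symmetric doubling for boundary---and is in fact essentially the argument given in the cited reference. One small remark: the doubling step is not strictly necessary, since a compact orientable surface with boundary and negative Euler characteristic already admits a pants decomposition in which some pant boundaries are boundary components of~$S$; building the hyperbolic pants directly with those prescribed geodesic boundaries gives the metric without passing through~$DS$. Your doubling argument is nonetheless valid and has the pleasant feature of making the totally geodesic property of~$\partial S$ automatic from the fixed-point set of an isometric involution.
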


Hyperbolic surfaces with punctures share many properties with usual
hyperbolic surfaces; we refer the reader to \cite{fm-pmcg-11} for details.
We will mainly use the following proposition, which we give without proof.
A cycle~$\gamma$ is \emphdef{homotopic into a neighborhood of a puncture}
if, for every neighborhood~$N$ of that puncture, there is a cycle homotopic
to~$\gamma$ that lies entirely within~$N$.
\begin{proposition}[{\cite[Proposition~1.3]{fm-pmcg-11}}]\label{propfm}
  Let $S$ be a hyperbolic surface. If $\alpha$ is a cycle in $S$ that is
  not homotopic into a neighborhood of a puncture, then $\alpha$ is
  homotopic to a unique geodesic cycle~$\gamma$.
\end{proposition}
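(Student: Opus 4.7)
The plan is to work in the universal cover of $S$, which (by the preceding theorem, together with the folklore description of hyperbolic surface covers) is isometric to a totally geodesic subspace of the open hyperbolic disk $D_2$ compactified by $\partial D_2$. The free homotopy class of $\alpha$ corresponds to a conjugacy class in the deck transformation group $\pi_1(S)$, so I would fix a representative deck transformation $\tau$ and study the isometry it induces on $D_2$. Every non-trivial orientation-preserving isometry of the hyperbolic plane is elliptic, parabolic, or hyperbolic. Since $\pi_1(S)$ acts freely and properly discontinuously on $\wS$, no deck transformation other than the identity has a fixed point in $\wS$, which rules out elliptic elements.

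The crucial dichotomy is then between parabolic and hyperbolic. I would argue that $\tau$ is parabolic if and only if $\alpha$ is homotopic into a neighborhood of a puncture. Indeed, a parabolic isometry fixes a single point $p\in\partial D_2$ and preserves the foliation by horocycles centered at $p$; the stabilizer of $p$ in $\pi_1(S)$ is then a cyclic group whose quotient gives a cusp neighborhood of a puncture. Conversely, any cycle that can be pushed into arbitrarily small neighborhoods of a puncture lifts close to a horocycle, and its covering translation fixes the corresponding point at infinity. Thus, under the hypothesis that $\alpha$ is not homotopic into a neighborhood of a puncture, $\tau$ is hyperbolic, with two distinct fixed points on $\partial D_2$ and a unique invariant geodesic $\wt\gamma$ (its axis) on which it acts by translation.

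To produce $\gamma$, I would let $\gamma$ be the projection of $\wt\gamma$ to $S$; this is a closed geodesic, and lifting a homotopy between representatives based at a point $x\in\wS$ (both paths joining $x$ to $\tau(x)$ for a compatible choice of $\tau$) shows that $\gamma$ is freely homotopic to $\alpha$. For uniqueness, if $\gamma'$ is another geodesic cycle homotopic to $\alpha$, then some lift $\wt{\gamma'}$ is invariant under the same $\tau$ (up to conjugation), using the fact mentioned earlier in the paper that homotopic cycles have lifts with matching endpoints on $\partial D_2$. But a hyperbolic isometry admits a unique invariant geodesic, so $\wt{\gamma'}=\wt\gamma$ and $\gamma'=\gamma$.

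The main obstacle is the precise identification of parabolic deck transformations with cycles homotopic into punctures: one must verify that a cycle whose lifts have two distinct endpoints on $\partial D_2$ cannot be isotoped arbitrarily close to any puncture, and conversely that any peripheral loop has a parabolic covering translation. This requires a careful analysis of cusp neighborhoods in the hyperbolic structure supplied by the preceding theorem, together with the observation that deck transformations act freely on $\wS$ to eliminate the elliptic case.
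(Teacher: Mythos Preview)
The paper does not prove this proposition; it explicitly introduces it with ``which we give without proof'' and cites it as Proposition~1.3 of Farb and Margalit~\cite{fm-pmcg-11}. Your outline is the standard argument and is essentially the one found in that reference: classify the deck transformation corresponding to~$\alpha$ as elliptic, parabolic, or hyperbolic; rule out elliptic by freeness of the action; identify parabolic elements with peripheral loops; and in the remaining hyperbolic case project the unique translation axis to obtain~$\gamma$, with uniqueness following from uniqueness of the axis (equivalently, from the fact that homotopic cycles have lifts with matching endpoints on~$\partial D_2$). So your approach matches the intended source rather than diverging from anything in the present paper.

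Two minor remarks. First, for a hyperbolic surface with punctures but without boundary the universal cover is all of~$D_2$, not merely a totally geodesic subspace; the latter phrasing in the paper's preliminaries is meant for the bordered case. Second, the statement as quoted here omits the hypothesis that $\alpha$ be non-null-homotopic, which Farb and Margalit include separately; you implicitly assume $\tau\neq\mathrm{id}$ when invoking the elliptic/parabolic/hyperbolic trichotomy, so it is worth noting that the contractible case is either excluded by convention or absorbed into ``homotopic into a neighborhood of a puncture'' whenever a puncture is present.
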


In a nutshell, the strategy for solving the fixed vertices graph isotopy
problem is to put punctures on the fixed vertices of~$G_1$ (or,
equivalently, $G_2$).  We essentially use the same algorithm as in the
compact case, with some minor modifications listed below.  To keep things
simple, we only deal with the case where the resulting surface with
punctures is hyperbolic; the remaining cases (the once or twice punctured
sphere and the once punctured disk) are simpler and can be dealt with the
same idea coupled with the techniques of Section~\ref{S:exc}.

\paragraph{Isotopies of stables families of cycles.} \ The analogue of
Theorem~\ref{T:stable} still holds in the case of surfaces with punctures:
\begin{theorem}\label{T:stablepunc}
  Let $(S,P)$ be an orientable surface with punctures and let
  $\Gamma_1=(\gamma_{1,1}, \ldots, \gamma_{1,n})$ and
  $\Gamma_2=(\gamma_{2,1}, \ldots \gamma_{2,n})$ be two \emph{stable}
  families of cycles on~$(S,P)$ in general position such that:
  \begin{enumerate}
  \item there exists an oriented homeomorphism~$h$ of $(S,P)$ mapping each
    cycle~$\gamma_{1,j}$ of~$\Gamma_1$ to the corresponding
    cycle~$\gamma_{2,j}$ of~$\Gamma_2$ not necessarily pointwise, but
    preserving the orientations of the cycles, and
  \item each cycle of~$\Gamma_1$ is homotopic to the corresponding cycle
    of~$\Gamma_2$.
  \end{enumerate}
  Then there is an isotopy of $(S,P)$ that maps each cycle of~$\Gamma_1$ to the
  corresponding cycle of~$\Gamma_2$, not necessarily pointwise, but
  preserving the orientations of the cycles.
\end{theorem}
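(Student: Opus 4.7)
The plan is to follow the proof of Theorem~\ref{T:stable} with minimal modifications, using the complete finite-area hyperbolic metric on $(S,P)$ provided by the theorem cited just before Proposition~\ref{propfm}. I would endow $(S,P)$ with that metric (the punctures becoming cusps) and lift to the hyperbolic disk $D_2$, exactly as in Section~\ref{S:follow}. For every cycle $\gamma_{1,j}$ that is neither null-homotopic nor homotopic into a neighborhood of a puncture, Proposition~\ref{propfm} then yields a unique geodesic $g_j$ homotopic to $\gamma_{1,j}$, and hence, by hypothesis~(2), also to $\gamma_{2,j}$.

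The first real step is therefore to rule out both degenerate cases, i.e.\ to prove the analogue of Proposition~\ref{proposition13}. Null-homotopy is excluded verbatim: a contractible cycle lies in a disk, and stability together with Corollary~\ref{coreuler} forces an empty $k$-gon with $k\le3$, contradicting Lemma~\ref{lemeuler}. For a cycle homotopic into a puncture neighborhood, the same argument applies on an annular neighborhood of that puncture: after a small perturbation the cycle and the finitely many others meeting this annulus form an arrangement inside an annulus to which Corollary~\ref{coreuler} applies, again producing a forbidden small $k$-gon. Granted this, the de Graaf--Schrijver Reidemeister-move argument used in Proposition~\ref{proposition13} is purely local and transfers to the punctured setting (alternatively, one can replace each puncture with a small boundary circle and invoke the bordered version): after an ambient isotopy of $(S,P)$, each $\gamma_{i,j}$ lies in an $\varepsilon$-tubular neighborhood of its geodesic $g_j$.

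From this point the argument is entirely local to the polygonal decomposition along the geodesic graph. The Euler-formula lemmas (Lemmas~\ref{lemeuler2} and~\ref{lemeuler}, Corollary~\ref{coreuler}) and Proposition~\ref{P:trgle} depend only on the combinatorics of the arrangement within each vertex/edge polygon and are unaffected by the presence of cusps. Lemma~\ref{L:homeo}, which lifts the oriented homeomorphism $h$ to the universal cover, transfers verbatim: the universal cover of a hyperbolic punctured surface is still a convex subset of $D_2$ on which the deck group acts by orientation-preserving isometries, so the lifting theorem of Hatcher applies exactly as before. Finally, the gluing via Alexander's Lemma (Lemma~\ref{L:Alexander}) inside each edge and vertex polygon produces the desired orientation-preserving ambient isotopy of $(S,P)$.

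The main obstacle is the single new topological phenomenon absent from the compact case, namely cycles that are freely homotopic into a puncture neighborhood; these have to be excluded by hand before invoking Proposition~\ref{propfm}. Once this is dispatched using the annular instance of Corollary~\ref{coreuler}, every remaining ingredient---hyperbolic geodesic representatives, lifting homeomorphisms to $D_2$, corridor orderings, and Alexander-based gluing---carries over with only cosmetic changes.
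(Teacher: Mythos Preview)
Your overall strategy---transport the hyperbolic argument of Theorem~\ref{T:stable} to the punctured setting---is exactly what the paper does, but you make one claim that is simply false, and it is the one new point that the punctured case requires.

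You assert that a stable family on $(S,P)$ cannot contain a cycle homotopic into a neighborhood of a puncture, and you try to derive a forbidden $k$-gon from Corollary~\ref{coreuler} applied to an annular neighborhood of that puncture.  This does not work.  Take a single simple cycle~$\gamma$ encircling a puncture once, with no other cycles in the family.  Then $\gamma$ is peripheral, its arrangement has no crossings at all, and the two faces it bounds are a once-punctured disk (an annulus in $(S,P)$, hence not a $0$-gon) and the rest of the surface.  The family~$\{\gamma\}$ is stable.  Your application of Corollary~\ref{coreuler} fails because the ``graph'' in the annulus has no vertices, so there is nothing to which the Euler-count contradiction can attach.  More generally, even when other cycles intersect the annulus, they contribute only arcs, and you have not argued that the resulting arrangement is a connected $4$-regular graph \emph{inside} the annulus.

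The paper does \emph{not} exclude peripheral cycles; it accommodates them.  When the de~Graaf--Schrijver shortening pushes a peripheral cycle into an $\varepsilon$-neighborhood of its puncture, the paper declares that neighborhood an \emph{annular corridor} (with no underlying geodesic), proves via the argument of Proposition~\ref{P:trgle} that the cycles lying in it are simple and pairwise disjoint, and then observes that the homotopy hypothesis forces $h$ to fix each such puncture, so the ordering argument of Lemma~\ref{L:homeo} still goes through in those corridors.  Once you add this case---annular corridors around punctures, alongside the geodesic corridors you already have---the rest of your outline (Proposition~\ref{P:trgle}, Lemma~\ref{L:homeo}, Alexander gluing) is correct and matches the paper.
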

\begin{proof}
  The proof is essentially the same as for Theorem~\ref{T:stable} (in the
  hyperbolic case). Indeed, all the properties related to hyperbolicity are
  maintained, with a small caveat stated in Proposition~\ref{propfm}: For
  the cycles that are homotopic into the neighborhood of a puncture, there is
  no corresponding geodesic, as it would be infinitely small. But when
  we do local shortenings as in Proposition~\ref{proposition13}, the cycles
  homotopic to the neighborhood of a puncture are pushed into an
  $\varepsilon$-neighborhood of this puncture, and this allows us to define
  an \textit{annular corridor} around it, where, using the same proof as
  Proposition~\ref{P:trgle}, the cycles do not cross. Since each cycle
  $\gamma$ of $\Gamma_1$ is homotopic to $h(\gamma)$, and $h$ preserves the
  orientations of the cycles, each puncture with an annular corridor is
  mapped to itself by $h$. In this annular neighborhood, the orderings of
  Section~\ref{S:tech} are the same because the oriented homeomorphism $h$
  maps the puncture to itself and can thus be extended to the annular
  corridor. The rest of the proof follows identically.
\end{proof}

\paragraph{Preprocessing.} 
\ On a surface with punctures, we extend slightly our definition of graph
embeddings to allow the graphs to hit punctures. Formally, a graph
$G=(V,E)$ is embedded on $(S,P)$ in general position if it is embedded on
$S$ and $G \cap P\subseteq V$. By a slight abuse of language, the
\textit{faces} of such an embedding will denote the faces of the usual
embedding we obtain by removing the punctures under the vertices. The
following lemma is straightforward.

\begin{lemma}
  Let $G=(V,E)$ be embedded as $G_1$ and $G_2$ on a surface $S$, such that
  both embeddings are identical for $V_f \subseteq V$. Then $G_1$ and $G_2$
  are isotopic with $V_f$ fixed if and only if $G_1$ and $G_2$ (seen as
  embeddings on $(S,V_f)$) are isotopic in $(S,V_f)$.
\end{lemma}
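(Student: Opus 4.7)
The plan is to unpack the definitions and verify both directions directly; this lemma is a bookkeeping statement, not a deep theorem, once one adopts the natural convention of this section that an embedding on $(S,V_f)$ carries the combinatorial data of which vertex of $V_f$ sits at which puncture, so that a continuous family of such embeddings preserves this assignment.

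The forward direction is immediate. Given an isotopy $h_t$ on $S$ with $h_t|_{V_f}$ constantly equal to $G_1|_{V_f}$, every $h_t$ sends the vertices of $V_f$ to the designated punctures, and sends edges to arcs whose interiors avoid every vertex (and hence avoid $V_f$). Hence $h_t(G)\cap V_f = h_t(V_f)\subseteq h_t(V)$, so each $h_t$ is a valid embedding on $(S,V_f)$, and the family $(h_t)$ is an isotopy on $(S,V_f)$.

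For the reverse direction, I would observe that, by the convention above, the restriction $t\mapsto h_t|_{V_f}$ of an isotopy on $(S,V_f)$ is a continuous map into the discrete finite set of bijections between $V_f$ and the puncture set; continuity forces it to be constantly equal to its starting value $h_0|_{V_f} = G_1|_{V_f}$. Thus $(h_t)$ already fixes $V_f$ pointwise and so constitutes an isotopy with $V_f$ fixed, as desired. The one point requiring care is precisely this convention of pinning $V_f$-vertices to their respective punctures throughout the deformation: without it, point-pushing moves around nontrivial loops of $S$ would yield continuous families of embeddings on $(S,V_f)$ whose endpoints fail to be fixed-vertex isotopic. Under the stated convention, however, both directions are immediate, so the lemma is indeed straightforward.
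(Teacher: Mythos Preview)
Your proof is correct and follows the paper's approach, whose entire proof is the single sentence ``The same isotopy can be used in both settings.'' You have been more careful than the paper in making explicit the convention that $V_f$-vertices remain pinned to punctures throughout the deformation, and in noting that without it the reverse implication would fail via point-pushing; the paper leaves all of this implicit.
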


\begin{proof}
  The same isotopy can be used in both settings.
\end{proof}
Thus, solving the fixed vertices graph isotopy problem amounts to solving
the graph isotopy problem on surfaces with punctures, where the punctures
are placed on the fixed vertices.

The preprocessing step is slightly altered. We obviously do not want to
remove from the graphs degree-one vertices that are supposed to stay fixed
during the isotopy, so we use the following proposition instead.

\begin{proposition}\label{P:preproc2}
  Without loss of generality, we may assume that (1) $G_1$ has a single
  face, or none of its faces is a disk, (2) $G_1$ has no vertices of degree
  zero, and (3) The only vertices of $G_1$ of degree one are on punctures.
\end{proposition}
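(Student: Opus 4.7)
The plan is to mirror the proof of Proposition~\ref{P:preproc} exactly, replacing its two auxiliary lemmas by versions that respect punctures. Recall that in the original proof the only operations used were (i) deleting an edge whose two sides include a disk face and (ii) deleting a degree-one vertex together with its incident edge. In the punctured setting, both operations remain valid as long as the local disk in which we perform an Alexander-style straightening contains no puncture and no degree-one deletion hits a fixed vertex. With this in mind, I would first prove the following two analogues of Lemmas~\ref{L:preproc-disk} and~\ref{L:preproc-tree}:

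\medskip

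\noindent\textbf{Lemma A.} Let $e$ be an edge of $G$ bounding two distinct faces of $G_1$ (equivalently of $G_2$, via the homeomorphism $h$), and assume that at least one of these faces is a disk containing no puncture of $V_f$. Then $G_1$ and $G_2$ are isotopic in $(S,V_f)$ if and only if the embeddings obtained by deleting $e$ are.

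\noindent\textbf{Lemma B.} Let $v$ be a degree-one vertex of $G$ with $v\notin V_f$, and let $e$ be its incident edge. Then $G_1$ and $G_2$ are isotopic in $(S,V_f)$ if and only if the embeddings obtained by deleting $v$ and $e$ are.

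\medskip

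The proofs are the original ones verbatim: in Lemma~A, the Jordan--Sch\"onflies theorem and Alexander's lemma are applied inside a disk neighbourhood of the closing curve of the puncture-free disk face, so the supporting ambient isotopy is the identity outside that disk and in particular fixes every puncture; in Lemma~B, the disk neighbourhood of $v$ used to carry $e_1$ onto $e_2$ can be taken small enough to avoid $V_f$ since $v\notin V_f$, so again the resulting isotopy is supported in a puncture-free disk and fixes all punctures. No other part of the argument needs to change.

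Once these two lemmas are in hand, the proof proceeds exactly as in Proposition~\ref{P:preproc}: maintaining for each face its topology together with the list of punctures it contains, I first iterate over $E$ and delete any edge bounded by two distinct faces at least one of which is a puncture-free disk; this yields condition~(1) in linear time (after this pass, no edge is bordered by a disk of $(S,V_f)$ and another face). Then I iteratively remove degree-one vertices lying outside $V_f$ via a work list, giving condition~(3). Finally, I remove remaining isolated vertices: an isolated vertex in $V_f$ coincides with a puncture, which is fixed by every admissible isotopy, so deleting such a vertex from the vertex set preserves the isotopy problem verbatim (the puncture remains); an isolated vertex outside $V_f$ can be handled as in the original proof, using that $h$ puts it in the same face of $G_1$ and $G_2$. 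This establishes condition~(2).

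The main subtlety, and the only real difference from the original proof, is the strengthened hypothesis in Lemma~A: a disk face of $G_1$ that happens to contain punctures cannot be collapsed, which is exactly the reason condition~(1) must be read in the punctured sense (a ``disk'' means a disk of $(S,V_f)$, i.e., a disk of $S$ with no puncture inside). Checking that this bookkeeping is compatible with the linear-time implementation is routine, since we only need to augment the face labels with a puncture count, updated when edges are deleted.
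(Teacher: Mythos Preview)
Your approach is correct and coincides with the paper's, which simply remarks that the proof of Proposition~\ref{P:preproc} goes through unchanged except that degree-one vertices on punctures are not removed. Note, though, that since the punctures here are exactly the fixed vertices $V_f\subseteq V$, every puncture sits at a vertex of the current graph throughout the edge-deletion phase, so no face can contain a puncture and the strengthened hypothesis in your Lemma~A is automatically satisfied; the only genuine modification is your Lemma~B, exactly as the paper states.
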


The proof is the same, except that we do not touch the fixed vertices of
degree one, i.e., vertices of degree one on punctures.

\paragraph{Construction of the stable family.} \ Unlike the compact case, we
do not distinguish two cases in the construction of the stable families $\Gamma_1$ and~$\Gamma_2$. It follows closely the one described in
Proposition~\ref{P:constr}, but we need to ensure that there is at most one
puncture in the disks we obtain.

As in the proof of Proposition~\ref{P:constr}, we build the stable
family~$\Gamma_1$ corresponding to~$G_1$ and then map it to~$\Gamma_2$ by
the homeomorphism that maps $G_1$ to~$G_2$.  To build~$\Gamma_1$, we
consider each connected component of~$G_1$ in turn.  Let $(V',E')$ be such
a connected component and let $V'_f=V'\cap V_f$ be the set of vertices on
the punctures.  If $V'_f=\emptyset$, we apply the same procedure as in the
proof of Proposition~\ref{P:constr}.  Otherwise, we compute a spanning
forest $(V',E'')$ of~$(V',E')$, where each tree of the forest contains
exactly one vertex of~$V'_f$.  As in the proof of
Proposition~\ref{P:constr}, we add to the stable family the cycles that are
the boundaries of a tubular neighborhood of~$(V',E')$, but adding a
``crossover'' at each edge in $E'\setminus E''$.

To see that this results in a stable family~$\Gamma_1$, note that (as
before) the faces of the family are of two types:
\begin{itemize}
\item \emph{inner disks}, which contain some pieces of the graph~$G$; such
  faces either contain exactly one puncture in their interior (in case they
  are built as indicated in the previous paragraph), or contain no puncture
  but have degree at least four (as in the proof of
  Proposition~\ref{P:constr});
\item \emph{outer disks}, which have the topology of the faces of the
  graph~$G_1$, and thus, by the preprocessing step, cannot be disks, unless
  there is exactly one such disk, which must therefore be a cut graph of
  the surface~$S$ without boundary (and after disregarding the punctures).
  Such a disk has degree at least four if the surface has genus at least
  one; if the surface~$S$ is a sphere, the hyperbolicity hypothesis implies
  that there are at least three punctures, making at least two crossovers
  and thus an outer face of degree at least four.
\end{itemize}

\paragraph{Fixing the map automorphism.} \ We now need to prove the
counterpart of Proposition~\ref{P:fix}.  Let $(V',E')$ be a connected
component of~$G_1$.  If $V'$ contains no puncture, then we proceed as in
the proof of Proposition~\ref{P:fix}: After testing the homotopy class of a
given fundamental cycle, we can certify that the map automorphism is fixed
(or, in other words, that some ambient isotopy maps the stable cycles
corresponding to~$(V',E')$ \emph{pointwise}).  We now consider the case
where $V'$ contains at least one puncture.

Since each puncture is mapped to itself using the map automorphism, we know
that each inner disk of $(V',E')$ is mapped to itself.  We use an argument
similar to the proof of Proposition~\ref{P:fix}, but the counterpart of
Lemma~\ref{L:homocycles} is simpler in our case.

We choose an arbitrary \emphdef{fundamental path} in~$(V',E')$, which
connects two punctures (possibly identical) using exactly one edge
\emph{not} in the spanning forest $(V',E'')$.  (Such an edge exists, since,
after the preprocessing step, it cannot be that $(V',E')$ is a tree with a
single puncture.)  Let $v$ be the vertex of the arrangement of~$\Gamma_1$
corresponding to this fundamental path. We test whether this fundamental
path is homotopic \emph{with fixed endpoints} to its counterpart in~$G_2$
\emph{on the punctured surface, except that the punctures on the endpoints
  of the path are removed}.

If these paths are not homotopic, clearly $G_1$ and~$G_2$ are not isotopic,
since any ambient isotopy between $G_1$ and~$G_2$ (fixing a subset of the
vertices) yields a homotopy of the fundamental path with fixed vertices.

On the other hand, assume that the images of the fundamental path in~$G_1$
and~$G_2$ are homotopic with fixed endpoints.  Assuming that the cycles are
homotopic in~$\Gamma_1$ and~$\Gamma_2$, we obtain by
Theorem~\ref{T:stablepunc} that some isotopy~$i$ maps each cycle
in~$\Gamma_1$ to the corresponding cycle in~$\Gamma_2$, not necessarily
pointwise, but preserving the orientations of the cycles.  As in the proof
of Proposition~\ref{P:fix}, $i$ can be viewed as an orientation-preserving
map automorphism of~$\Gamma_1$. We recall that $\Gamma'_1$ denotes the
arrangement of cycles of $\Gamma_1$ corresponding to a connected component
of $G$. To conclude, it suffices to prove that, for every connected
component of $G$, this map automorphism
restricts to the identity on~$\Gamma'_1$, and by connectedness
of~$\Gamma'_1$ it actually suffices to prove that it maps $v$ to itself.
But if $i$ were to map $v$ into another vertex, it would map a fundamental
path into a different one.

There just remains to prove that two different fundamental paths are
\textit{not} homotopic with fixed endpoints on this surface. Indeed, we can
perturb them slightly so as to make them disjoint (except at their
endpoints), so if they are homotopic, they bound a degree-two disk. If this
disk contains no other component of $G_1$, then $G_1$ has a face that is a
degree-two disk, which is excluded by the preprocessing step and the
hyperbolicity hypothesis. If it contains another component of $G_1$, this
component is contractible and without fixed vertices, which is again
excluded by the preprocessing.

Therefore $i$ is the identity, which concludes the proof of the counterpart
of Proposition~\ref{P:fix}.

\paragraph{End of Proof.} \ The components of $S \setminus \Gamma_1=S
\setminus \Gamma_2$ are now punctured disks instead of disks, but we made
sure that they only contain a \textit{single} puncture. Hence, since
Alexander's Lemma also applies to once-punctured disks \cite[Section
2.2.1]{fm-pmcg-11}, the end of the proof remains the same.

\paragraph{Conclusion.}  \ In summary, some minor modifications of our
algorithms can also handle the case where some vertices of the graph have
to be fixed during the entire isotopy.  In some cases, we need to use
homotopy tests for paths instead of cycles (which is a computationally
simpler problem, since it reduces to testing contractibility of cycles).
In particular, here are the counterparts of Theorems \ref{T:main-surf}
and~\ref{T:main-plane}:

\begin{theorem}\label{T:main-surf-punct}
  Let $S$ be an orientable surface, possibly with boundary.  Let $H$ be a
  fixed graph cellularly embedded on~$S$.  Let $G_1$ and~$G_2$ be two graph
  embeddings of the same graph~$G$ on~$S$, each in general position with
  respect to~$H$.  Furthermore, let $V_f$ be a set of vertices of~$G$.
  Given the combinatorial map of the arrangement of~$G_1$ with~$H$ (resp.,
  $G_2$ with~$H$), of complexity $k_1$ (resp.,~$k_2$), we can determine
  whether $G_1$ and~$G_2$ are isotopic, fixing the vertices in~$V_f$, in
  $O(k_1+k_2)$ time.
\end{theorem}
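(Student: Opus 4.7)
The plan is to adapt the proof of Theorem~\ref{T:main-surf} using the machinery developed in Section~\ref{S:fixed}. First, I invoke the reduction given at the beginning of Section~\ref{S:fixed}: the fixed-vertex isotopy problem for $G_1,G_2$ on~$S$ is equivalent to the ordinary graph-isotopy problem for $G_1,G_2$, regarded as embeddings on the punctured surface $(S,V_f)$. Since $V_f$ consists of vertices of~$G$, these punctures are already identifiable in the combinatorial maps of the arrangements of each~$G_i$ with~$H$.

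The second step is an oriented homeomorphism test, mirroring the one in Section~\ref{S:main-surf}. I pick an orientation of~$S$, transfer it to each arrangement, and erase~$H$ to obtain an oriented extended combinatorial map of each~$G_i$ on $(S,V_f)$; here I carry the puncture markings on the face data. By the punctured analogue of Lemma~\ref{L:extcombtest}, proved exactly as before, two such maps are isomorphic if and only if there is an oriented homeomorphism of $(S,V_f)$ taking $G_1$ to~$G_2$ (and hence fixing $V_f$); the isomorphism test runs in $O(k_1+k_2)$ time.

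The third step is to compute, in linear time from the combinatorial map of~$G_1$, the family $\Lambda$ of cycles and fundamental paths described in Section~\ref{S:fixed}, and to test homotopy of their images in $G_1$ versus $G_2$: free homotopy for the cycles and homotopy with fixed endpoints for the fundamental paths. Walking the combinatorial map of each arrangement yields in $O(k_1+k_2)$ total time the crossing sequence of each curve of~$\Lambda$ with~$H$, producing cycles and paths in the dual graph~$H^*$ of total complexity $O(k_1+k_2)$. Cycle homotopy is tested by the linear-time algorithms of Lazarus--Rivaud~\cite{lr-hts-12} or Erickson--Whittlesey~\cite{ew-tcsr-13}, which apply also on surfaces with punctures by attaching a handle to each puncture as in the proof of Theorem~\ref{T:main-surf}. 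For fundamental paths, I reduce to cycle contractibility by concatenating $p_1$ with the reverse of $p_2$ in the punctured surface with the two endpoint-punctures additionally removed.

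The main subtlety is that different fundamental paths may have different endpoint-punctures, so the surface on which each cycle-contractibility test takes place varies from test to test. However, the preprocessing of~$H^*$ for the homotopy-test algorithms depends only on the underlying cellular graph and its face-incidence data, so a single linear-time preprocessing suffices and each individual test reduces to checking whether the concatenated word is contractible in the (free) fundamental group of the appropriate punctured surface with boundary. Since by construction of~$\Lambda$ the total length of all tested cycles and paths is $O(k_1+k_2)$, the overall algorithm runs in $O(k_1+k_2)$ time, as required.
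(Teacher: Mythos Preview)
Your proof sketch is correct and follows exactly the approach the paper intends: the paper does not give a standalone proof of Theorem~\ref{T:main-surf-punct} but presents it as the direct consequence of combining the punctured-surface machinery of Section~\ref{S:fixed} (reduction to $(S,V_f)$, modified preprocessing, stable family with one-puncture inner disks, fundamental paths in place of fundamental cycles) with the linear-time homeomorphism and homotopy tests from Section~\ref{S:main-surf}. One small slip: the punctures in~$V_f$ are vertices of~$G$, not face data, so the extended combinatorial map needs no new face markings---any oriented homeomorphism of~$S$ sending $G_1$ to~$G_2$ automatically fixes~$V_f$ pointwise; your treatment of the ``varying surface'' for path tests is also a bit loose, but since each such surface is obtained from $(S,V_f)$ by filling in at most two punctures (equivalently, killing at most two primitive generators of a free group), the contractibility check is indeed linear after a single preprocessing, as you claim.
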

\begin{theorem}\label{T:main-plane-punct}
  Let $P$ be a set of $p$ points in the plane, and let $G_1$ and~$G_2$ be
  two piecewise-linear graph embeddings of the same graph~$G$ in
  $\R^2\setminus P$, of complexities (number of segments) $k_1$ and~$k_2$
  respectively.  Furthermore, let $V_f$ be a set of $q$~vertices of~$G$.
  We can determine whether $G_1$ and~$G_2$ are isotopic in $\R^2\setminus
  P$, fixing the vertices in~$V_f$, in time $O(n^{3/2}\log n)$ time, where
  $n$ is the total size of the input.  In more detail, the running time is,
  for any~$\varepsilon>0$,
\[
\begin{array}{ll}
    O\Big(&(k_1+p+q)\log(k_1+p+q)+(k_2+p+q)\log(k_2+p+q)\ + \\
    & \min\Big\{(k_1+k_2)(p+q), \quad  (p+q)^{1+\varepsilon}+(k_1+k_2)\sqrt{p+q}\log (p+q)\Big\}\Big).  \\
  \end{array}
\]
\end{theorem}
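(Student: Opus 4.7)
The plan is to reduce the fixed-vertex isotopy problem to ordinary isotopy on a surface with more punctures, and then mimic the argument already used for Theorem~\ref{T:main-plane}, replacing $p$ by $p+q$ throughout. Concretely, since the embeddings of $V_f$ are required to coincide in $G_1$ and~$G_2$, I treat each vertex of~$V_f$ as an additional puncture. By the lemma in Section~\ref{S:fixed}, the two embeddings are isotopic fixing~$V_f$ if and only if they are isotopic in $\R^2 \setminus (P \cup V_f)$, viewed as a surface with $p+q$ punctures. The variants of Theorem~\ref{T:ladeg} established in Section~\ref{S:fixed} then produce, in linear time, a collection~$\Lambda$ consisting of a family of cycles of linear total complexity together with, for each connected component of~$G$ containing at least one fixed vertex, an additional fundamental path between two punctures. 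Isotopy is equivalent to the existence of an oriented homeomorphism taking $G_1$ to~$G_2$ plus homotopy of the images of each element of~$\Lambda$, where paths are tested with fixed endpoints on punctures.

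For the homeomorphism test, I run the sweep-line algorithm of Bentley--Ottmann on $G_i \cup P \cup V_f$ for $i=1,2$; this produces oriented combinatorial maps on the surface with $p+q$ punctures in time $O((k_i+p+q)\log(k_i+p+q))$, after which the isomorphism test of Lemma~\ref{L:extcombtest} completes in linear time. For the homotopy tests I reuse the two algorithms from the proof of Theorem~\ref{T:main-plane}, each adapted to $p+q$ punctures. The first builds a triangulation of a bounding box with $P \cup V_f$ as its interior vertices in $O((p+q)\log(p+q))$ time, overlays each $G_i$ with it in $O(k_i(p+q))$ time, and then tests homotopy of cycles (and of paths with fixed endpoints) by comparing (cyclically) reduced crossing words, in total $O((k_1+k_2)(p+q))$ time. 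The second replaces the triangulation by the cellular decomposition of Cabello et al., computable in $O((p+q)^{1+\varepsilon})$ time and enjoying the key property that every line crosses only $O(\sqrt{p+q})$ of its segments; then ray shooting computes the crossing sequences of the cycles and paths in~$\Lambda$ in $O((k_1+k_2)\sqrt{p+q}\log(p+q))$ total time, and these sequences are compared as before. Taking the minimum of the two homotopy strategies and adding the sweep-line cost gives the stated bound.

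The main subtlety is confirming that the fundamental \emph{path} homotopy tests used in Section~\ref{S:fixed} fit within both subroutines. This is not an obstacle in either case: in the triangulation-based algorithm, fixed-endpoint path homotopy reduces to equality of reduced (non-cyclically) crossing words in the free fundamental group of the triangulated planar domain, which is strictly simpler than the cycle version; and in the Cabello et al.\ algorithm, paths are in fact the primary objects handled, so the same ray-shooting machinery applies verbatim. A second point to check is that placing punctures at the vertices of~$V_f$ does not interfere with the Cabello et al.\ preprocessing, whose cellular decomposition only requires a point set to avoid; since the construction depends on~$P \cup V_f$ purely as a set, this is immediate. With these observations the analysis of Theorem~\ref{T:main-plane} carries over directly, yielding Theorem~\ref{T:main-plane-punct}.
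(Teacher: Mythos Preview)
Your proposal is correct and follows essentially the same approach as the paper: reduce to the punctured surface $(S,P\cup V_f)$ via the lemma in Section~\ref{S:fixed}, invoke the fixed-vertex variant of Theorem~\ref{T:ladeg} (which now also produces fundamental paths), and rerun the two homotopy-testing strategies of Section~\ref{S:main-plane} with $p$ replaced by~$p+q$. Your remarks that path homotopy is no harder than cycle homotopy in either strategy, and that the Cabello et al.\ decomposition depends only on the point set $P\cup V_f$, match the paper's justification; the only imprecision is the phrase ``ordinary isotopy in $\R^2\setminus(P\cup V_f)$'', since the graphs still have vertices \emph{on} the new punctures in the sense of the extended embeddings of Section~\ref{S:fixed}, but this does not affect the argument.
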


Moreover, the above proof implies that Corollaries \ref{C:ext-cont}
and~\ref{C:ext-pl} also extend to the case where some vertices of the graph
embeddings have to be fixed: any isotopy between $G_1$ and~$G_2$ extends to
an ambient isotopy, and that ambient isotopy can be chosen piecewise-linear
if $G_1$ and~$G_2$ are piecewise-linear.

Actually, if it were known that two piecewise-linear graph embeddings are
isotopic if and only if they are piecewise-linearly isotopic (with fixed
vertices), our result of Section \ref{S:stable} would extend rather
directly to the case of fixed surfaces, by the construction shown in
Figure~\ref{fig:fixvert}, where each vertex of~$G$ that has to be fixed is
replaced with a cycle of vertices and edges enclosing a boundary component
of the surface: Any piecewise-linear isotopy with fixed vertices in the
original graph corresponds to a piecewise-linear isotopy (without fixed
vertices) in the new graph and conversely, so we can apply our results on
graph isotopies without fixed vertices.  However, this fact seems harder to
prove for arbitrary continuous isotopies.

 \begin{figure}
 \centering
 \def\svgwidth{8cm}
 \begingroup
  \makeatletter
  \providecommand\color[2][]{%
    \errmessage{(Inkscape) Color is used for the text in Inkscape, but the package 'color.sty' is not loaded}
    \renewcommand\color[2][]{}%
  }
  \providecommand\transparent[1]{%
    \errmessage{(Inkscape) Transparency is used (non-zero) for the text in Inkscape, but the package 'transparent.sty' is not loaded}
    \renewcommand\transparent[1]{}%
  }
  \providecommand\rotatebox[2]{#2}
  \ifx\svgwidth\undefined
    \setlength{\unitlength}{1856.77275391pt}
  \else
    \setlength{\unitlength}{\svgwidth}
  \fi
  \global\let\svgwidth\undefined
  \makeatother
  \begin{picture}(1,0.42472887)%
    \put(0,0){\includegraphics[width=\unitlength]{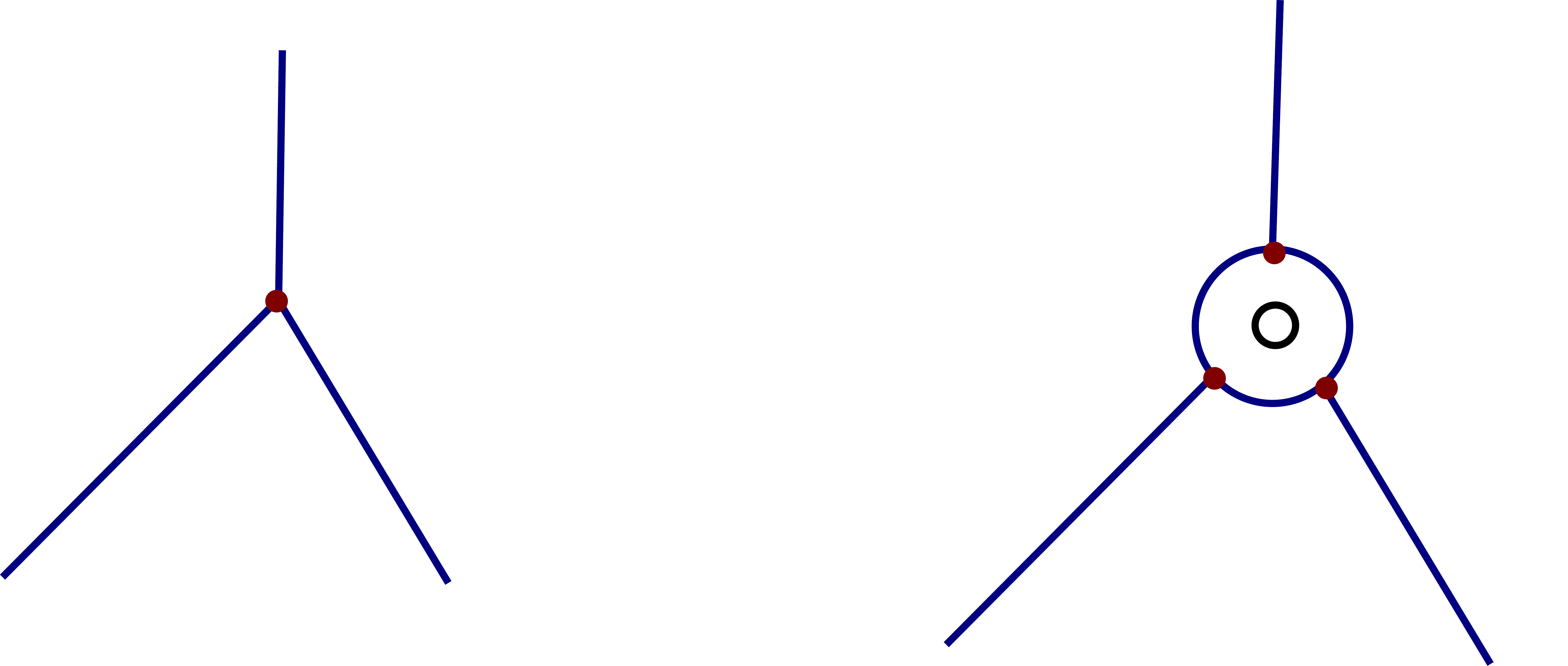}}%
    \put(0.44355378,0.2621616){\color[rgb]{0,0,0}\makebox(0,0)[lb]{\smash{$\rightarrow$}}}%
    \put(0.23428131,0.23015522){\color[rgb]{0,0,0}\makebox(0,0)[lb]{\smash{$v$}}}%
    \put(0.88918104,0.23138625){\color[rgb]{0,0,0}\makebox(0,0)[lb]{\smash{$\gamma_v$}}}%
  \end{picture}%
\endgroup
 \caption{Gadget reducing the case of fixed vertices to the general case: A
   fixed vertex~$v$ of degree~$d$ is replaced with $d$ vertices and edges
   forming a cycle~$\gamma_v$ around a new boundary component of the
   surface}
 \label{fig:fixvert}
 \end{figure}


\section{Exceptional surfaces}\label{S:exc}

In this section, we prove that Theorem \ref{T:stable} also holds for
surface of nonnegative Euler characteristic. We note that in these cases we
obtain a stronger theorem that in the general case, as the ambient isotopy
we obtain maps the families of cycles pointwise. We split the proof in two
parts depending on whether the surface is a plane with boundaries in
Section~\ref{S:sphere} or a torus in Section~\ref{S:torus}. As a foreword,
we note that the results of Lemmas~\ref{lemeuler2},
Corollary~\ref{coreuler}, and Lemma~\ref{lemeuler} and the first result of
Proposition~\ref{proposition13} (no cycle in a stable family is
null-homotopic) also hold in the nonnegative Euler characteristic case.

\subsection{Sphere, disk, and annulus}\label{S:sphere}

Since the Euler characteristic of a surface is $\chi(S)=2-2g-b$, the only
cases where the surface is a plane with boundaries and has nonnegative
Euler characteristic are the sphere, the disk, and the annulus. In these
cases, the proof of Theorem \ref{T:stable} is a simple corollary of
Lemmas~\ref{lemeuler2} and~\ref{lemeuler}.  We even obtain a slightly
stronger statement, because we can take the isotopy to map \emph{pointwise}
$\Gamma_1$ to~$\Gamma_2$:
\begin{theorem}\label{T:stable2}
  Let $S$ be a sphere, a disk or an annulus and let
  $\Gamma_1=(\gamma_{1,1}, \ldots, \gamma_{1,n})$ and
  $\Gamma_2=(\gamma_{2,1}, \ldots \gamma_{2,n})$ be two \emph{stable}
  families of cycles on~$S$ in general position such that:
  \begin{enumerate}
  \item there exists an oriented homeomorphism~$h$ of $S$ mapping each
    cycle~$\gamma_{1,j}$ of~$\Gamma_1$ to the corresponding
    cycle~$\gamma_{2,j}$ of~$\Gamma_2$ not necesssarily pointwise, but
    preserving the orientations of the cycles, and
  \item each cycle of~$\Gamma_1$ is homotopic to the corresponding cycle
    of~$\Gamma_2$.
  \end{enumerate}
  Then there is an ambient isotopy of $S$ mapping each cycle of~$\Gamma_1$
  to the corresponding cycle of~$\Gamma_2$ pointwise.
\end{theorem}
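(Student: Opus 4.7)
The plan is to split on the topology of $S$ and handle the sphere, the disk, and the annulus separately. On the sphere and on the disk every cycle is null-homotopic, while the first part of Proposition \ref{proposition13} (noted in the foreword to this section to remain valid when the Euler characteristic is nonnegative) forbids any null-homotopic cycle from appearing in a stable family. Consequently $\Gamma_1$ and $\Gamma_2$ must both be empty on these surfaces, and the identity serves as the required ambient isotopy.

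For the annulus, I would identify $S$ with $S^1 \times [0,1]$ and let the core circle $S^1 \times \{1/2\}$ play the role that hyperbolic geodesics played in Section \ref{S:follow}. First I would establish a straightening statement in the spirit of Proposition \ref{proposition13}: up to ambient isotopy, all cycles of $\Gamma_i$ can be placed in an $\varepsilon$-neighborhood of the core. Then, inside such a thin annular neighborhood, I plan to apply Corollary \ref{coreuler} and Lemma \ref{lemeuler} to argue that the family has no self-crossings and no mutual crossings: any such crossing, paired with the nearly-parallel strips of the other cycles running around the core, would bound a small complementary disk containing an empty $k$-gon with $k \leq 3$, contradicting stability. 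This forces each cycle of $\Gamma_i$ to be a simple essential curve of winding number $\pm 1$ and forces the cycles to be pairwise disjoint.

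Once $\Gamma_1$ and $\Gamma_2$ have this standard form, they consist of $n$ disjoint essential circles strictly ordered along the transverse $[0,1]$-direction; a rerun of the argument behind Lemma \ref{L:homeo} shows that the oriented homeomorphism $h$ induces matching orderings on both families. I would then build the pointwise isotopy in two stages: a first ``vertical'' isotopy that shifts each cycle of $\Gamma_1$ to agree as a set with the corresponding cycle of $\Gamma_2$, followed by a ``rotational'' isotopy, supported in a tubular neighborhood of each matched circle, that aligns the parameterizations of $\gamma_{1,j}$ and $\gamma_{2,j}$ on $S^1$. This final step uses that $h$ preserves the orientation of each cycle, so the two parameterizations differ by an orientation-preserving reparameterization of $S^1$ which extends to a rotational isotopy of the tubular neighborhood.

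The main obstacle will be the crossing-elimination step inside the annular neighborhood, since the hyperbolic corridors of Section \ref{S:follow} are not available; I expect to substitute the elementary disk-based Euler-characteristic arguments of Lemmas \ref{lemeuler2} and \ref{lemeuler} and Corollary \ref{coreuler}, applied inside each complementary disk of the thin annular neighborhood of the core. The strengthening from ``orientation-preserving'' to ``pointwise'' should then come for free in the annulus case: once the arrangement has been reduced to disjoint parallel circles, there are no vertices to cause combinatorial ambiguities of the kind that produced Figure \ref{F:contreex2}, and any residual reparameterization is absorbed by the rotational isotopy of the final stage.
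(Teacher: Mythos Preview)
Your proposal is correct and follows the same overall structure as the paper, but you are working harder than necessary in the annulus case. The paper dispenses entirely with the straightening step: since Corollary~\ref{coreuler} is stated for the sphere, the disk, \emph{and} the annulus, one can apply it directly to the arrangement of~$\Gamma_i$ on~$S$ itself. Any connected component of that arrangement containing a crossing is a $4$-regular graph embedded in the annulus, so Corollary~\ref{coreuler} yields a face of degree at most three, and Lemma~\ref{lemeuler} contradicts stability. There is no need to first isotope the cycles near the core or to invoke a de~Graaf--Schrijver--type result; the Euler-characteristic argument already lives on the whole annulus. Once the cycles are known to be simple and pairwise disjoint, the paper simply observes that the oriented homeomorphism~$h$ forces the two orderings to coincide, and concludes. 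Your two-stage ``vertical then rotational'' construction is a perfectly valid way to make that last step explicit and to obtain the pointwise conclusion, but the paper leaves it implicit.
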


\begin{proof}
  If the surface $S$ is a sphere or a disk, all the cycles in $\Gamma_1$
  and $\Gamma_2$ are null-homotopic, which is impossible as noted
  above. Hence these families are empty and the theorem is trivial.

  If the surface is an annulus, for $i=1,2$, we claim that there are no
  crossing points in $\Gamma_i$, i.e., all the cycles are simple and two
  distinct cycles do not intersect each other. Indeed, if there were a
  crossing point, the connected component of $\Gamma_i$ containing it would
  form a planar graph such that every vertex has degree four. Hence, by
  Corollary~\ref{coreuler}, there would be at least a $k$-gon with $k\le3$,
  contradicting, with Lemma~\ref{lemeuler}, the stability of the family
  $\Gamma_i$.

  Thus, $\Gamma_1$ is a family of disjoint simple cycles homotopic to the
  boundaries of the annulus, and the same holds for $\Gamma_2$. There is an
  isotopy of~$S$ mapping one family into the other if and only if they have
  the same ordering, as defined in the proof for the hyperbolic case. But
  this is exactly what the oriented homeomorphism between them
  ensures. This concludes the proof.
\end{proof}

\subsection{Torus}\label{S:torus}

The proof in the case of the torus is slightly more involved. Let us
introduce a few definitions before delving into it. We choose a Euclidean
metric on the torus, which induces one on its universal cover
$\mathbb{R}^2$. This allows to define \emphdef{translations} on the torus,
which are projections of the usual translations of
$\mathbb{R}^2$. Geodesics on the torus lift to straight lines in the plane,
and two geodesics are homotopic if and only if these lines have the same
slope, as a slope $s=\frac{m}{n}$ determines a unique element $(m,n)$ with
$m \wedge n=1$ of the fundamental group of the torus. When we mention the
\emphdef{slope} of a geodesic on the torus, we refer to the slope of one of
its lifts in the universal cover. Note that as a translation is an
isometry, it maps a geodesic to another geodesic.

\begin{theorem}\label{T:stable3}
  Let $S$ be a torus and let $\Gamma_1=(\gamma_{1,1}, \ldots,
  \gamma_{1,n})$ and $\Gamma_2=(\gamma_{2,1}, \ldots \gamma_{2,n})$ be two
  \emph{stable} families of cycles on~$S$ in general position such that:
  \begin{enumerate}
  \item there exists an oriented homeomorphism~$h$ of $S$ mapping each
    cycle~$\gamma_{1,j}$ of~$\Gamma_1$ to the corresponding
    cycle~$\gamma_{2,j}$ of~$\Gamma_2$ not necesssarily pointwise, but
    preserving the orientations of the cycles, and
  \item each cycle of~$\Gamma_1$ is homotopic to the corresponding cycle
    of~$\Gamma_2$.
  \end{enumerate}
  Then there is an ambient isotopy of $S$ mapping each cycle of~$\Gamma_1$
  to the corresponding cycle of~$\Gamma_2$ pointwise.
\end{theorem}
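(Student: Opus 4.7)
My plan is to mirror the hyperbolic proof of Theorem~\ref{T:stable}, but replacing hyperbolic geodesics by the Euclidean geodesics on the flat torus and exploiting the abelian structure of $\pi_1(S)\cong\mathbb{Z}^2$ in order to reach a \emph{pointwise} isotopy. As a first step I would endow $S$ with a flat metric lifting to the standard metric on $\mathbb{R}^2$. The preamble to Section~\ref{S:exc} observes that the analogues of Lemmas~\ref{lemeuler2} and~\ref{lemeuler}, of Corollary~\ref{coreuler}, and of the first claim of Proposition~\ref{proposition13} all remain valid on the torus, so no cycle of $\Gamma_1$ or $\Gamma_2$ is null-homotopic, and up to an ambient isotopy each $\gamma_{i,j}$ lies in an $\varepsilon$-neighborhood of a Euclidean geodesic $g_{i,j}$ of its homotopy class. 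Because two homotopic cycles on the torus lift to parallel lines in $\mathbb{R}^2$, the geodesics $g_{1,j}$ and $g_{2,j}$ share a common slope.

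Next, I would build the corridor and polygon structure of Section~\ref{S:follow} around the union of the $g_{i,j}$. On the torus, geodesics of the same slope are disjoint while geodesics of distinct slopes cross transversely, so their union is a graph all of whose vertices have even degree. Because the Euler-formula lemmas still hold, the conclusion of Proposition~\ref{P:trgle} transfers verbatim: maximal arcs in each edge polygon are simple, disjoint, and have endpoints on opposite sides, and each vertex polygon is a quadrilateral whose arcs form a grid. To obtain the torus analogue of Lemma~\ref{L:homeo}, I would lift $h$ to an orientation-preserving homeomorphism $\widetilde h\colon\mathbb{R}^2\to\mathbb{R}^2$; after composing with an appropriate deck transformation, $\widetilde h$ preserves the slope and orientation of each lifted geodesic (since $h$ preserves homotopy classes and orientations of the cycles in $\Gamma_1$), hence preserves the cyclic order of the parallel cycles within each corridor. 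The gluing argument at the end of Section~\ref{S:stable}, together with Alexander's Lemma applied inside each polygon, then yields an ambient isotopy of $S$ sending $\Gamma_1$ to $\Gamma_2$ setwise with matching orientations.

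To upgrade to a pointwise isotopy, I would exploit the torus' abelian structure: once $\gamma_{1,j}$ and $\gamma_{2,j}$ coincide as oriented sets, any residual reparametrization is simply a rotation along each simple component. Unlike the higher-genus situation of Figure~\ref{F:contreex2}, on the flat torus each such component admits a disjoint tubular collar, and a simultaneous local ambient isotopy supported in these collars realizes all the required rotations at once, producing the pointwise isotopy claimed by the theorem. The main obstacle I anticipate is the degenerate case in which every cycle of $\Gamma_1$ has a single common slope: then the geodesic graph has no vertex polygons, the unique corridor covers all of $S$, and the polygon decomposition of Section~\ref{S:follow} breaks down. I would handle this case separately by cutting $S$ along one cycle of $\Gamma_1$, which reduces the configuration to the annulus situation of Theorem~\ref{T:stable2} and gives the pointwise isotopy directly.
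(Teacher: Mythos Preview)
Your proposal diverges from the paper's proof, and the divergence matters at the step where you upgrade the setwise isotopy to a pointwise one.

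Your argument runs the full corridor machinery of Section~\ref{S:follow} on all of~$\Gamma_1$ and~$\Gamma_2$ to get a setwise, orientation-preserving isotopy, and then claims that ``each such component admits a disjoint tubular collar'' so that the residual rotations can be realized simultaneously.  This last claim is false precisely in the interesting case: if the cycles in~$\Gamma_1$ have at least two distinct slopes, they cross, and no two crossing cycles can have disjoint tubular collars.  An ambient rotation of one cycle in its collar will drag the pieces of every transversal cycle that pass through the collar, destroying the setwise coincidence you already achieved.  This is exactly the obstruction that Figure~\ref{F:contreex2} illustrates; the torus avoids it not because collars are disjoint but for a subtler reason that your argument does not capture.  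A secondary issue is that on the flat torus geodesics are not unique in their homotopy class, so $g_{1,j}$ and $g_{2,j}$ need not coincide; the corridor construction and the analogue of Lemma~\ref{L:homeo} require both families to lie in the \emph{same} corridors, and you do not say how to arrange this.

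The paper's proof sidesteps both problems by never attempting a global setwise-to-pointwise upgrade.  It first shows (via Corollary~\ref{coreuler} and Lemma~\ref{lemeuler}) that every cycle is simple and that homotopic cycles in~$\Gamma_i$ are disjoint.  Then it splits into two cases.  If all cycles share one slope (your degenerate case), cut along one cycle and reduce to the annulus.  If at least two slopes occur, pick just \emph{two} representative cycles $\gamma_{1,1},\gamma_{1,2}$ and use a single \emph{translation} of the torus to align their geodesics with those of $\gamma_{2,1},\gamma_{2,2}$ and to match up the crossing points; this translation step is what makes the resulting corridor argument yield a \emph{pointwise} isotopy on~$\Gamma'_1=\{\gamma_{1,1},\gamma_{1,2}\}$.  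Cutting along these two cycles then breaks~$S$ into disks, and Alexander's Lemma on each disk carries the remaining cycles of~$\Gamma_1$ onto those of~$\Gamma_2$ pointwise.
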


\begin{proof}
  For all the cycles $\gamma$ in $\Gamma_1$ or $\Gamma_2$, we start by
  applying de Graaf and Schrijver \cite[Proposition~13]{gs-mcmcr-97} as in
  the proof of Proposition~\ref{proposition13}: Up to applying an isotopy
  of $S$, we can assume that $\gamma$ is contained in the
  $\varepsilon$-neighborhood of one of its corresponding geodesics (or of a
  point, if $\gamma$ is contractible).  As in the proof of
  Proposition~\ref{proposition13}, we infer that $\gamma$ is not
  contractible.  Since in a torus, geodesic cycles are either simple or
  multiple concatenations of the same simple cycle, their
  $\varepsilon$-neighborhoods are annuli. Hence, every cycle in $\Gamma_1$
  and $\Gamma_2$ can be assumed to lie in an annulus.

  If one of these cycles $\gamma \in \Gamma_i$ is non-simple, it forms a
  graph embedded on an annulus such that every vertex has degree four. By
  Corollary~\ref{coreuler}, one of the faces of this graph is a disk with
  degree lower than four, which with Lemma~\ref{lemeuler} contradicts the
  stability of $\Gamma_i$. Thus all the cycles in $\Gamma_1$ and $\Gamma_2$
  are simple. By the same argument, for $i=1,2$, two homotopic cycles in
  $\Gamma_i$ do not cross each other.

  The isotopy can then be found as follows. We split the proof in two
  cases.

  \textbf{Case 1:} If all the cycles in $\Gamma_1$ are homotopic or inverse
  homotopic\footnote{We say that $\gamma_1$ and $\gamma_2$ are inverse
    homotopic if $\gamma_1$ is homotopic to $\gamma_2^{-1}$.}, we just pick
  an arbitrary one, say $\gamma_{1,1}$, and apply the pointwise isotopy
  mapping it to $\gamma_{2,1}$, which exists because they are
  homotopic\footnote{Simple and noncontractible cycles which are homotopic
    are also isotopic, as proved by Eppstein~\cite[Theorem
    2.1]{e-c2mi-66}.}. Cutting the surface along
  $\gamma_{1,1}=\gamma_{2,1}$ gives an annulus. In this annulus, since all
  the other cycles in $\Gamma_1$ and $\Gamma_2$ are disjoint from these,
  the existence of an isotopy between them follows from the case of the
  annulus in Section~\ref{S:sphere}. After gluing back the boundaries
  together, this gives the desired isotopy of the torus.

  \textbf{Case 2:} If there are at least two homotopy classes (modulo
  inversion) in $\Gamma_1$, we pick two representatives, say $\gamma_{1,1}$
  and $\gamma_{1,2}$. As two couple of lines with the same slopes pairwise
  can be moved one to the other with a translation, by doing a translation
  of the torus, we can assume that $\gamma_{1,1}$ and $\gamma_{2,1}$ lie in
  the neighborhood of the same geodesic, as well as $\gamma_{1,2}$ and
  $\gamma_{2,2}$, and furthermore that the crossing points between
  $\gamma_{1,1}$ and $\gamma_{1,2}$ lie in a $\varepsilon$-neighborhood of
  the corresponding crossing points between $\gamma_{2,1}$ and
  $\gamma_{2,2}$\footnote{This is not necessarily the case a priori, since
    a given crossing point $c$ between $\gamma_{1,1}$ and $\gamma_{2,1}$
    can be matched to another crossing point than $h(c)$. Note that this is
    why the result in the torus case in stronger than in the general case,
    in which the crossing points can \emph{not} necessarily be matched.}.

  We are now in the same situation as in Section~\ref{S:follow}: Since both
  couples of cycles $\gamma_{1,1}$ and $\gamma_{2,1}$ and $\gamma_{1,2}$
  and $\gamma_{2,2}$ lie in the $\varepsilon$-neighborhood of the same
  geodesic, if we take as stable families $\Gamma'_1=\gamma_{1,1}\cup
  \gamma_{1,2}$ and $\Gamma'_2=\gamma_{2,1} \cup \gamma_{2,2}$, we can
  similarly define corridors, as well as edge and vertex polygons. Then
  Proposition~\ref{P:trgle} holds with exactly the same proof. Since there
  is only one cycle of $\Gamma'_1$ in each corridor, Lemma~\ref{L:homeo}
  also holds trivially.  Hence by applying the same techniques, we can
  conclude that there exists an ambient isotopy mapping $\Gamma'_1$ to
  $\Gamma'_2$. Note that here, since the crossing points of the cycles in
  $\Gamma'_1$ have been matched, the isotopy we obtain is also pointwise.

  Finally, cutting along these cycles cuts the surface into one or more
  disks, and the isotopy between $\Gamma_1$ and $\Gamma_2$ is obtained by
  applying Alexander's lemma separately on each of these disks.
\end{proof}


\section{Conclusion}

We have given an optimal algorithm to test whether two graph embeddings of
the same graph on an orientable surface are isotopic, where the vertices
are allowed to move.  On the other hand, the case of non-orientable
surfaces remains open.  Our characterization heavily relies on testing the
existence of an oriented homeomorphism; it is not clear how to adapt this
test in the non-orientable case.

Regarding the case of the punctured plane, the most obvious open question
is to improve the running time of the algorithm.
Bespamyatnikh~\cite[Theorem~7]{b-ehpp-04} describes an
$O(n^{4/3}\*\polylog(n))$-time algorithm for testing path homotopy;
however, it is not clear that this algorithm extends to the homotopy test
for cycles.  One obstacle for this extension is that our cycles are not
simple and may make turns always in the same direction: the algorithm by
Bespamyatnikh~\cite[Section~5.5]{b-ehpp-04} considers maximal subpaths that
always turn in the same direction, but in our case such maximal subpaths
may be cycles without ``starting'' and ``ending'' points, for which the
same approach does not seem to work.

Also, we only test the existence of a topological isotopy: The edges are
allowed to bend during the deformation.  It is easy to see that, in the
presence of obstacles, the existence of a topological isotopy between two
straight-line embeddings does not imply the existence of a straight-line
isotopy, in contrast to the case without
obstacle~\cite{gs-gifpm-01,bs-lie-78}.  Could it be that, in such a
situation, there exists a straight-line isotopy after splitting each edge
in two (or a constant number of) segments?  Computing such an isotopy
efficiently may be a not easy task, but related
techniques~\cite{lp-mpgdb-08} might apply.

Finally, in both the surface model and the punctured plane model, computing
shortest graph embeddings within a given isotopy class would be very
interesting, and would generalize known results for computing shortest
paths within a given homotopy
class~\cite{hs-cmlpg-94,ekl-chspe-06,b-chspp-03,cl-oslos-05,cl-opdsh-07,ce-tnpcs-10},
even though we expect the problem to be much harder.

\subsection*{Acknowledgment}

We thank Francis Lazarus and Julien Rivaud for sending us an advance copy
of their paper~\cite{lr-hts-12}, which inspired us for the construction of
the stable family in Section~\ref{S:ladeg}, and the anonymous referees for
helpful comments.

\bibliographystyle{siam}

\begin{thebibliography}{10}

\bibitem{bo-arcgi-79}
{\sc J.~L. Bentley and {\relax Th}.~Ottmann}, {\em Algorithms for reporting and
  counting geometric intersections}, IEEE Transactions on Computing, 28 (1979),
  pp.~643--647.

\bibitem{b-chspp-03}
{\sc S.~Bespamyatnikh}, {\em Computing homotopic shortest paths in the plane},
  Journal of Algorithms, 49 (2003), pp.~284--303.

\bibitem{b-ehpp-04}
\leavevmode\vrule height 2pt depth -1.6pt width 23pt, {\em Encoding homotopy of
  paths in the plane}, in LATIN 2004: Theoretical Informatics, 6th Latin
  American Symposium, vol.~2976 of Lecture Notes in Computer Science,
  Springer-Verlag, 2004, pp.~329--338.

\bibitem{bs-lie-78}
{\sc R.~H. Bing and M.~Starbird}, {\em Linear isotopies in {$E^2$}},
  Transactions of the American Mathematical Society, 237 (1978), pp.~205--222.

\bibitem{b-gscrs-92}
{\sc P.~Buser}, {\em Geometry and spectra of compact {R}iemann surfaces},
  vol.~106 of Progress in Mathematics, Birkh{\"a}user, 1992.

\bibitem{clms-thpp-04}
{\sc S.~Cabello, Y.~Liu, A.~Mantler, and J.~Snoeyink}, {\em Testing homotopy
  for paths in the plane}, Discrete \& Computational Geometry, 31 (2004),
  pp.~61--81.

\bibitem{c-rcds-03}
{\sc {\'E}.~Colin~de Verdi{\`e}re}, {\em Raccourcissement de courbes et
  d{\'e}composition de surfaces}, PhD thesis, Universit{\'e} Paris 7, Dec.
  2003.
\newblock English translation available at
  \url{http://www.di.ens.fr/~colin/03these.html}.

\bibitem{ce-tnpcs-10}
{\sc {\'E}.~Colin~de Verdi{\`e}re and J.~Erickson}, {\em Tightening nonsimple
  paths and cycles on surfaces}, SIAM Journal on Computing, 39 (2010),
  pp.~3784--3813.

\bibitem{cl-oslos-05}
{\sc {\'E}.~Colin~de Verdi{\`e}re and F.~Lazarus}, {\em Optimal system of loops
  on an orientable surface}, Discrete \& Computational Geometry, 33 (2005),
  pp.~507--534.

\bibitem{cl-opdsh-07}
\leavevmode\vrule height 2pt depth -1.6pt width 23pt, {\em Optimal pants
  decompositions and shortest homotopic cycles on an orientable surface},
  Journal of the ACM, 54 (2007), p.~Article~18.

\bibitem{gs-mcmcr-97}
{\sc M.~de~Graaf and A.~Schrijver}, {\em Making curves minimally crossing by
  {R}eidemeister moves}, Journal of Combinatorial Theory, Series~B, 70 (1997),
  pp.~134--156.

\bibitem{dg-tcs-99}
{\sc T.~K. Dey and S.~Guha}, {\em Transforming curves on surfaces}, Journal of
  Computer and System Sciences, 58 (1999), pp.~297--325.

\bibitem{ekl-chspe-06}
{\sc A.~Efrat, S.~G. Kobourov, and A.~Lubiw}, {\em Computing homotopic shortest
  paths efficiently}, Computational Geometry\string: Theory and Applications,
  35 (2006), pp.~162--172.

\bibitem{e-dgteg-03}
{\sc D.~Eppstein}, {\em Dynamic generators of topologically embedded graphs},
  in Proceedings of the 14th Annual ACM-SIAM Symposium on Discrete Algorithms
  (SODA), 2003, pp.~599--608.

\bibitem{e-c2mi-66}
{\sc D.~B.~A. Epstein}, {\em Curves on 2-manifolds and isotopies}, Acta
  Mathematica, 115 (1966), pp.~83--107.

\bibitem{ew-gohhg-05}
{\sc J.~Erickson and K.~Whittlesey}, {\em Greedy optimal homotopy and homology
  generators}, in Proceedings of the 16th Annual ACM-SIAM Symposium on Discrete
  Algorithms (SODA), 2005, pp.~1038--1046.

\bibitem{ew-tcsr-13}
\leavevmode\vrule height 2pt depth -1.6pt width 23pt, {\em Transforming curves
  on surfaces redux}, in Proceedings of the 24th Annual ACM-SIAM Symposium on
  Discrete Algorithms (SODA), 2013, pp.~1646--1655.

\bibitem{fm-pmcg-11}
{\sc B.~Farb and D.~Margalit}, {\em A primer on mapping class groups},
  Princeton University Press, 2011.

\bibitem{f-hai-66}
{\sc C.~D. Feustel}, {\em Homotopic arcs are isotopic}, Proceedings of the
  American Mathematical Society, 17 (1966), pp.~891--896.

\bibitem{gs-gifpm-01}
{\sc C.~Gotsman and V.~Surazhsky}, {\em Guaranteed intersection-free polygon
  morphing}, Computers and Graphics, 25 (2001), pp.~67--75.

\bibitem{h-cwpbg-00}
{\sc H.~Hamidi-Tehrani}, {\em On complexity of the word problem in braid groups
  and mapping class groups}, Topology and its Applications, 105 (2000),
  pp.~237--259.

\bibitem{h-at-02}
{\sc A.~Hatcher}, {\em Algebraic topology}, Cambridge University Press, 2002.
\newblock Available at \url{http://www.math.cornell.edu/~hatcher/}.

\bibitem{hs-cmlpg-94}
{\sc J.~Hershberger and J.~Snoeyink}, {\em Computing minimum length paths of a
  given homotopy class}, Computational Geometry\string: Theory and
  Applications, 4 (1994), pp.~63--98.

\bibitem{h-dt-94}
{\sc M.~W. Hirsch}, {\em Differential topology}, vol.~33 of Graduate Texts in
  Mathematics, Springer-Verlag, 1994.
\newblock Corrected reprint of the 1976 original.

\bibitem{k-ugpdd-99}
{\sc L.~Kettner}, {\em Using generic programming for designing a data structure
  for polyhedral surfaces}, Computational Geometry\string: Theory and
  Applications, 13 (1999), pp.~65--90.

\bibitem{klchkp-npkcu-00}
{\sc K.~H. Ko, S.~J. Lee, J.~H. Cheon, J.~W. Han, J.-s. Kang, and C.~Park},
  {\em New public-key cryptosystem using braid groups}, in Advances in
  Cryptology -- CRYPTO 2000, vol.~1880 of Lecture Notes in Computer Science,
  Springer, 2000, pp.~166--183.

\bibitem{l-cip1c-84}
{\sc Y.~Ladegaillerie}, {\em Classes d'isotopie de plongements de $1$-complexes
  dans les surfaces}, Topology, 23 (1984), pp.~303--311.

\bibitem{lr-hts-12}
{\sc F.~Lazarus and J.~Rivaud}, {\em On the homotopy test on surfaces}, in
  Proceedings of the 53rd Annual IEEE Symposium on Foundations of Computer
  Science (FOCS), 2012, pp.~440--449.

\bibitem{l-gem-82}
{\sc S.~Lins}, {\em Graph-encoded maps}, Journal of Combinatorial Theory,
  Series~B, 32 (1982), pp.~171--181.

\bibitem{lp-mpgdb-08}
{\sc A.~Lubiw and M.~Petrick}, {\em Morphing planar graph drawings with bent
  edges}, Electronic Notes in Discrete Mathematics, 31 (2008), pp.~45--48.

\bibitem{m-mcga-95}
{\sc L.~Mosher}, {\em Mapping class groups are automatic}, Annals of
  Mathematics, Second Series, 142 (1995), pp.~303--384.

\bibitem{r-tegtg-55}
{\sc G.~Ringel}, {\em Teilungen der {E}bene durch {G}eraden oder topologische
  {G}eraden}, Mathematische Zeitschrift, 64 (1955), pp.~79--102 (1956).

\bibitem{s-ctcgt-93}
{\sc J.~Stillwell}, {\em Classical topology and combinatorial group theory},
  Springer-Verlag, New York, second~ed., 1993.

\end{thebibliography}


\appendix\normalsize

\end{document}